\newtheorem{theorem}{Theorem}[section]
\newtheorem{corollary}[theorem]{Corollary}
\newtheorem{lemma}[theorem]{Lemma}
\newtheorem{definition}{Definition}
\newcommand{\DEF}{\sl}
\newcommand{\StackMST}{{Stackelberg Minimum Spanning Tree}}
\newcommand{\StackMatroid}{{Stackelberg Matroid}}
\newcommand{\greedy}{\mathrm{greedy}}
\newcommand{\R}{\mathbb{R}}
\newcommand{\OPT}{\mathrm{OPT}}
\renewcommand{\leq}{\leqslant}
\renewcommand{\geq}{\geqslant}
\begin{document}

\title[Assortment optimisation under a general discrete choice model]{Assortment optimisation under a general discrete choice model: A tight analysis of revenue-ordered assortments}

\author[G.~Berbeglia]{Gerardo Berbeglia}

\address[G.~Berbeglia]{Melbourne Business School \\
  The University of Melbourne\\
  Melbourne\\
  Australia}

\email{g.berbeglia@mbs.edu}

\author[G.~Joret]{Gwena\"{e}l Joret}

\address[G.~Joret]{Computer Science Department\\
  Universit\'e Libre de Bruxelles\\
  Brussels\\
  Belgium}

\email{gjoret@ulb.ac.be}

\thanks{G.\ Joret was supported by a DECRA Fellowship from the Australian Research Council during part of the project.
Extended abstract based on this work to appear in the proceedings of the eighteenth ACM conference on Economics and Computation (ACM EC’17).}

\date{\today}

\sloppy

\begin{abstract}
  The assortment problem in revenue management is the problem of deciding which subset of products to offer to consumers in order to maximise revenue. A simple and natural strategy is to select the best assortment out of all those that are constructed by fixing a threshold revenue $\pi$ and then choosing all products with revenue at least $\pi$. This is known as the {\em revenue-ordered assortments} strategy.
  In this paper we study the approximation guarantees provided by revenue-ordered assortments when customers are rational in the following sense: the probability of selecting a specific product from the set being offered cannot increase if the set is enlarged. This rationality assumption, known as {\em regularity}, is satisfied by almost all discrete choice models considered in the revenue management and choice theory literature, and in particular by random utility models.
  The bounds we obtain are tight and improve on recent results in that direction, such as for the Mixed Multinomial Logit model by \citet{rusmevichientong2014assortment}.
  An appealing feature of our analysis is its simplicity, as it relies only on the regularity condition.

  We also draw a connection between assortment optimisation and two pricing problems called {\em unit demand envy-free pricing} and {\em Stackelberg minimum spanning tree}: These problems can be restated as assortment problems under discrete choice models satisfying the regularity condition, and moreover revenue-ordered assortments correspond then to the well-studied {\em uniform pricing} heuristic. When specialised to that setting, the general bounds we establish for revenue-ordered assortments match and unify the best known results on uniform pricing.
\end{abstract}

\maketitle

\section{Introduction}
{\DEF Revenue management} consists of a set of methodologies permitting firms to decide on the availability and the price of their products and services. The development of this field began in the late 1970's in the airline industry, and has since been expanding constantly its practices into a large variety of  markets such as grocery stores, retailing, railways, car rentals, accommodation, cruises, and more recently, electronic goods \citep{talluri2006theory}.

At the core of revenue management lies the {\DEF assortment problem}, that of choosing an optimal subset of products/services to offer to consumers in order to maximise the firm's profits.
As an illustration, consider the case of a grocery store that has limited space for its coffee products.
Say it has space for at most 15 different coffee products on the shelves but can choose between 300 products from its distributors (due to the combinations of coffee brands, coffee types, package sizes, etc). Given products' costs and consumer demand, what is the best subset to offer?

In order to solve the assortment problem, it is necessary to know (or at least be able to approximate) the consumer demand for each product, as a function of the assortment of products that are offered. This issue has been widely studied in {\DEF discrete choice theory}, a field which essentially tries to predict the choices of individuals when they select a product from a finite set of mutually exclusive alternatives, typically known as the {\DEF choice set}.  Classical economic theory postulates that individuals select an alternative by assigning a real number known as {\DEF utility} to each option and then choosing the alternative from the choice set that has the maximum utility. Individuals are thus said to be {\DEF utility maximisers}. Different assumptions about the distribution of the product utilities give rise to different discrete choice models. Prominent examples are the {\DEF multinomial logit} model (MNL model) \citep{luce1959} and the more general {\DEF Mixed} MNL model \citep{rusmevichientong2014assortment}.

There is a large literature on studies and methodologies for solving the assortment problem under different discrete choice models such as the independent demand model, the MNL model \citep{talluri2004revenue}, and {\DEF Mixed} MNL model \citep{rusmevichientong2014assortment}. A well-known heuristic is the {\DEF revenue-ordered assortments} strategy. It consists in selecting the best assortment out of all those that are constructed by fixing a threshold $\pi$ and then selecting all products whose revenue is at least $\pi$. This strategy is appealing for two reasons. First, it only needs to evaluate as many assortments as there are different revenues among products, independently of the number of products the firm offers. Second, even if one has no knowledge about consumer choice behaviour, revenue-ordered assortments can still be used, one only needs to be able to evaluate the revenue of these revenue-ordered assortments. This second point is important in practice since most of the time not only the parameters of the assumed discrete choice model are not known but also it is not known what type of discrete choice model the consumers are following (e.g.\ MNL model, Nested MNL model, etc.). This motivates the study of the performance of revenue-ordered assortments under different discrete choice models. \citet{talluri2004revenue} showed for instance that, when consumers follow the MNL model, the revenue-ordered assortments strategy is in fact optimal. In general however, this strategy does not always produce an optimal solution to the assortment problem.

\subsection{Contributions}\label{sec:contributions}
Our first contribution is an analysis of the performance of the revenue-ordered assortments strategy making only minimal assumptions about the underlying discrete choice model: We assume that consumers behave rationally, in the sense that the probability of choosing a specific product $x\in S$ when given a choice set $S$ cannot increase if $S$ is enlarged. This rationality assumption, known as {\em regularity}, is satisfied by almost all models studied in the revenue management and choice theory literature.
This includes in particular all random utility models, as well as other models introduced recently such as the additive perturbed utility model, the hitting fuzzy attention model, and models obtained using a non-additive random utility function (see Section~\ref{section_applications} for a discussion of these models).
We provide three types of revenue guarantees for revenue-ordered assortments: If there are $k$ distinct revenues $r_1, r_2, \dots, r_k$ associated with the products (listed in increasing order), then revenue-ordered assortments approximate the optimum revenue to within a factor of
\begin{enumerate}[(A)]
\item $1/k$;
\item $1 / (1 + \ln(r_k / r_1))$, and
\item $1 / (1 + \ln \nu)$,
\end{enumerate}
where $\nu$ is defined with respect to an optimal assortment $S^*$ as the ratio between the probability of just buying a product and that of buying a product with highest revenue in $S^*$. These three guarantees are in general incomparable, that is, (A), (B), or (C) can be the largest depending on the instance.

When applied to the special case of Mixed MNL models, bound (B) improves the recent analysis of revenue-ordered assortments by \citet{rusmevichientong2014assortment}, who showed a bound of   $1 / (e(1 + \ln(r_k / r_1)))$.
Our results also provide the first revenue guarantees for a tractable approach to solving the assortment problem under some recent discrete choice models that lie beyond the random utility model. In particular, in Section~\ref{section_beyond_rum} we briefly describe the \emph{Additive Perturbed Utility Model} \citep{fudenberg2015stochastic}; the {\em Hitting Fuzzy Attention Model} \citep{aguiar2015stochastic}, and a model proposed by \citet{mcclellon2015non} based on a capacity function. Since these three models satisfy the regularity property, our revenue guarantees for the revenue-ordered assortment strategy hold.

 After finishing a preliminary version of this paper, we learned of independent results by \citet{aouad2015approximability} on the approximation guarantees of revenue-ordered assortments. In the important case of random utility models, they showed a bound of $\Omega (1 /  \ln(r_k / r_1))$, which is thus within a constant factor of our bound (B). (In fact, one can deduce a bound of $2 /  (5 \lceil \ln(r_k / r_1)\rceil)$ from their proof.) They also proved a bound of $\Omega (1 / \ln \tilde \lambda)$, where $\tilde \lambda$ is the probability of buying a product when offered the set consisting of only the products with highest revenue. This is closely related to bound (C) above, as it can be shown that $\nu \leq \tilde \lambda$.

Complementing our analysis, we show that the three bounds (A), (B), and (C) are exactly tight, in the sense that none of the bounds remains true if multiplied by a factor $(1+ \epsilon)$ for any $\epsilon > 0$.
Let us remark that bounds (A) and (B) also provide a quick and easy way to obtain some upper bound on the optimum revenue that can be achieved on a given instance, by simply checking the revenue provided by revenue-ordered assortments. While the resulting upper bounds can of course be far from the optimum, they have the merit of being straightforward to compute, independently of how complex the underlying discrete choice model is, as long as it satisfies the regularity condition. Bound (C) on the other hand is typically hard to compute exactly as it involves knowing an optimal solution; nevertheless, some non-trivial bounds that are easy to compute can be deduced from (C) by bounding $\nu$ from above (using the bound $\nu \leq \tilde \lambda$ mentioned above for instance).

Our second contribution is to draw a connection between assortment optimisation and some pricing problems studied in the theoretical computer science literature by showing that these pricing problems can be restated as an assortment problem under a discrete choice model satisfying the above-mentioned rationality assumption. This includes unit demand envy-free pricing problems and the Stackelberg minimum spanning tree problem. Unlike the assortment problem, these problems consist not in selecting a subset of products (that have an attached price) to offer to consumers, but rather to assign a price to each product (see Section~\ref{section_applications} for definitions). The best known approximation algorithm for these problems consists in assigning the same (well-chosen) price to all the products, and is called uniform pricing. We will show that these pricing problems can be seen as special cases of the assortment problem under a regular discrete choice model, and prove that the revenue-ordered assortment strategy applied to these special cases is equivalent to applying the uniform pricing algorithm to these pricing problems.

We conclude the paper with a brief analysis of the single-leg multi-period setting with limited capacity as studied by \citet{talluri2004revenue}. In this problem the seller has $Q$ units to sell over a finite time horizon $T$. ~\citet{rusmevichientong2014assortment} showed that when the seller restricts to offer revenue-ordered assortments and consumers follow a Mixed MNL choice model: (i) the offer set gets larger (or stays the same) as the number of time periods remaining increases, and (ii) the offer set gets smaller (or stays the same) as the number of available units remaining decreases. We observe that these two monotonicity results hold more generally for the discrete choice models considered in this paper. As mentioned by~\citet{rusmevichientong2014assortment}, these results could potentially be used in the implementation of standard revenue management systems.

\subsection{Related literature}
The assortment problem is an active research topic in revenue management, and providing an exhaustive literature review is beyond the scope of this paper.
In this section, we focus mostly on previous works that are directly related to our contributions.

One of the first studies of the assortment problem for a general discrete choice model was carried out by \citet{talluri2004revenue}. In that paper, the authors considered a single-leg seat allocation problem in which a firm sells aircraft seats to consumers arriving one at a time. Each consumer selects at most one fare among the ones that are offered, and the firm has to decide the subset of fares to offer at each time period, depending on the number of available seats and time periods remaining. The authors have shown that this problem reduces to solving a static (or single shot) assortment problem in which one wishes to maximise the expected profit on a single consumer without caring about capacity. For the special case in which the consumer choice model is the Multinomial Logit (MNL) model, they proved that the optimal choice sets are revenue-ordered assortments. Thus, solving the assortment problem when consumers follow a MNL model can be done efficiently in polynomial time. Again under the MNL model, \citet{rusmevichientong2010dynamic} studied the assortment problem subject to the constraint that there is maximum number of products that can be shown in the assortment. Although the assortment sometimes fails to be a revenue-ordered assortment, the authors proved than an optimal assortment can still be found in polynomial-time. Another extension of the assortment problem over MNL model was considered by \citet{rusmevichientong2012robust} where the authors formulated the problem as a robust optimization problem in which the true parameters are unknown. Recently, \citet{feldman2018taking} ran a large field experiment on the online platform Alibaba and reported that using an MNL choice model to optimise assortments provides higher revenues than a machine learning method that is currently being used on Alibaba.

Another series of papers studied the assortment problem in which different shares of customers follow different MNL models, a model known as the Mixed MNL model. The assortment problem under a Mixed MNL model is NP-hard  \citep{bront2009column}. A branch-and-cut algorithm was presented by \citet{mendez2014branch}, and computational
methods to obtain good upper bounds on the optimal revenue were given in \citet{feldman2015bounding}.
\citet{rusmevichientong2014assortment} proved that the problem remains NP-hard even when the model is composed of only two customer types. The authors also studied the performance of revenue-ordered assortments in this setting, and proved in particular an approximation ratio of  $1 / (e(1 + \ln(r_k / r_1)))$, as mentioned earlier.

\citet{blanchet2013markov} introduced a new discrete choice model where consumers preferences are built using a Markov chain in which states correspond to products. They showed that the assortment problem under this discrete choice model can be solved in polynomial time. \citet{feldman2014revenue} extended their results to the case of a single-leg seat allocation problem over a finite time horizon when consumers follow the Markov chain model. \citet{desir2015capacity} proved that the assortment problem with a capacity constraint in the Markov chain model is APX-hard and provide a polynomial-time constant-factor approximation algorithm. Recently, \citet{berbeglia2016discrete} showed that every choice model based on Markov chains models belongs to the class of choice models based on random utility. \citet{jagabathula2014assortment} introduced a local search heuristic for the assortment problem under an arbitrary discrete choice model. The author proved that the heuristic is optimal in the case of the MNL model, and remains so even if the choice set is subject to a maximum cardinality constraint. \citet{aouad2015assortment} study the assortment problem under a family of choices models known as {\DEF consider-then-chose} models that have been empirically tested in the marketing literature. The authors provided several computational complexity results as well as a dynamic programming algorithm that can be used by heuristics for arbitrary choice models based on random utility. Recently, there has been progress in studying choice models that incorporate position biases, i.e. models where consumers choices are affected by the specific positions or configurations in which the products are offered \citep{abeliuk2015assortment, aouad2015display, davis2013assortment}.

The strongest negative result to date regarding the computational complexity of the assortment problem is due to \citet{aouad2015approximability}. They  proved that the assortment problem under a random utility model is NP-hard to approximate to within a factor of $\Omega(1/n^{1-\epsilon})$ where $n$ denotes the number of products, and to within a factor of $\Omega(1 / \log^{1-\epsilon}(r_k / r_1))$, for every $\epsilon > 0$. Note that the first result implies in particular the hardness of achieving an approximation ratio of $\Omega(1 / k^{1-\epsilon})$, since  $n \geq k$. Comparing this with approximation guarantees (A) and (B) for revenue-ordered assortments, we thus see that the latter heuristic achieves essentially the best possible approximation ratios (w.r.t.\ these parameters) among all computationally efficient strategies. We note that revenue guarantees that are functions of other parameters of the model were given by \citet{aouad2015approximability} in the case of random utility models.

Another line of research related to our work is the study of {\DEF envy-free pricing problems}. Envy-free pricing problems were introduced by \citet{rusmevichientong2003non} (see section \ref{section_envy_free_pricing} for the definition). \citet{aggarwal2004algorithms} and \citet{guruswami2005profit} analysed some natural pricing algorithms, while \citet{briest2011buying} and \citet{chalermsook2012improved} proved inapproximability results which show essentially that these simple algorithms are probably the best one can hope for among those that are computationally efficient (i.e.\ that run in polynomial time). Envy-free pricing will be the focus of Section~\ref{section_envy_free_pricing}. As mentioned in Section \ref{sec:contributions}, we will show that some of these problems can be seen as special cases of the assortment problem under a regular discrete choice model, thus connecting the envy-free pricing literature with the operations research area of revenue management.

\section{The assortment problem} \label{section_general_assortment_problem}
Think of each alternative (or choice) in $\mathcal{C}=\{1,\dots,N\}$ as products types of a firm that are available to sell to consumers. Faced with a {\DEF choice set} $S \subseteq \mathcal{C}$, consumers choose their most preferred product out of the set $S$, or simply choose not to purchase at all. Since consumers may have heterogeneous preferences, we let $\mathcal{P}(x,S)$ denote the probability that a consumer will choose product $x$ when faced with a choice set $S \subseteq \mathcal{C}$. (Defining choices as probabilities is also required to account for choice functions with a stochastic component, even in the case consumers have homogeneous (stochastic) preferences.) Following~\citet{talluri2004revenue} we let $x=0$ denote the no-purchase option. Therefore, $\mathcal{P}(0, S) = 1 - \sum_{x \in S}\mathcal{P}(x, S)$.
A {\DEF regular discrete choice model} is characterised by the function $\mathcal{P}$ (called {\DEF the system of choice probabilities}) defined over the domain $I=\{(x,S): S \subseteq \mathcal{C}, x \in S \cup \{0\} \}$, and satisfies the following four axioms:

\newcounter{saveenum}
\begin{enumerate}[(i)]
\itemsep.6ex
\item $\mathcal{P}(x, S) \geq 0 $ for every $x\in \mathcal{C} \cup \{0\} $ and $S\subseteq \mathcal{C}; \label{discrete_ineq_1}$
\item $\mathcal{P}(x, S) = 0 $ for every $ x\in \mathcal{C} $ and $ S\subseteq \mathcal{C} \setminus \{x\}; \label{discrete_eq_1}$
\item \label{axiom:probability} $\sum_{x \in S}\mathcal{P}(x, S)  \leq 1 $ for every $S\subseteq \mathcal{C}. \label{discrete_eq_2}$
\item $\mathcal{P}(x, S) \geq \mathcal{P}(x, S') $ for every $ S\subseteq S'\subseteq \mathcal{C} $ and $ x\in S \cup \{0\}$. \label{axiom:regularity}
\setcounter{saveenum}{\value{enumi}}
\end{enumerate}

\setcounter{equation}{4}

Inequality (\ref{discrete_ineq_1}) states that probabilities must be non negative. Equality (\ref{discrete_eq_1}) captures the fact that consumers cannot choose a product which is not offered. Inequality (\ref{discrete_eq_2}) specifies that consumers can choose at most one product from the choice set. Finally, inequality (\ref{axiom:regularity}) ensures that the probability of choosing a specific product does not increase when the choice set is enlarged.
This last axiom is called the {\DEF regularity} axiom.

Note that axioms  (\ref{discrete_ineq_1})--(\ref{discrete_eq_2}) are satisfied by any discrete choice model.
As we will see, the regularity axiom (axiom (\ref{axiom:regularity})) is satisfied by almost all models studied in revenue management, economics, and psychology, and in particular by the well-known {\DEF random utility models} (RUM); see Section \ref{section_applications}. We remark that there nevertheless exist discrete choice models for which the regularity axiom fails to hold, this is the case for instance for a perception-based model \citep{echenique2013} and the pairwise choice markov chain model \citep{ragain2016pairwise}.

We now proceed with the definition of the assortment problem. Consider a regular discrete choice model with system of choice probabilities $\mathcal{P}$, and let $r: \mathcal{C} \rightarrow \mathbb{R}_{>0}$ be a revenue function associating to each element of $\mathcal{C}$ a positive (per unit) revenue or price. (In this paper costs are assumed to be negligible and therefore the words `revenue' and (selling) `price' are used interchangeably.)
\begin{definition}
For each set $S \subseteq \mathcal{C}$, the seller's revenue when offering set $S$ is  $$\sum_{x\in S} \mathcal{P}(x, S) r(x).$$
The {\DEF assortment problem} consists in finding a subset $S$ of the products in $\mathcal{C}$ so that the corresponding revenue is maximised.
We let $\OPT$ denote the maximum revenue that can be achieved.
\end{definition}

One may thus see the assortment problem as that of finding the best choice set $S$ maximising the expected utility.
It can be shown that, under reasonable assumptions on how the system of choice probabilities $\mathcal{P}$ is provided in input, the assortment problem is NP-hard. In fact, it remains NP-hard even in some very restricted cases such as when the discrete choice model is a mixture of only two multinomial logit models \citep{rusmevichientong2014assortment}.

We end this section with a straightforward but important observation about regular discrete choice models, namely that the probability of making a purchase does not decrease when the choice set is enlarged.
\begin{lemma} \label{lemma_property_regularDCM}
If $\mathcal{P}$ denotes the system of choice probabilities of a regular discrete choice model, then
\begin{eqnarray*}
\sum_{x \in S}\mathcal{P}(x, S) \leq \sum_{x\in S'}\mathcal{P}(x, S')
\end{eqnarray*}
for every $S\subseteq S'\subseteq \mathcal{C}$.
\end{lemma}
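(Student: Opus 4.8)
The plan is to derive the inequality directly from the regularity axiom, applied not to a product but to the no-purchase option $x = 0$. The crucial point is that axiom~(iv) is quantified over all $x \in S \cup \{0\}$, so it covers $x = 0$; hence for $S \subseteq S' \subseteq \mathcal{C}$ it yields $\mathcal{P}(0, S) \geq \mathcal{P}(0, S')$, i.e. the no-purchase probability can only \emph{decrease} when the choice set grows.

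First I would rewrite the two sums using the identity $\mathcal{P}(0, S) = 1 - \sum_{x \in S} \mathcal{P}(x, S)$ recalled just before the axioms: this gives $\sum_{x \in S} \mathcal{P}(x, S) = 1 - \mathcal{P}(0, S)$ and, similarly, $\sum_{x \in S'} \mathcal{P}(x, S') = 1 - \mathcal{P}(0, S')$. Then I would plug in the inequality $\mathcal{P}(0, S) \geq \mathcal{P}(0, S')$ obtained from regularity, so that $1 - \mathcal{P}(0, S) \leq 1 - \mathcal{P}(0, S')$, which is precisely the claimed bound $\sum_{x \in S} \mathcal{P}(x, S) \leq \sum_{x \in S'} \mathcal{P}(x, S')$.

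I do not anticipate any real obstacle: the whole argument is a one-line manipulation. The only subtlety worth flagging in the write-up is that regularity is what does the work here — axioms (i)--(iii) are not needed — and that invoking it with $x = 0$ is legitimate precisely because of the ``$x \in S \cup \{0\}$'' clause in the statement of axiom~(iv).
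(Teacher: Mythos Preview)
Your proposal is correct and matches the paper's own proof essentially line for line: both use the identity $\mathcal{P}(0,S) = 1 - \sum_{x\in S}\mathcal{P}(x,S)$ to rewrite each side and then invoke the regularity axiom~(iv) with $x=0$ to conclude. The only cosmetic difference is that the paper also cites axiom~(iii) when writing that the probabilities sum to $1$, whereas you (rightly) point out that the definitional identity for $\mathcal{P}(0,S)$ already gives this.
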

\begin{proof}
Let $S\subseteq S'\subseteq \mathcal{C}$.
We have
$$
\sum_{x \in S}\mathcal{P}(x, S)  + \mathcal{P}(0, S) =  \sum_{x \in S'}\mathcal{P}(x, S')  + \mathcal{P}(0, S') =1
$$
by definition of $\mathcal{P}(0,S)$ and axiom~(\ref{discrete_eq_2}), and thus
$$
\sum_{x \in S}\mathcal{P}(x, S)  - \sum_{x \in S'}\mathcal{P}(x, S') = \mathcal{P}(0, S') - \mathcal{P}(0, S) \leq 0,
$$
by the regularity axiom~(\ref{axiom:regularity}), implying
$\sum_{x \in S}\mathcal{P}(x, S) \leq \sum_{x \in S'}\mathcal{P}(x, S')$.
\end{proof}

\section{Performance guarantees of revenue-ordered assortments}
\label{section_general_performance_guarantees}
Fix a regular discrete choice model with
system of choice probabilities $\mathcal{P}$,
a revenue function $r: \mathcal{C} \rightarrow \mathbb{R}_{> 0}$, and consider the corresponding assortment problem.
Let us recall the {\DEF revenue-ordered assortments} strategy to obtain a (hopefully good) solution:
Let $r_{1}, r_{2}, \dots, r_{k}$ be the distinct values taken by the revenue function $r$, sorted in increasing order; thus, $0 < r_{1} < r_{2} < \cdots < r_{k}$.
For each $i\in [k]$ let $S_{i} \subseteq \mathcal{C}$ be the set consisting of all products of revenue {\em at least} $r_{i}$.
Then simply compare the revenue of each of the $k$ sets $S_{1}, \dots, S_{k}$, and choose one with maximum revenue.

In this section we present an analysis of the approximation guarantees of revenue-ordered assortments.
We give three lower bounds on the approximation ratio which, in general, are incomparable. (For $0 < \alpha \leq 1$, an algorithm is said to achieve an {\DEF approximation ratio} of $\alpha$ for the assortment problem, or equivalently to {\DEF approximate the problem to within a factor of $\alpha$}, if the algorithm always produces a solution whose revenue is at least $\alpha \cdot \OPT$.)
We begin with a simple one:

\begin{theorem} \label{theorem_revenue_k}
Revenue-ordered assortments approximate the optimum revenue to within a factor of $\frac{1}{k}$.
\end{theorem}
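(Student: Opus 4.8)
The plan is a straightforward averaging argument over the $k$ revenue levels, where the only real content is showing that each revenue-ordered assortment $S_i$ already recovers the portion of $\OPT$ contributed by products of revenue exactly $r_i$.

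First I would fix an optimal assortment $S^*$, so that $\OPT = \sum_{x \in S^*}\mathcal{P}(x,S^*)\,r(x)$, and partition $S^*$ by revenue value: for each $i\in[k]$ set $T_i = \{x\in S^* : r(x) = r_i\}$ and $\OPT_i = r_i\sum_{x\in T_i}\mathcal{P}(x,S^*)$, so that $\OPT = \sum_{i=1}^k \OPT_i$. By averaging, there is an index $i$ with $\OPT_i \geq \OPT/k$; it then suffices to show that the revenue of $S_i$ is at least $\OPT_i$, since $S_i$ is one of the candidate assortments considered by the heuristic.

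The key step is the chain of inequalities $\sum_{x\in T_i}\mathcal{P}(x,S^*) \leq \sum_{x\in T_i}\mathcal{P}(x,T_i) \leq \sum_{x\in S_i}\mathcal{P}(x,S_i)$. The first inequality uses $T_i \subseteq S^*$ together with the regularity axiom~(\ref{axiom:regularity}) applied termwise; the second uses $T_i \subseteq S_i$ (every element of $T_i$ has revenue $r_i \geq r_i$, hence lies in $S_i$) together with Lemma~\ref{lemma_property_regularDCM}. Since every product in $S_i$ has revenue at least $r_i$, the revenue of $S_i$ is
\[
  \sum_{x\in S_i}\mathcal{P}(x,S_i)\,r(x) \;\geq\; r_i\sum_{x\in S_i}\mathcal{P}(x,S_i) \;\geq\; r_i\sum_{x\in T_i}\mathcal{P}(x,S^*) \;=\; \OPT_i \;\geq\; \frac{\OPT}{k},
\]
which completes the argument.

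I do not anticipate a genuine obstacle here; the one place to be careful is the direction of regularity (smaller choice sets give larger choice probabilities) and making sure the two nesting relations $T_i \subseteq S^*$ and $T_i \subseteq S_i$ are invoked in the right spots. The degenerate case $T_i = \emptyset$ is harmless since then $\OPT_i = 0$.
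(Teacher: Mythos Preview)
Your proof is correct and follows essentially the same approach as the paper's: both decompose $\OPT$ by revenue level and use regularity together with Lemma~\ref{lemma_property_regularDCM} to lower-bound the revenue of each $S_i$, then average over the $k$ levels. The only minor difference is that the paper routes the comparison through $S^{*}\cap S_i$ rather than your $T_i=\{x\in S^{*}:r(x)=r_i\}$, which yields a slightly stronger intermediate inequality~\eqref{eq:technical_bound} that is reused in the proofs of Theorems~\ref{theorem_revenue_ratio} and~\ref{theorem_demand_ratio_in_optimal}.
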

\begin{proof}
Let $S^{*} \subseteq \mathcal{C}$ denote an optimal solution to the assortment problem.
Let $j \in [k]$ be the index of a set maximising its revenue among $S_{1}, \dots, S_{k}$;
thus, the revenue provided by the revenue-ordered assortments strategy is
$\sum_{x\in S_{j}} \mathcal{P}(x, S_{j}) r(x)$.

We begin with a technical observation about
the revenue of $S_{i}$ ($i \in [k]$):
\begin{equation}
\label{eq:technical_bound}
\sum_{x\in S_{i}} \mathcal{P}(x, S_{i}) r(x)
\geq \sum_{x\in S^{*} \cap S_{i}} \mathcal{P}(x, S^{*}) r_{i}.
\end{equation}
This can be proved as follows:
\begin{dmath*}[compact]
\sum_{x\in S_{i}} \mathcal{P}(x, S_{i}) r(x)
\geq \sum_{x\in S_{i}} \mathcal{P}(x, S_{i}) r_{i}
\geq \sum_{x\in S^{*} \cap S_{i}} \mathcal{P}(x, S^{*} \cap S_{i}) r_{i}
\geq \sum_{x\in S^{*} \cap S_{i}} \mathcal{P}(x, S^{*}) r_{i}.
\end{dmath*}
Above, the second inequality follows from Lemma~\ref{lemma_property_regularDCM}
and the third follows from the regularity axiom (c.f.\
axiom~\eqref{axiom:regularity}).

Using~\eqref{eq:technical_bound}, it is straightforward to obtain
the lower bound of $1/k$ on the approximation ratio:
\begin{dmath*}[compact]
\OPT = \sum_{x\in S^{*}} \mathcal{P}(x, S^{*}) r(x)
= \sum_{i=1}^{k}\sum_{\substack{x\in S^{*}, \\ r(x) = r_{i}}} \mathcal{P}(x, S^{*}) r_{i}
\leq \sum_{i=1}^{k} \sum_{x\in S^{*} \cap S_{i}} \mathcal{P}(x, S^{*}) r_{i}
\leq \sum_{i=1}^{k} \sum_{x\in S_{i}} \mathcal{P}(x, S_{i}) r(x)
\leq \sum_{i=1}^{k} \sum_{x\in S_{j}} \mathcal{P}(x, S_{j}) r(x)
= k \sum_{x\in S_{j}} \mathcal{P}(x, S_{j}) r(x),
\end{dmath*}
that is,
$$\sum_{x\in S_{j}} \mathcal{P}(x, S_{j}) r(x) \geq \frac{1}{k} \OPT,$$
as desired.
(We note that the second inequality holds by~\eqref{eq:technical_bound} and the third by the definition of the index $j$.)
\end{proof}

We continue with the second bound on the approximation ratio, which is a function of the ratio between the highest and lowest revenues of products in $\mathcal{C}$.

\begin{theorem} \label{theorem_revenue_ratio}
Revenue-ordered assortments approximate the optimum revenue to within a factor of
$$\frac{1}{\sum_{i=1}^{k} \frac{r_{i} - r_{i-1}}{r_{i}}}
\geq \frac{1}{1 + \ln \rho}$$
where $\rho \coloneqq r_{k} / r_{1}$ and $r_{0} \coloneqq 0$.
\end{theorem}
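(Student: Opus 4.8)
The plan is to mimic the structure of the proof of Theorem~\ref{theorem_revenue_k}, but instead of summing the inequality~\eqref{eq:technical_bound} over all $k$ indices with a trivial weight of $1$, I would sum it with carefully chosen coefficients so that the telescoping of the revenue thresholds $r_1 < r_2 < \cdots < r_k$ produces the denominator $\sum_{i=1}^k (r_i - r_{i-1})/r_i$ (with the convention $r_0 := 0$). The key algebraic identity I want to exploit is that for any product $x \in S^*$ with revenue $r(x) = r_m$, we have $r_m = \sum_{i=1}^m (r_i - r_{i-1})$, and $x$ belongs to $S^* \cap S_i$ precisely for those $i \le m$; this lets me rewrite $\OPT = \sum_{x \in S^*}\mathcal{P}(x,S^*)r(x)$ as a double sum $\sum_{i=1}^k (r_i - r_{i-1}) \sum_{x \in S^* \cap S_i} \mathcal{P}(x,S^*)$.

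The main steps, in order, would be: (1) Fix an optimal assortment $S^*$ and the index $j$ of a best revenue-ordered assortment, exactly as before, and let $R := \sum_{x \in S_j}\mathcal{P}(x,S_j)r(x)$ denote the revenue of the strategy. (2) Observe via~\eqref{eq:technical_bound} and the choice of $j$ that $R \ge \sum_{x \in S^* \cap S_i}\mathcal{P}(x,S^*)\,r_i$ for every $i$, hence $\sum_{x \in S^* \cap S_i}\mathcal{P}(x,S^*) \le R/r_i$. (3) Apply the rewriting of $\OPT$ from the previous paragraph, substitute the bound from step (2), and pull out $R$:
\begin{equation*}
\OPT = \sum_{i=1}^k (r_i - r_{i-1}) \sum_{x \in S^* \cap S_i}\mathcal{P}(x,S^*)
\le \sum_{i=1}^k (r_i - r_{i-1})\,\frac{R}{r_i}
= R \sum_{i=1}^k \frac{r_i - r_{i-1}}{r_i}.
\end{equation*}
Rearranging gives $R \ge \OPT \big/ \sum_{i=1}^k \frac{r_i - r_{i-1}}{r_i}$, which is the first inequality claimed.

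For the second (cleaner) inequality, I would bound the sum $\sum_{i=1}^k (r_i - r_{i-1})/r_i$ from above by $1 + \ln\rho$. The first term ($i=1$) equals $(r_1 - r_0)/r_1 = 1$ since $r_0 = 0$. For $i \ge 2$, I would use the standard estimate $(r_i - r_{i-1})/r_i \le \int_{r_{i-1}}^{r_i} \frac{dt}{t} = \ln(r_i/r_{i-1})$, valid because $1/t \ge 1/r_i$ on the interval $[r_{i-1}, r_i]$; summing telescopes to $\ln(r_k/r_1) = \ln\rho$. Combining, $\sum_{i=1}^k (r_i - r_{i-1})/r_i \le 1 + \ln\rho$, and since this quantity sits in the denominator the stated bound follows. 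I do not anticipate a serious obstacle here; the only thing to be careful about is the boundary convention $r_0 = 0$ (so that the $i=1$ term is handled separately as the additive constant $1$ rather than via the logarithmic integral, which would diverge at $0$), and making sure the rewriting of $\OPT$ correctly accounts for each product being counted in $S^* \cap S_i$ exactly for $i \le m$ where $r_m = r(x)$.
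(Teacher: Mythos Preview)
Your proposal is correct and follows essentially the same approach as the paper: the same rewriting of $\OPT$ as $\sum_{i=1}^{k}(r_i-r_{i-1})\sum_{x\in S^*\cap S_i}\mathcal{P}(x,S^*)$, the same use of inequality~\eqref{eq:technical_bound} combined with the choice of $j$, and the same integral comparison for the bound $\sum_{i=1}^{k}(r_i-r_{i-1})/r_i \leq 1+\ln\rho$ (which the paper simply declares ``easily checked''). The only cosmetic difference is that you isolate $\sum_{x\in S^*\cap S_i}\mathcal{P}(x,S^*)\leq R/r_i$ before multiplying by $(r_i-r_{i-1})$, whereas the paper keeps the factor $r_\ell$ inside and writes $\frac{r_\ell-r_{\ell-1}}{r_\ell}\sum_{x\in S^*\cap S_\ell}\mathcal{P}(x,S^*)r_\ell$; this is the same computation rearranged.
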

\begin{proof}
As in the previous proof, let $S^{*} \subseteq \mathcal{C}$ denote an optimal solution to the assortment problem, and
let $j \in [k]$ be the index of a set maximising its revenue among $S_{1}, \dots, S_{k}$.
First, we rewrite the revenue of $S^{*}$ as follows.
\begin{dmath*}[compact]
\OPT = \sum_{x\in S^{*}} \mathcal{P}(x, S^{*}) r(x)
= \sum_{i=1}^{k}\sum_{\substack{x\in S^{*}, \\ r(x) = r_{i}}} \mathcal{P}(x, S^{*}) r_{i}. 
\end{dmath*}
Rearranging the terms in the last expression, we obtain:
\begin{dmath*}[compact]
\OPT
= \sum_{\ell=1}^{k}(r_{\ell} - r_{\ell-1})
\sum_{i=\ell}^{k}\sum_{\substack{x\in S^{*}, \\ r(x) = r_{i}}} \mathcal{P}(x, S^{*}) = \sum_{\ell=1}^{k}(r_{\ell} - r_{\ell-1})
\sum_{x\in S^{*} \cap S_{\ell}} \mathcal{P}(x, S^{*})
= \sum_{\ell=1}^{k} \frac{r_{\ell} - r_{\ell-1}}{r_{\ell}}
\sum_{x\in S^{*} \cap S_{\ell}} \mathcal{P}(x, S^{*}) r_{\ell}
\leq \sum_{\ell=1}^{k} \frac{r_{\ell} - r_{\ell-1}}{r_{\ell}}
\sum_{x\in S_{\ell}} \mathcal{P}(x, S_{\ell}) r(x) \\
\leq (1 + \ln \rho)
\sum_{x\in S_{j}} \mathcal{P}(x, S_{j}) r(x),
\end{dmath*}
as desired. Here, the first inequality
holds by~\eqref{eq:technical_bound} (c.f.\ proof of Theorem~\ref{theorem_revenue_k}), and the second one
follows from the observation that
\[
\sum_{\ell=1}^{k} \frac{r_{\ell} - r_{\ell-1}}{r_{\ell}}
= 1 + \sum_{\ell=2}^{k} \frac{r_{\ell} - r_{\ell-1}}{r_{\ell}}
\leq 1 + \int_{r_1}^{r_k} \frac{dt}{t}
= 1 + \ln (r_{k}/r_{1}).
\]
\end{proof}

Next we bound the approximation ratio using a technical quantity which is a function of an optimal solution.
We will see concrete applications of this bound later on.

\begin{theorem}\label{theorem_demand_ratio_in_optimal}
Let $S^{*} \subseteq \mathcal{C}$ denote an optimal solution, and
let $$N_i\coloneqq \sum_{\substack{x\in S^{*}, \\ r(x) \geq r_{i}}} \mathcal{P}(x, S^{*})$$
for each $i \in [k]$.
Suppose that $N_1 >0$, and
let $\ell \in [k]$ be maximum such that $N_{\ell} > 0$.
Then revenue-ordered assortments approximate the optimum revenue to within a factor of
$$\frac{1}{\sum_{i=1}^{\ell} \frac{N_{i} - N_{i+1}}{N_i}} \geq \frac{1}{(1 + \ln \nu)},$$ where $\nu \coloneqq N_{1} / N_{\ell}$.
\end{theorem}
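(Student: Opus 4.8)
The plan is to mirror the structure of the proofs of Theorems~\ref{theorem_revenue_k} and~\ref{theorem_revenue_ratio}, but to organise the computation around the quantities $N_i$ rather than the revenues directly. First I would set $p_i \coloneqq \sum_{x \in S^{*},\, r(x) = r_i} \mathcal{P}(x, S^{*})$, so that $N_i = \sum_{m \geq i} p_m$ and hence $p_i = N_i - N_{i+1}$ under the convention $N_{k+1} \coloneqq 0$; note also that $N_m = 0$ for every $m > \ell$, so only the indices $1, \dots, \ell$ will contribute. With this notation,
\[
\OPT = \sum_{x \in S^{*}} \mathcal{P}(x, S^{*}) r(x) = \sum_{i=1}^{k} r_i\, p_i = \sum_{i=1}^{\ell} r_i (N_i - N_{i+1}).
\]

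Next I would invoke inequality~\eqref{eq:technical_bound} from the proof of Theorem~\ref{theorem_revenue_k}: since $S^{*} \cap S_i$ is precisely the set of products of $S^{*}$ with revenue at least $r_i$, it gives
$\sum_{x \in S_i} \mathcal{P}(x, S_i) r(x) \geq r_i \sum_{x \in S^{*} \cap S_i} \mathcal{P}(x, S^{*}) = r_i N_i$. Letting $j \in [k]$ be an index maximising the revenue among $S_1, \dots, S_k$ and writing $R \coloneqq \sum_{x \in S_j} \mathcal{P}(x, S_j) r(x)$ for the revenue returned by the heuristic, we get $R \geq r_i N_i$ for every $i$, hence $r_i \leq R / N_i$ whenever $1 \leq i \leq \ell$. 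Substituting into the expression for $\OPT$ above yields
\[
\OPT \leq \sum_{i=1}^{\ell} \frac{R}{N_i} (N_i - N_{i+1}) = R \sum_{i=1}^{\ell} \frac{N_i - N_{i+1}}{N_i},
\]
which rearranges to the first claimed bound $R \geq \frac{\OPT}{\sum_{i=1}^{\ell} (N_i - N_{i+1})/N_i}$.

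Finally, for the second inequality I would estimate $\sum_{i=1}^{\ell} \frac{N_i - N_{i+1}}{N_i} \leq 1 + \ln(N_1/N_{\ell})$. The term $i = \ell$ equals $1$ since $N_{\ell+1} = 0$; for $1 \leq i \leq \ell - 1$ one has $N_{i+1} \geq N_{\ell} > 0$, so the elementary inequality $1 - t \leq \ln(1/t)$ (valid for all $t > 0$) applied with $t = N_{i+1}/N_i$ gives $\frac{N_i - N_{i+1}}{N_i} \leq \ln(N_i / N_{i+1})$, and summing over $i = 1, \dots, \ell-1$ telescopes to $\ln(N_1/N_{\ell})$. Adding back the $i = \ell$ term gives $1 + \ln \nu$, completing the proof.

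I do not expect a genuine obstacle: the argument is essentially a reweighting of the proof of Theorem~\ref{theorem_revenue_ratio} with the non-increasing sequence $(N_i)$ playing the role that $(r_i)$ played there. The only points needing a little care are the bookkeeping conventions ($N_{k+1} = 0$, and $N_m = 0$ for $m > \ell$, so the sum can be truncated at $\ell$) and the degenerate cases $\ell = 1$ (where the sum is exactly $1$ and $\nu = 1$) and $\ell = k$; verifying that~\eqref{eq:technical_bound} indeed delivers the clean lower bound $r_i N_i$ on the revenue of $S_i$ is the one substantive step.
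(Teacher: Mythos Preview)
Your proposal is correct and follows essentially the same approach as the paper: both express $\OPT$ as $\sum_{i=1}^{\ell} (N_i - N_{i+1}) r_i$, use the key inequality $R \geq r_i N_i$ (you via~\eqref{eq:technical_bound}, the paper by a direct but equivalent argument), and conclude by factoring out $R$; your treatment of the logarithmic bound is in fact more explicit than the paper's, which simply asserts it.
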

\begin{proof}
Let $j \in [k]$ be the index of a set $S_{i}$
maximising its revenue among $S_{1}, \dots, S_{k}$.
Observe that
$N_i \leq \sum_{x\in S_{i}} \mathcal{P}(x, S_{i}) $
for every $i\in [k]$, and thus
$N_i r_i \leq \sum_{x\in S_{i}} \mathcal{P}(x, S_{i}) r_i $ for every $i$.
It follows
\begin{equation}
\label{eq:Ni}
N_i r_i \leq \sum_{x\in S_{j}} \mathcal{P}(x, S_{j}) r(x)
\end{equation}
for every $i\in [k]$, that is,
the revenue resulting from the revenue-ordered assortments strategy is at least $\max_{1 \leq i \leq k} N_i r_i$.

The revenue of $S^{*}$ can be expressed as follows:
\begin{dmath*}[compact]
\sum_{x\in S^{*}} \mathcal{P}(x, S^{*}) r(x)
= \sum_{i=1}^{k}\sum_{\substack{x\in S^{*}, \\ r(x) = r_{i}}} \mathcal{P}(x, S^{*}) r_{i}
= \sum_{i=1}^{\ell}\sum_{\substack{x\in S^{*}, \\ r(x) = r_{i}}} \mathcal{P}(x, S^{*}) r_{i}
= \sum_{i=1}^{\ell} (N_{i} - N_{i+1}) r_{i}
= \sum_{i=1}^{\ell} \frac{N_{i} - N_{i+1}}{N_i} N_i r_{i}
\end{dmath*}
where we let $N_{k+1} \coloneqq 0$ in case $\ell =k$.
Using~\eqref{eq:Ni}, we then obtain:
\begin{dmath*}[compact]
\sum_{x\in S^{*}} \mathcal{P}(x, S^{*}) r(x)
= \sum_{i=1}^{\ell} \frac{N_{i} - N_{i+1}}{N_i} N_i r_{i}
\leq \sum_{i=1}^{\ell} \frac{N_{i} - N_{i+1}}{N_i} \sum_{x\in S_{j}} \mathcal{P}(x, S_{j}) r(x) \\
\leq (1 + \ln \nu) \sum_{x\in S_{j}} \mathcal{P}(x, S_{j}) r(x),
\end{dmath*}
as desired.
\end{proof}

\begin{theorem} \label{tightness_theorem}
All three bounds on the approximation ratio given in this section are tight.
\end{theorem}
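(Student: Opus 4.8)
The plan is to construct, for each of the three bounds (A), (B), (C), a family of instances (parametrised so that $\epsilon \to 0$) on which the ratio between the revenue-ordered assortment and $\OPT$ tends to the claimed bound from above, so that multiplying the bound by any fixed $(1+\epsilon)$ would make it false. In each case the instance will be extremely simple: $k$ products with distinct revenues $r_1 < \cdots < r_k$, and a system of choice probabilities defined by hand. The main thing to check each time is (a) that the hand-defined $\mathcal{P}$ satisfies axioms (i)--(iv), in particular regularity, and (b) that the best revenue-ordered assortment really does attain the stated value, i.e.\ that no single set $S_i$ does better than the target.

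For bound (A) ($1/k$): I would take $r_i$ essentially all distinct, and arrange that each singleton-type choice set $S_i = \{x : r(x) \ge r_i\}$ yields revenue exactly $1$ (say), while some cleverly chosen assortment $S^*$ yields revenue close to $k$. The cleanest route is to have $k$ products $x_1, \dots, x_k$ with $r(x_i)$ roughly equal but slightly increasing, and probabilities set so that $\mathcal{P}(x_i, S^*)$ is about $1/k$ for all $i$ when $S^* = \mathcal{C}$, giving $\OPT \approx k \cdot (1/k) \cdot r \approx$ (total), whereas each $S_i$ (a suffix) makes only a small revenue. Actually the standard construction here uses a ``nested'' structure where offering the top $i$ products causes the purchase probability to be only about $1/i$ concentrated on the cheapest offered item; one tunes the revenues so $S_i r_i$ is constant. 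I would write down such a $\mathcal{P}$ explicitly and verify regularity directly (it reduces to checking a handful of monotonicity inequalities among the defining parameters).

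For bound (B) ($1/(1+\ln\rho)$, with $\rho = r_k/r_1$): here the natural construction takes revenues in (near-)geometric progression, $r_i \approx \rho^{(i-1)/(k-1)} r_1$, so that $\sum_i (r_i - r_{i-1})/r_i \to \int_{r_1}^{r_k} dr/r = \ln \rho$ plus the leading term, i.e.\ $\to 1 + \ln\rho$, and one wants an instance making the inequalities in the proof of Theorem~\ref{theorem_revenue_ratio} tight. That requires $N_\ell$-type quantities behaving like $r_1/r_\ell$, i.e.\ the purchase probability when offering a suffix $S_i$ should be about $r_1/r_i$, and all the $S_i r_i$ equal. Again I would give $\mathcal{P}$ by an explicit formula and check the four axioms; the subtlety is making the instance respect regularity while still forcing $\mathcal{P}(x,S^*)$ to have the precise profile that makes $\OPT$ match $(1+\ln\rho)$ times the best suffix revenue. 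For bound (C) ($1/(1+\ln\nu)$) I would do the analogous thing but now driving $\nu = N_1/N_\ell$ large independently of the revenue spread — e.g.\ keep $r_k/r_1$ bounded but let the ratio of ``buy something'' probability to ``buy the top item'' probability grow; the construction is essentially the same skeleton as (B) with the roles of revenues and probabilities swapped, which is exactly why (B) and (C) are dual in the analysis.

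The main obstacle will be (B) and (C): one must simultaneously (i) keep $\mathcal{P}$ regular, (ii) make every revenue-ordered assortment $S_i$ yield the \emph{same} revenue (so that the ``best of them'' is not better than the target), and (iii) make the optimal assortment's revenue hit the sum $\sum_i (r_i-r_{i-1})/r_i$ (resp.\ $\sum_i (N_i - N_{i+1})/N_i$) exactly. Reconciling (ii) and (iii) with regularity pins down the choice probabilities almost uniquely up to the free parameters, so the real work is checking that the resulting $\mathcal{P}$ is genuinely a legal regular choice model and then computing, for the chosen geometric (or near-geometric) revenue sequence, that the ratio $\OPT / \max_i (\text{rev}(S_i))$ converges to the claimed bound as $k \to \infty$ (and $\epsilon \to 0$). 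Bound (A) is the easy warm-up; the geometric-sequence computations for (B) and (C) are routine once the instance is written down, so the genuine content is the explicit instance plus the regularity verification.
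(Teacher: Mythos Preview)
Your plan has a concrete error in the bound-(A) construction that would prevent it from going through. You propose to take $k$ products $x_{1},\dots,x_{k}$, one per revenue level, and let the optimal assortment be $S^{*}=\mathcal{C}$. But $\mathcal{C}=S_{1}$ is itself a revenue-ordered assortment, so its revenue is already counted among the $S_{i}$'s and the ratio is $1$, not close to $k$. The backup ``nested'' idea (buy the cheapest offered item with a probability tuned so that $\text{rev}(S_{i})$ is constant) has the same defect: with a single product per revenue level and such a model, every subset has the same revenue as the suffix starting at its minimum element, so revenue-ordered is in fact optimal. More generally, with only one product per revenue level, for $k=2$ one can check directly that $\text{rev}(\{x_{1}\})\le \text{rev}(S_{1})$ via Lemma~\ref{lemma_property_regularDCM}, so no non-suffix set beats the suffixes at all; the phenomenon you need (some $S^{*}$ beating every suffix by a factor $\approx k$) genuinely requires several products sharing each revenue level, so that $S^{*}$ can pick one representative per level while the suffixes $S_{i}$ are forced to contain ``spoiler'' copies that drag down their revenue.

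The paper takes a different and more economical route: a \emph{single} family of instances that makes all three bounds tight at once. Take $N=\binom{k+1}{2}$ products indexed by pairs $(i,j)$ with $1\le j\le i\le k$, revenue $r((i,j))=\varepsilon^{-j}$, and a rank-based model in which ``type $i$'' (total mass $\varepsilon^{i}$) buys, among the products $(i,1),\dots,(i,i)$, the one with smallest second coordinate that is offered. This is a stochastic-preference (hence regular) model, so no separate regularity check is needed. Every revenue-ordered assortment $S_{\ell}$ has revenue $\sum_{t=0}^{k-\ell}\varepsilon^{t}<1/(1-\varepsilon)$, while $S^{*}=\{(i,i):i\in[k]\}$ has revenue exactly $k$. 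Letting $\varepsilon\to 0$, the approximation ratio tends to $1/k$; simultaneously $\sum_{i}(r_{i}-r_{i-1})/r_{i}\to k$ and $\sum_{i}(N_{i}-N_{i+1})/N_{i}\to k$, so the same family witnesses tightness of (B) and (C). Compared to your plan of three separate hand-built $\mathcal{P}$'s with ad-hoc regularity checks, this buys both brevity and the stronger statement that the three bounds are tight \emph{on a common instance}. If you want to salvage your approach, the first fix is to allow multiple products per revenue level and take $S^{*}$ to be a transversal rather than all of $\mathcal{C}$.
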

\begin{proof}

Let $k \geq 1$ and let $\mathcal{C}$ consists of $N=\frac{k(k+1)}{2}$ products.
To simplify the description, it will be convenient to identify the set $\mathcal{C}$ of products with the set of all pairs $(i, j)$ with $i\in [k]$ and $j \in [i]$.
Next, fix some $\varepsilon$ with $0 < \varepsilon \leq \frac12$ and let the revenue of product $(i, j)$ be $\varepsilon^{-j}$.
Thus, defining $r_{1}, \dots, r_{k}$ as before, we have $r_{i}=\varepsilon^{-i}$ for each $i\in [k]$.
Finally, define the system of choice probabilities $\mathcal{P}$ by letting
$$\mathcal{P}((i, j), S) \coloneqq \left\{
\begin{array}{ll}
\varepsilon^{i} & \textrm{ if } (i, 1), \dots, (i, j-1) \notin S  \\
0 & \textrm{ otherwise}.
\end{array}\right.$$
for each $S\subseteq \mathcal{C}$ and $(i, j) \in S$.
Also, let $\mathcal{P}(0, S) \coloneqq 1 - \sum_{x \in S}\mathcal{P}(x, S)$ for each $S\subseteq \mathcal{C}$, as expected.

We proceed to verify that $\mathcal{P}$ satisfies all four axioms. Inequality \eqref{discrete_ineq_1} and equality \eqref{discrete_eq_1} are clearly satisfied. For each subset $S \subseteq \mathcal{C}$ it holds that
\begin{align*}
\sum_{(i,j) \in S} \mathcal{P}((i,j),S) \leq \varepsilon^1 + \varepsilon^2 + \hdots + \varepsilon^k  < \frac{1}{1-\varepsilon} - 1 = \frac{\varepsilon}{1-\varepsilon} \leq 1,
\end{align*}
so inequality \eqref{discrete_eq_2} is also respected.
To prove that the regularity axiom holds, consider a choice set $S \subseteq \mathcal{C}$, an element $(i,j) \in S$, and a choice set $S'$ such that $S \subseteq S' \subseteq \mathcal{C}$. Clearly, $\mathcal{P}((i,j),S)$ and $\mathcal{P}((i,j),S')$ can only take the values $\varepsilon^i$ or 0.
Moreover, it can be checked from the definition of $\mathcal{P}$ that $\mathcal{P}((i,j),S)=0$ implies
$\mathcal{P}((i,j),S')=0$. Hence,
\begin{align}\label{regularity_tight_thm_1}
\mathcal{P}((i,j),S) \geq \mathcal{P}((i,j),S') \textrm{ for every } S\subseteq S'\subseteq \mathcal{C}  \textrm{ and } (i,j) \in S.
\end{align}

For a subset $ S \subseteq \mathcal{C}$ and index $i\in [k]$, let $S_i = \{(i,j):  j\in[i], (i,j) \in S \}$, thus $S= \cup_{i=1}^k S_i$ and $S_i \cap S_{i'} = \emptyset$ for all $i,i' \in [k]$ with $i \neq i'$. For
sets $S, S'$ with $S\subseteq S'\subseteq \mathcal{C}$, we have

\begin{align}\label{regularity_tight_thm_2}
\sum_{(i,j) \in S}\mathcal{P}((i,j),S) = \sum_{i\in[k]} \sum_{(i,j) \in S_i}\mathcal{P}((i,j),S_i) = \sum_{i\in[k],S_i \neq \emptyset}\varepsilon^i \leq \sum_{i\in[k],S'_i \neq \emptyset} \varepsilon^i  = \sum_{(i,j) \in S'}\mathcal{P}((i,j),S').
\end{align}

From \eqref{regularity_tight_thm_2} it follows that
\begin{align}\label{regularity_tight_thm_3}
\mathcal{P}(0,S) \geq \mathcal{P}(0,S') \textrm{ for every } S\subseteq S'\subseteq \mathcal{C}.
\end{align}

We deduce from \eqref{regularity_tight_thm_1} and \eqref{regularity_tight_thm_3} that the regularity axiom \eqref{axiom:regularity} holds, and therefore the constructed system of choice probabilities is a regular discrete choice model.
(Alternatively, one could verify that the proposed choice model belongs to the class of choice models based on stochastic preferences and therefore the regularity property is inherited, see Section \ref{sec:rum}.)

Now, the best assortment among all $k$ revenue-ordered assortments is the first one, namely
the one consisting of all products of revenue at least $r_{1}$, that is, all products. Its revenue is
$(\varepsilon + \varepsilon^{2} + \cdots + \varepsilon^{k})\cdot \varepsilon^{-1}
< \frac{1}{1 - \varepsilon}$.
On the other hand, observe that offering the products $(i,i)$ for each $i \in [k]$ yields a revenue of $k$ (in fact, this is the optimal solution).

Thus, if we let $\varepsilon$ tend to $0$ then the ratio between the two values tends to $k$,
showing that Theorem~\ref{theorem_revenue_k} is best possible.
Also, observe that $\sum_{i=1}^{k} \frac{r_{i} - r_{i-1}}{r_{i}}$ tends to $k$
when $\varepsilon \to 0$, showing that Theorem~\ref{theorem_revenue_ratio} is tight as well.
Finally, regarding Theorem~\ref{theorem_demand_ratio_in_optimal}, if we define $N_{i}$ ($i\in [k]$)
w.r.t.\ the optimal solution $S^{*}\coloneqq \{(i,i): i\in [k]\}$, we have
$N_{i} = \varepsilon^{i} + \cdots + \varepsilon^{k}$ for each $i\in [k]$, and
$\sum_{i=1}^{k} \frac{N_{i} - N_{i+1}}{N_i}$ also tends to $k$ when $\varepsilon \to 0$.
\end{proof}

\section{Applications} \label{section_applications}
In this section we describe how the main results from Section~\ref{section_general_performance_guarantees} can be applied to derive revenue guarantees of the revenue-ordered assortments for the assortment problem under a wide range of choice models studied in revenue management as well as in the theoretical economics literature. We highlight that often, in practice, one does not know what choice model consumers are following. Nevertheless, as long as the firm is able to evaluate the expected revenue obtained by each of $k$ the choice sets $S_i$ ($i=1,\hdots,k$) stemming from the revenue-ordered assortment strategy, and the (unknown) choice model satisfies regularity, the revenue guarantees obtained in Section~\ref{section_general_performance_guarantees} hold. We also show that the bounds obtained in Section~\ref{section_general_performance_guarantees} generalise previous results regarding the uniform pricing heuristic for some pricing problems studied in theoretical computer science.

\subsection{Random utility models}\label{sec:rum}

Let us first recall the definition of random utility models first proposed by \citet{thurstone1927law}. In a random utility model, individuals assign a random variable $U_x$ (utility) to each product $x \in \mathcal{C} \cup \{0\}$. These $N+1$ random variables are jointly distributed over $\mathbb{R}^{N+1}$, with a probability measure $\Pr$ such that $\Pr(U_x = U_y)=0$ for all distinct $x,y \in \mathcal{C} \cup \{0\}$. Given a choice set $S \subseteq \mathcal{C}$, an individual considers a realisation $(u_0,u_1,\dots,u_N)$ of the random utilities and then selects
$x \in S \cup \{0\}$ such that $u_{x}$ is maximum (note that this could be the
no-purchase option $x=0$).

\begin{definition} \label{random_utility_characterization}
Suppose $U_0, U_1, \dots, U_N$ and $\Pr$ are  as above.
Then the system of choice probabilities $\mathcal{P}$ induced by this random utility model is obtained by setting
\begin{eqnarray*}
\mathcal{P}(x,S) = \Pr ( U_x = \max \{ U_y : y \in S \cup\{0\} \} )
\end{eqnarray*}
for all $S \subseteq \mathcal{C}$ and  $x \in S \cup\{0\}$.
\end{definition}

Whenever $\mathcal{P}$ is as in Definition~\ref{random_utility_characterization}
we say that $\mathcal{P}$ is a {\DEF random utility based discrete choice model}, which we abbreviate as RUM for short.
We remark that if $\mathcal{P}$ is a RUM then it is a regular discrete choice model; in particular,
$\mathcal{P}$ satisfies the regularity axiom (axiom (\ref{axiom:regularity})).
Informally, this is because if we consider a choice set $S$ and a product $x\in S \cup \{0\}$ then the likelihood that $x$ maximises utility among products in $S \cup \{0\}$ can only decrease if we enlarge the set $S$.
This was originally observed by \citet{luce_suppes_1965}.
We thus have the following lemma.

\begin{lemma}\label{lemma_all_RUMS_are_regular}
Every RUM is a regular discrete choice model.
\end{lemma}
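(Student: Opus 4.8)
The plan is to show that the system of choice probabilities $\mathcal{P}$ given in Definition~\ref{random_utility_characterization} satisfies all four axioms defining a regular discrete choice model, with the main work going into axiom~\eqref{axiom:regularity}. Axioms \eqref{discrete_ineq_1} and \eqref{discrete_eq_1} are immediate: $\mathcal{P}(x,S)$ is a probability, hence nonnegative, and if $x\notin S$ then the event $U_x=\max\{U_y : y\in S\cup\{0\}\}$ does not even reference $x$ among the maximised variables, so it has probability zero (here one uses that the probability measure is such that ties occur with probability zero, so the maximum is almost surely attained by exactly one index). For axiom~\eqref{discrete_eq_2}, note that for each realisation $(u_0,\dots,u_N)$ there is almost surely a unique $x\in S\cup\{0\}$ attaining the maximum over $S\cup\{0\}$; hence the events $\{U_x=\max\}$ for $x\in S\cup\{0\}$ partition the sample space up to a null set, giving $\sum_{x\in S\cup\{0\}}\mathcal{P}(x,S)=1$, and in particular $\sum_{x\in S}\mathcal{P}(x,S)\le 1$. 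This also confirms that the value $\mathcal{P}(0,S)$ obtained from the formula agrees with $1-\sum_{x\in S}\mathcal{P}(x,S)$, so the definition is consistent.

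The heart of the proof is the regularity axiom. Fix $S\subseteq S'\subseteq\mathcal{C}$ and $x\in S\cup\{0\}$. The key observation is the pointwise implication: for any realisation $(u_0,\dots,u_N)$ of the utilities, if $u_x=\max\{u_y : y\in S'\cup\{0\}\}$, then also $u_x=\max\{u_y : y\in S\cup\{0\}\}$, simply because $S\cup\{0\}\subseteq S'\cup\{0\}$ and $x\in S\cup\{0\}$, so a coordinate that dominates the larger index set certainly dominates the smaller one. In other words, the event defining $\mathcal{P}(x,S')$ is contained in the event defining $\mathcal{P}(x,S)$. Monotonicity of probability then gives
\[
\mathcal{P}(x,S')=\Pr\bigl(U_x=\max\{U_y : y\in S'\cup\{0\}\}\bigr)\leq \Pr\bigl(U_x=\max\{U_y : y\in S\cup\{0\}\}\bigr)=\mathcal{P}(x,S),
\]
which is exactly axiom~\eqref{axiom:regularity}.

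Having verified axioms \eqref{discrete_ineq_1}--\eqref{axiom:regularity}, we conclude that $\mathcal{P}$ is a regular discrete choice model, proving the lemma. I do not expect any real obstacle here; the only point requiring a little care is the handling of ties, which is precisely why the definition of a random utility model imposes $\Pr(U_x=U_y)=0$ for distinct $x,y$. With that assumption the almost-sure uniqueness of the utility-maximising alternative makes the probabilities add up correctly and makes the containment-of-events argument for regularity clean; without it one would have to decide how to break ties and the argument would become fussier. Since the statement is essentially the classical observation of \citet{luce_suppes_1965}, the write-up can be kept short.
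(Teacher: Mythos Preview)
Your proof is correct and follows precisely the line of reasoning the paper sketches informally before stating the lemma (the paper does not give a formal proof, only the remark that ``the likelihood that $x$ maximises utility among products in $S \cup \{0\}$ can only decrease if we enlarge the set $S$'' together with the citation to \citet{luce_suppes_1965}). Your containment-of-events argument for axiom~\eqref{axiom:regularity} is exactly the formalisation of that remark, and your verification of axioms \eqref{discrete_ineq_1}--\eqref{discrete_eq_2} using almost-sure uniqueness of the maximiser is the standard way to complete the picture.
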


The converse of Lemma~\ref{lemma_all_RUMS_are_regular} is not true. 
One way to observe this is that every RUM has the following submodularity property (a proof is given in the appendix).  

\begin{lemma}\label{lem:submodularity}
In every random utility based discrete choice model, the demand function $f(S)=\sum_{x \in S} \mathcal{P}(x,S)$ is submodular, i.e.\ $f(S'\cup \{x\}) - f(S') \leq  f(S\cup \{x\}) - f(S)$ for every $S \subseteq S' \subseteq \mathcal{C}$ and $x \in \mathcal{C}$. 
\end{lemma}

On the other hand, submodularity of the demand function does not follow from axioms~\eqref{discrete_ineq_1}-\eqref{axiom:regularity}.
This can be seen on the following simple example, adapted from \citet{mcfadden1990stochastic}
(observe that $f(\{1,2,3\})-f(\{1,2\}) = .15 > f(\{1,3\}) -f(\{1\}) = .1$).

$$
\begin{array}{l|llll}
S & \mathcal{P}(0, S) & \mathcal{P}(1, S) & \mathcal{P}(2, S) & \mathcal{P}(3, S) \\
\hline
\{1\} & .5 & .5 & - & - \\
\{2\} & .5 & - & .5 & - \\
\{3\} & .5 & - & - & .5 \\
\{1,2\} & .4 & .3 & .3 & - \\
\{1,3\} & .4 & .3 & - & .3 \\
\{2,3\} & .4 & - & .3 & .3 \\
\{1,2,3\} & .25 & .25 & .25 & .25 \\
\end{array}
$$

A consequence of Lemma~\ref{lemma_all_RUMS_are_regular} is that
the revenue guarantees of revenue-ordered assortments given by Theorems \ref{theorem_revenue_k}, \ref{theorem_revenue_ratio}, and \ref{theorem_demand_ratio_in_optimal} hold for the assortment problem under a RUM, regardless on how complex the RUM in question is.
For example, our results apply to some recent models in neuroscience and psychology that predict how the brain reaches a decision, see \citet{webb2016dynamics} and \citet{webb2013neural}.

It is well known that a random utility based discrete choice model can also be described by means of a probability distribution over all rankings of the elements in $\mathcal{C} \cup \{0\}$ as follows:
Let $\Pr$ be a probability measure over the set $\mathcal{S}_{N+1}$ of the $(N+1)!$ permutations of the elements in $\mathcal{C} \cup \{0\}$, often called a {\DEF stochastic preference}. Given a permutation $\prec \in \mathcal{S}_{N+1}$ and two distinct elements $x, y \in \mathcal{C} \cup \{0\}$, let us write $x \prec y $ whenever $x$ appears before $y$ in $\prec$.

\begin{definition} \label{stochastic_preference_characterization}
Suppose $\Pr$ is  as above.
Then the system of choice probabilities $\mathcal{P}$ induced by this stochastic preference model is obtained by setting
\begin{eqnarray} \label{stochastic_preference_probabilities}
\mathcal{P}(x,S) = \Pr( x \prec y \text{ for all } y \in (S \cup \{0\}) - \{x\})
\end{eqnarray}
for all $S \subseteq \mathcal{C}$ and $x \in S\cup \{0\}$.
\end{definition}

In other words, $\mathcal{P}(x,S)$ is the probability that $x$ is ranked first among all products in $S$ and the no-purchase option $0$.
As mentioned earlier, random utility models and stochastic preference models are essentially equivalent, in the sense that they give rise to the same class of discrete choice models:

\begin{theorem}\label{thm_equivalence_RUM_and_stochastic_preferences}
A system of choice probabilities $\mathcal{P}$ is a RUM if and only if $\mathcal{P}$ is induced
by some stochastic preference model.
\end{theorem}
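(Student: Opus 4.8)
The plan is to prove the two implications separately. Both are classical, and in each direction the only real subtlety is bookkeeping with the measure-zero tie events, so I would keep that point explicit.

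For the direction ``stochastic preference model $\Rightarrow$ RUM'', I would build the random utilities directly from ranks. Given a probability measure $\Pr$ on $\mathcal{S}_{N+1}$, for a permutation $\prec$ let $\rho_\prec(x) \in \{0, \dots, N\}$ be the position of $x$ in $\prec$ (with $\rho_\prec(x)=0$ meaning $x$ is the most preferred element). I would then take $(U_0, \dots, U_N)$ to be the discrete random vector supported on the $(N+1)!$ distinct points $\big(-\rho_\prec(0), \dots, -\rho_\prec(N)\big)$, $\prec \in \mathcal{S}_{N+1}$ (distinct because the rank vector determines $\prec$), with the point coming from $\prec$ carrying mass $\Pr(\prec)$. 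Each such point has pairwise distinct coordinates, so $\Pr(U_x = U_y) = 0$ for all distinct $x,y$, and $(U_0,\dots,U_N)$ is a legitimate random utility model in the sense of Definition~\ref{random_utility_characterization}. The remaining step is the observation that for any $S$ and $x \in S \cup \{0\}$ the event $\{U_x = \max\{U_y : y \in S \cup \{0\}\}\}$ is exactly the disjoint union, over all $\prec$ in which $x$ precedes every other element of $S \cup \{0\}$, of the atom coming from $\prec$; summing masses gives $\mathcal{P}(x,S) = \Pr(x \prec y \text{ for all } y \in (S \cup \{0\}) - \{x\})$, which is precisely \eqref{stochastic_preference_probabilities}.

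For ``RUM $\Rightarrow$ stochastic preference model'', suppose $\mathcal{P}$ is induced as in Definition~\ref{random_utility_characterization} by jointly distributed utilities $U_0,\dots,U_N$ with $\Pr(U_x = U_y)=0$ for all distinct $x,y$. For $\prec \in \mathcal{S}_{N+1}$ I would set $\Pr(\prec) := \Pr(U_x > U_y \text{ whenever } x \prec y)$, i.e.\ the probability that the realised utilities are strictly decreasing along $\prec$. Using the no-ties hypothesis, a generic realisation induces a strict total order on $\mathcal{C} \cup \{0\}$, so these events are pairwise disjoint and cover the sample space up to a null set; hence $\sum_\prec \Pr(\prec) = 1$ and $\Pr$ is a probability measure on $\mathcal{S}_{N+1}$. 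Finally, for fixed $S$ and $x \in S \cup \{0\}$, the event $\{U_x = \max\{U_y : y \in S \cup \{0\}\}\}$ coincides, up to a null set, with $\{U_x > U_y \text{ for all } y \in (S \cup \{0\}) - \{x\}\}$, which is the disjoint union of the events defining $\Pr(\prec)$ over those $\prec$ in which $x$ precedes every other element of $S \cup \{0\}$; summing shows that $\mathcal{P}(x,S)$ equals the stochastic-preference choice probability of Definition~\ref{stochastic_preference_characterization}.

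I do not expect a genuine obstacle: this equivalence is classical. The one point that requires care is the null-set bookkeeping just mentioned — invoking $\Pr(U_x = U_y)=0$ to guarantee that utility realisations almost surely produce strict linear orders, so that the ``argmax'' events of Definition~\ref{random_utility_characterization} and the ``ranked-first'' events of Definition~\ref{stochastic_preference_characterization} coincide after discarding a measure-zero set. Everything else is a matter of matching up the two families of events and summing probabilities.
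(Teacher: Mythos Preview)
Your argument is correct and is the standard proof of this classical equivalence. Note, however, that the paper does not actually give its own proof of this theorem: it simply refers the reader to \citet{block1960random} and \citet{koning2003discrete}. Your write-up therefore supplies strictly more than the paper does, and the construction you sketch (negative ranks as utilities in one direction; the pushforward of the utility vector onto the induced strict order in the other, together with the no-ties hypothesis to discard a null set) is exactly the classical argument one finds in those references.
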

For a proof of Theorem~\ref{thm_equivalence_RUM_and_stochastic_preferences} see for instance~\citet{block1960random} or~\citet{koning2003discrete}.

It is a simple exercise to show that the discrete choice model of the tight example proposed in the proof of Theorem \ref{tightness_theorem} is in fact induced by a stochastic preference model. This implies that the three bounds presented in Section~\ref{section_general_performance_guarantees} are tight as well when restricted to RUMs. In particular, they are tight for discrete choice models with submodular demand functions, since RUMs have submodular demand functions.

\subsection{Distance based models}
An important class of discrete choice models that are induced by stochastic preferences are the {\DEF distance based models} \citep{murphy2003mixtures}. A distance based model is defined by a central ranking or preference $R$, a scale parameter $\theta \in \mathbb{R}_{+}$, and a distance function over the rankings $d: S_N \times S_N \to \mathbb{R}_+$. Then, the probability that the individual follows a ranking $r$ is given by
\begin{align*}
f(r|R,\theta) \coloneqq C(\theta) \exp[-\theta  d(r,R)],
\end{align*}
where $C(\theta)$ is a scaling constant that is chosen so that $f(r|R,\theta)$ is a probability distribution.

The most popular class of discrete choice models of this family are the Mallows models \citep{mallows1957non}, which are characterized as those distance based models in which the distance function $d(.,.)$ is the Kendall distance \citep{kendall1938new} (this distance function counts the pairwise disagreements between the rankings). Mallows models have been studied profoundly in voting contexts in the machine learning and statistics literature (see, e.g.~\citet{young1988condorcet} and \citet{diaconis1988group}). Very recently, \citet{jagabathula2015model} have used these models to understand and predict customer behaviour in the context of revenue management.

A natural extension of distance based models are the models obtained by the aggregation of multiple distance based models into a single stochastic preference. These models are known as {\DEF Mixture of Distance Based Models} (see, e.g. \citet{murphy2003mixtures} and \citet{awasthi2014learning}).

To the best of our knowledge, the assortment problem has not been studied in the literature under any distance based model. Since these models are stochastic preference models, they can be induced by a random utility model, and hence our revenue guarantees for the revenue-ordered assortment strategy hold under these models.

\subsection{Mixed Multinomial Logit}

\label{section_MMNL}
One of the most studied discrete choice models is the {\DEF multinomial logit} (MNL) model, first introduced by \citet{luce1959}.
The MNL model is a random utility based discrete choice model in which each product $x$ (including the no-purchase option) has utility $$U_x = v_x + \epsilon_x$$ where $v_x$ is a constant and all $\epsilon_x$ with $x \in \mathcal{C}$ are i.i.d.\ random variables with a Gumbel distribution with zero mean.

Without loss of generality, we may assume that the no-purchase option 0 has a mean utility of zero (i.e.\ $v_0=0$).
Under the MNL model, when an individual is shown a subset of products $S \subseteq \mathcal{C}$, the probability that she will choose product $x \in S \cup \{0\}$ is
$$\mathcal{P}(x,S) = \frac{e^{v_x}}{1 + \sum_{y \in S}e^{v_y}}.$$

The {\DEF Mixed Multinomial Logit} model is an extension of the MNL model in which the vector $V=(v_0,v_1,\dots,v_N)$ is no longer fixed but is now a random vector in $\mathbb{R}^{N+1}$ following some fixed distribution.  In the Mixed MNL model, when given a subset of products $S$, the probability that an individual chooses product $x \in S \cup \{0\}$ is then
$$\mathcal{P}(x,S) = \mathbb{E}\left[\frac{e^{v_x}}{1 + \sum_{y \in S}e^{v_y}}\right],$$
where the expectation is taken w.r.t.\ the random vector $V=(v_0,v_1,\dots,v_N)$.

Although the revenue-ordered assortments strategy is optimal under the MNL model \citep{talluri2004revenue},
this is no longer the case for the Mixed MNL model.
In fact, \cite{rusmevichientong2014assortment} gave an example
where the vector $V=(v_0,v_1,\dots,v_N)$ can only take two distinct values and yet the strategy is not optimal.
On the other hand, since every Mixed MNL model is also a random utility model, the guarantee from Theorem~\ref{theorem_revenue_ratio} applies:

\begin{corollary}
The revenue-ordered assortments strategy approximates the optimum revenue of the assortment problem under a Mixed MNL model to within a factor of $\frac{1}{(1 + \ln \rho)}$, where $\rho \coloneqq r_{k} / r_{1}$.
\end{corollary}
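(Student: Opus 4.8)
The plan is to deduce the corollary directly from Theorem~\ref{theorem_revenue_ratio} by checking that every Mixed MNL model satisfies the hypotheses of that theorem, namely that it is a regular discrete choice model. By Lemma~\ref{lemma_all_RUMS_are_regular} it suffices to show that every Mixed MNL model is a random utility model in the sense of Definition~\ref{random_utility_characterization}; once this is established, Theorem~\ref{theorem_revenue_ratio} applies verbatim and yields the approximation factor $1/(1+\ln\rho)$ with $\rho = r_k/r_1$.

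To show that the Mixed MNL model is a RUM, I would exhibit an explicit random utility representation. Let $V = (v_0, \dots, v_N)$ be the random vector defining the Mixed MNL model, let $(\epsilon_0, \dots, \epsilon_N)$ be i.i.d.\ zero-mean Gumbel random variables drawn independently of $V$, and set $U_x \coloneqq v_x + \epsilon_x$ for each $x \in \mathcal{C} \cup \{0\}$. These $N+1$ random variables are jointly distributed on $\mathbb{R}^{N+1}$ under the product $\Pr$ of the law of $V$ and that of $(\epsilon_0, \dots, \epsilon_N)$. I would then check the no-ties condition of Definition~\ref{random_utility_characterization}: conditionally on $V$, for distinct $x, y$ the difference $U_x - U_y$ is a shift of the difference of two independent Gumbel variables and hence has an absolutely continuous (logistic) law, so $\Pr(U_x = U_y \mid V) = 0$ almost surely; integrating over $V$ gives $\Pr(U_x = U_y) = 0$.

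It then remains to verify that the system of choice probabilities induced by this random utility model coincides with the Mixed MNL one. By the tower property of expectation, for every $S \subseteq \mathcal{C}$ and $x \in S \cup \{0\}$ we have $\Pr(U_x = \max\{U_y : y \in S \cup \{0\}\}) = \mathbb{E}[\Pr(U_x = \max\{U_y : y \in S \cup \{0\}\} \mid V)]$, and the inner conditional probability equals $e^{v_x}/(1 + \sum_{y \in S} e^{v_y})$ by the classical closed form for the (non-mixed) MNL choice probabilities, which is the standard computation with maxima of Gumbel variables. Taking expectations recovers exactly the Mixed MNL formula $\mathcal{P}(x, S) = \mathbb{E}[e^{v_x}/(1 + \sum_{y \in S} e^{v_y})]$, so the Mixed MNL model is a RUM and the corollary follows. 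I do not anticipate any genuine obstacle: this is a routine reduction, and the only points needing a line of justification are the no-ties condition and the identification of the RUM choice probabilities with the Mixed MNL ones, both of which are standard.
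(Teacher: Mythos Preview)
Your proposal is correct and follows exactly the route the paper takes: the paper simply remarks that every Mixed MNL model is a random utility model and then invokes Theorem~\ref{theorem_revenue_ratio}, without spelling out the details. Your explicit verification that the Mixed MNL is a RUM (constructing $U_x = v_x + \epsilon_x$ with $V$ random, checking the no-ties condition, and matching the choice probabilities via conditioning) just fills in what the paper leaves implicit.
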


This improves the recent analysis by \citet{rusmevichientong2014assortment} of the revenue-ordered assortments strategy under a Mixed MNL model, who obtained an approximation factor of $\frac{1}{e(1 + \ln \rho)}$.

\subsection{Beyond random utility models} \label{section_beyond_rum}
In this section we briefly describe some discrete choice models considered in the literature that are not random utility models but still satisfy the regularity axiom.

\citet{fudenberg2015stochastic} proposed a choice model called {\em Additive Perturbed Utility} (APU) model, in which consumers are endowed with an utility function $u: \mathcal{C} \cup \{0\} \to \mathbb{R}$ over the alternatives (including the no-choice option) and a perturbation function that can reward choice randomisation. Specifically, this perturbation function $c:[0,1] \to \mathbb{R} \cup \{\infty\}$ is assumed to be strictly convex over $(0,1)$, and such that $\lim_{q \to 0}c'(q)=-\infty$.

For $s \in \mathbb{N}$ let $F(s) \coloneqq \{(p_0,p_1,\hdots,p_s) \in \mathbb{R}^{s+1} | p_i \geq 0$  for each $i \in \{0,1,\hdots,s\}$ and $\sum_{i=0}^s p_i=1 \}$.
Given a choice set $S \subseteq \mathcal{C}$ whose elements are enumerated as $a_1,a_2,\hdots,a_s$ in order w.r.t.\ $\mathcal{C}=\{1,2, \dots, N\}$
(i.e.\ $a_{i} < a_{i+1}$ for $i < s$), let $p^*(S) \in \mathbb{R}^{s+1}$ denote the point $p=(p_0,p_1,\hdots,p_s) \in F(s)$ maximising
$$\sum_{i=0}^s (u(a_{i})\cdot p_i-c(p_i)).$$
(As expected, $a_{0}=0$ denotes the no-choice option; also, we remark that $p^*(S)$ is uniquely defined, as follows from the strict convexity of the perturbation function $c$.)
The system of choice probabilities $\mathcal{P}$ of the model is then induced by these
vectors $p^*(S)$, by letting
$$(\mathcal{P}(0,S),\mathcal{P}(a_1,S), \hdots, \mathcal{P}(a_s,S)) \coloneqq p^*(S).$$

If the perturbation function is $c(x)= \alpha \cdot x \ln(x)$ with $\alpha > 0$ a fixed constant, then the model is equivalent to the Multinomial Logit model \citep{anderson1992discrete}. Although every APU model satisfies the regularity axiom (\citet[Theorem~1]{fudenberg2015stochastic}), it can be shown that there are APU models that are not RUM even when there are four alternatives in the universe $\mathcal{C}$ (\citet[Example~4]{fudenberg2015stochastic}).

Inspired by the experimental evidence that consumers do not pay attention to all alternatives in the choice set, \citet{aguiar2015stochastic} has recently introduced a new choice model called the {\em Hitting Fuzzy Attention Model} (H-FAM). Under H-FAM, the attention of the consumer to an alternative is not binary but can lie in a continuum between being not aware at all of the alternative, to being fully aware of it. To formally describe H-FAM we need to define a function called {\em substitutable attention capacity}. A substitutable attention capacity is a function $\phi: 2^{\mathcal{C}} \to [0,1]$ that is monotone, i.e.\ $\phi(A) \leq \phi(B)$ for all $A \subseteq B \subseteq \mathcal{C}$ and submodular, i.e.\ $\phi(A \cup\{x\}) - \phi(A) \geq \phi(B\cup \{x\}) - \phi(B)$ for all $A \subseteq B \subseteq \mathcal{C}$, and $x\in \mathcal{C}$. Intuitively, given a choice set $S\subseteq \mathcal{C}$, $\phi(S)$ represents the probability that the consumer would consider {\em at least} one alternative from $S$. The H-FAM is composed of a pair  $(\prec,\phi)$ where $\prec \in S_{N}$ is a strict preference order of the alternatives and $\phi$ is a substitutable attention capacity. We are now ready to define the system of choice probabilities for H-FAMs.

Suppose $(\prec,\phi)$ is defined as above. Then the system of choice probabilities $\mathcal{P}$ induced by this H-FAM is obtained by setting for $x \in \mathcal{C}$,
\begin{displaymath}
\mathcal{P}(x,S) \coloneqq \left\{ \begin{array}{ll}  \displaystyle \phi(\{x\} \cup \{y \in \mathcal{S} | y \prec x\}) - \phi(\{y \in \mathcal{S} | y \prec x\})  & \quad \textrm{if $x \in S$} \\[3ex]
0 & \quad \textrm{otherwise}\\
\end{array} \right.
\end{displaymath}
and letting $\mathcal{P}(0,S) \coloneqq 1 - \sum_{x\in S} \mathcal{P}(x,S)$.

The H-FAM, which contains as a special case the recent RUM choice model based on the bounded rationality proposed by \citet{manzini2014stochastic}, satisfies the regularity axiom. Nevertheless, \citet{aguiar2015stochastic} proved that H-FAM are not contained in RUM nor in APU.

Recently, \citet{mcclellon2015non} proposed another way to represent choice models based on a function $f: 2^{S_{N+1}} \to [0,1]$ such that $f(\emptyset)=0$, $f(S_{N+1})=1$ and $f(E) \leq f(F)$ when $E \subseteq F$. Thus, the domain of function $f$, known as the capacity function, is the collection of all subsets of strict preferences among the elements in $\mathcal{C} \cup \{0\}$. Given a strict preference $\prec \in S_{N+1}$ and a choice set $S \subseteq \mathcal{C}$, we say that $x \in S \cup \{0\}$ is $\prec${\em-optimal} w.r.t.\ $S$ if $x$ is preferred under $\prec$ among all other alternatives in $S \cup \{0\}$.

The choice model characterized by the capacity function $f$ is obtained by setting, for $x\in \mathcal{C} \cup \{0\}$,
\begin{displaymath}
\mathcal{P}(x,S) \coloneqq \left\{ \begin{array}{ll}  \displaystyle f(\{\prec \in S_{N+1} | x \textrm{ is } \prec\textrm{-optimal} \textrm{ w.r.t.\ } S \})  & \quad \textrm{if $x \in S \cup \{0\}$} \\[3ex]
0 & \quad \textrm{otherwise.}\\
\end{array} \right.
\end{displaymath}

While there exist choice models that are characterized by a capacity function $f$ but are not RUM, it can be shown that they all satisfy the regularity axiom. In fact, one of the main results of \citet{mcclellon2015non} is that a discrete choice model is regular if and only if it can be characterized by a capacity function $f$.

To the best of our knowledge, the assortment problem has not been studied in the literature under any of these models. Since all these models satisfy the regularity axiom, our revenue guarantees for the revenue-ordered assortment strategy hold.

\subsection{Envy-Free Pricing} \label{section_envy_free_pricing}

In this section we observe that certain envy-free pricing problems studied in theoretical computer science can be seen as special cases of the assortment problem described in Section~\ref{section_general_assortment_problem}.
The revenue-ordered assortments strategy then corresponds to the so-called uniform pricing strategy studied in that area.

In an envy-free pricing problem, it is assumed that there is a single seller (a {\em monopolist}) who has an unlimited supply of $n$ different types of products (or items) that are all offered to a set of $m$ consumers.
The seller assigns prices to the product types and then each consumer buys at most one product.
The seller's problem consists in choosing the prices so that the revenue obtained from the resulting sales is maximised. Naturally, this revenue depends on the behaviour of the consumers.
The corresponding pricing problems are called {\DEF unit demand envy-free pricing} ({\DEF UDP})  and differ only by their assumptions on the consumers'  behaviours.
The two main ones studied in the literature give rise to the  $UDP_{min}$ and $UDP_{rank}$ problems, which we describe shortly.
Before doing so, let us make a comment on the meaning of the adjective `envy-free'  appearing in these problems' names:
When the problems were first defined, the seller not only had to assign item prices, but also had to assign items to consumers, under the constraint that no consumer would have preferred receiving an item that was assigned to someone else, that is, the allocation should be {\em envy-free}.
Naturally, one can equivalently assume that each customer picks their preferred choice once prices are set, which is how the problems are usually phrased in the literature.

In the $UDP_{min}$ problem, each consumer $i \in [m]$ has an associated set $B_i \subseteq [n]$ of items that she is interested in buying, and a non-negative number $v_i$ (a {\DEF valuation}) which is the maximum price she is willing to pay to buy an item from that set.
Given a price assignment $p: [n] \rightarrow \mathbb{R}_{>0}$, consumer $i$ buys
the cheapest item from $B_i$ (breaking ties arbitrarily) if there is one with price at most $v_i$, or
none at all if there is none.

In the $UDP_{rank}$ problem, each consumer $i \in [m]$ has  an associated ranking $\phi_i: [n] \to [n]$ of the products and a non-negative number $v(i,x)$ (a {\DEF valuation})  for each product $x \in [n]$ modelling her willingness to pay for product $x$. Given a price assignment $p: [n] \rightarrow \mathbb{R}_{>0}$,  consumer $i$ then considers items in order of her preference list $\phi_i$ and then buys the first item $x$ that has price at most $v(i,x)$.

These two problems were introduced by Rusmevichientong in his Ph.D. thesis \citep{rusmevichientong2003non} and subsequently in \citep{rusmevichientong2006nonparametric}.
\citet{aggarwal2004algorithms} and \citet{guruswami2005profit} analysed the revenue guarantees of a simple pricing strategy called
{\DEF uniform pricing}: The seller put the same price $q$ on all products and chooses $q$ so
as to maximise the revenue. This can be seen as the revenue-ordered assortments strategy on an
auxiliary assortment problem, as will be explained in the next section.

We close this section by mentioning a variant of $UDP_{min}$ and  $UDP_{rank}$ that was also studied in the literature:
The seller is moreover required to choose a price assignment $p: [n] \rightarrow \mathbb{R}_{>0}$ that satisfies a  price ordering (or {\em price ladder}) of the $n$ items. This ordering, which is part of the problem instance, is given as a permutation $\psi: [n] \to [n]$. The price assignment $p$ chosen by the seller must then satisfy
$$p(\psi(x)) \leq p(\psi(y))$$ for every products $x,y \in [n]$ such that $\psi(x) < \psi(y)$.
The resulting problems are known as $UDP_{min}$ {\DEF with price ladder} ($UDP_{min}PL$) and $UDP_{rank}$ {\DEF with price ladder} ($UDP_{rank}PL$).
This variant was introduced by \citet{aggarwal2004algorithms}.

\subsubsection{$UDP_{min}$ as an assortment problem}
In this section we describe how the $UDP_{min}$ problem can be seen as an assortment problem under
some discrete choice model.
The main interest of this observation is that the resulting discrete choice model satisfies the regularity condition, axiom~\eqref{axiom:regularity}, and thus falls within the scope of the models
studied in this paper.

As before, suppose that there are $n$ products and $m$ consumers, and let $v_i$ denote the valuation of consumer $i \in [m]$.
For simplicity, we assume without loss of generality that consumers' valuations are such that $v_1 \leq v_2 \leq \cdots \leq v_m$.
First we start with a standard observation about the $UDP_{min}$ problem, namely, that in an optimal solution prices can be assumed to belong to the set of consumer valuations (a proof is given in the Appendix):

\begin{lemma} \label{lemma_pricing_to_values}
There exists an optimal price assignment $p: [n] \rightarrow \mathbb{R}_{>0}$
such that $p(x) \in \{v_1\dots,v_m\}$ for all $ x\in [n]$.
\end{lemma}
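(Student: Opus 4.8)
The plan is to take an arbitrary optimal price assignment $p$ and transform it, without ever decreasing the revenue, into one whose values all lie in $V \coloneqq \{v_1,\dots,v_m\}$. I would do this in two stages: first push down the prices that are too high, then round the remaining prices \emph{up} to the nearest valuation.

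\emph{Stage 1: reduce to the case $p(x) \leq v_m$ for every product $x$.} Any product priced above $v_m$ is bought by no consumer, since consumer $i$ only ever buys an item of price at most $v_i \leq v_m$; so we may lower the price of every such product to $v_m$. To check that this does not decrease the revenue, fix a consumer $i$ and compare her payment before and after the change. If $v_i < v_m$, then the items of $B_i$ she can afford, together with their prices, are unchanged, so her choice and payment are unchanged. If $v_i = v_m$, then her set of affordable items can only grow, and every newly affordable item has price exactly $v_m$, which is at least the price she was already paying (if any); hence, if she was buying something, the cheapest affordable item still costs the same, and in any case her payment does not decrease --- indeed it strictly increases from $0$ if she could previously afford nothing in $B_i$ but now can. (If $v_m = 0$ no consumer ever buys anything at a positive price and the statement is vacuous, so we may assume $v_m>0$.)

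\emph{Stage 2: rounding up.} With $p(x)\leq v_m$ for all $x$, define $p'(x) \coloneqq \min\{v\in V : v \geq p(x)\}$; this is well defined, lies in $V$, and satisfies $p(x) \leq p'(x) \leq v_m$. It then remains to see that the revenue of $p'$ is at least that of $p$. Fix a consumer $i$; the case where $i$ buys nothing under $p$ is trivial (her payment was $0$), so suppose she buys some item $x^{*}\in B_i$, so that $p(x^{*}) \leq v_i$. Since $v_i \in V$, we get $p'(x^{*}) \leq v_i$, hence under $p'$ consumer $i$ also buys some item $y^{*}\in B_i$, namely the cheapest one of $B_i$ she can afford under $p'$; in particular $p'(y^{*})\leq v_i$. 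Then $p(y^{*}) \leq p'(y^{*}) \leq v_i$, so $y^{*}$ is affordable to $i$ under $p$ as well, and the minimality of $x^{*}$ gives $p(x^{*}) \leq p(y^{*})$. Chaining these inequalities, the payment of $i$ under $p'$ equals $p'(y^{*}) \geq p(y^{*}) \geq p(x^{*})$, i.e.\ at least her payment under $p$. Summing over all consumers, $p'$ has revenue at least $\OPT$, and all its prices lie in $V$, which proves the lemma.

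The step I expect to require the most care is Stage 2: rounding the prices up can change which item a given consumer selects, since she always takes the cheapest affordable one, so one cannot simply argue product by product. The crux is that $p'\geq p$ holds pointwise and that any item affordable to $i$ under $p'$ is still affordable to her under $p$; together these force the new choice $y^{*}$ to have been an affordable (though not necessarily cheapest) option under $p$, whence $p(x^{*}) \leq p(y^{*}) \leq p'(y^{*})$. Stage 1 is routine once one notices that the only consumers affected are those with $v_i = v_m$ and that their newly affordable items all carry the same price $v_m$.
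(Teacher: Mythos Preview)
Your proof is correct and follows essentially the same idea as the paper's: round every price up to the nearest valuation (handling separately prices exceeding $v_m$), and argue that revenue does not decrease. The only cosmetic difference is that the paper fixes one offending item at a time and iterates, whereas you perform the rounding globally in two stages and give a per-consumer payment comparison; the underlying mechanism is identical.
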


The following theorem shows that the $UDP_{min}$ problem is a special case of the assortment problem
under a regular discrete choice model.

\begin{theorem} \label{main_theorem_UDP_min}
Consider an instance of the $UDP_{min}$ problem with
$n$ products and $m$ consumers, and let $v_i$ denote the valuation of consumer $i \in [m]$.
Then one can define an instance of the assortment problem under a regular discrete choice model
(i.e.\ a finite set $\mathcal{C}$,
a revenue function $r: \mathcal{C} \rightarrow \mathbb{R}_{>0}$, and
a system of choice probabilities $\mathcal{P}$ satisfying axioms (\ref{discrete_ineq_1}), (\ref{discrete_eq_1}), (\ref{discrete_eq_2}) and (\ref{axiom:regularity}))
with the same optimal revenue. Moreover,
the uniform pricing and revenue-ordered assortments strategies on respective instances are
equivalent in the following sense:
\begin{itemize}
\item for each $i\in [m]$ there exists $S\subseteq \mathcal{C}$ consisting of all
products $y' \in \mathcal{C}$ with $r(y') \geq r(y)$ for some $y \in \mathcal{C}$ such that
the price assignment assigning price $v_i$ to each product $x\in [n]$
for the $UDP_{min}$ instance has the same revenue as the assortment $S$, and \\
\item for each $y\in \mathcal{C}$, there exists $i \in [m]$ such that
the assortment consisting of all products $y'\in \mathcal{C}$ with $r(y') \geq r(y)$ has the same
revenue as the price assignment assigning price $v_i$ to each product of the $UDP_{min}$ instance.
\end{itemize}
\end{theorem}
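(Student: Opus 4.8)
The plan is to realise the given $UDP_{min}$ instance as an assortment instance whose products are (item, price-level) pairs. First I would discard all consumers of valuation $0$ (they never buy at positive prices, and the uniform price $0$ is not a valid price assignment anyway), so that from now on every $v_{i}>0$; if no consumer remains the claim is trivial, so assume otherwise. Let $w_{1}<w_{2}<\cdots<w_{K}$ be the distinct valuations among $v_{1},\dots,v_{m}$; by Lemma~\ref{lemma_pricing_to_values} it suffices to compare against price assignments taking values in $\{w_{1},\dots,w_{K}\}$. Set $\mathcal{C}\coloneqq[n]\times[K]$, a finite set, and $r(x,t)\coloneqq m\cdot w_{t}>0$, the intended meaning of offering $(x,t)$ being ``item $x$ is made available at price $w_{t}$'', so that offering several copies of an item makes its effective price the cheapest one offered. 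Formally, to $S\subseteq\mathcal{C}$ I associate the price assignment $p_{S}(x)\coloneqq\min\{w_{t}:(x,t)\in S\}$ (with $\min\emptyset\coloneqq+\infty$, meaning $x$ is withheld), fix once and for all a tie-breaking total order on $[n]$, let $b_{S}(c)\in B_{c}\cup\{\bot\}$ be the item bought by consumer $c$ under $p_{S}$ ($\bot$ if none), and define
\[
\mathcal{P}\bigl((x,t),S\bigr)\coloneqq
\begin{cases}
\dfrac{1}{m}\bigl|\{c\in[m]:b_{S}(c)=x\}\bigr| & \text{if }(x,t)\in S\text{ and }t=\min\{t':(x,t')\in S\},\\[1.5ex]
0 & \text{otherwise,}
\end{cases}
\]
together with $\mathcal{P}(0,S)\coloneqq1-\sum_{(x,t)\in S}\mathcal{P}((x,t),S)$.

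Next I would verify the four axioms. Non-negativity~(\ref{discrete_ineq_1}) and axiom~(\ref{discrete_eq_1}) are immediate from the definition, and $\sum_{(x,t)\in S}\mathcal{P}((x,t),S)=\tfrac1m\bigl|\{c:b_{S}(c)\neq\bot\}\bigr|\le1$ gives~(\ref{discrete_eq_2}) as well as $\mathcal{P}(0,S)\ge0$. The crux is regularity~(\ref{axiom:regularity}). Note that $S\subseteq S'$ implies $p_{S'}\le p_{S}$ pointwise, so lowering prices can only: (a) keep every consumer who was buying under $p_{S}$ still buying under $p_{S'}$ --- which gives $\sum_{(x,t)\in S}\mathcal{P}((x,t),S)\le\sum_{(x,t)\in S'}\mathcal{P}((x,t),S')$ and hence $\mathcal{P}(0,S)\ge\mathcal{P}(0,S')$; and (b) when $(x,t)\in S$ satisfies $\mathcal{P}((x,t),S)>0$ and $t$ is still the minimal level of $x$ in $S'$, so that $p_{S'}(x)=p_{S}(x)$, \emph{prevent} any consumer not buying $x$ under $p_{S}$ from buying $x$ under $p_{S'}$: the price of $x$ is unchanged while every competing item of her demand set only became weakly cheaper, so, using the fixed tie-break, she still prefers something else, and the count defining $\mathcal{P}((x,t),\cdot)$ does not increase. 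In the remaining cases $\mathcal{P}((x,t),S')=0$ --- either $t$ is no longer minimal for $x$ in $S'$, or already $\mathcal{P}((x,t),S)=0$ because some $(x,t')$ with $t'<t$ lies in $S\subseteq S'$ --- so $\mathcal{P}((x,t),S)\ge\mathcal{P}((x,t),S')$ in every case.

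Then I would establish the numerical correspondence. As $w$ is increasing, $p_{S}(x)=w_{t}$ precisely when $(x,t)\in S$ and $t$ is minimal, so for every $S\subseteq\mathcal{C}$
\[
\sum_{(x,t)\in S}\mathcal{P}\bigl((x,t),S\bigr)\,r(x,t)
=\sum_{x\in[n]}\bigl|\{c:b_{S}(c)=x\}\bigr|\cdot p_{S}(x)
=\sum_{c:\,b_{S}(c)\neq\bot}p_{S}\bigl(b_{S}(c)\bigr),
\]
which is exactly the $UDP_{min}$ revenue of $p_{S}$ (reading the price $+\infty$ as any value exceeding $\max_{i}v_{i}$). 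Thus every $S$ yields a genuine price assignment of equal revenue, so the optimal revenue of the assortment instance is at most that of the $UDP_{min}$ instance; conversely, by Lemma~\ref{lemma_pricing_to_values} some optimal $UDP_{min}$ price assignment $p$ has values in $\{w_{1},\dots,w_{K}\}$, and then $p=p_{S}$ for $S\coloneqq\{(x,t):p(x)=w_{t}\}$, giving the reverse inequality. Hence the two optimal revenues coincide.

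Finally, the revenue-ordered assortment with threshold $m\,w_{j}$ is $S_{j}=\{(x,t)\in\mathcal{C}:t\ge j\}$, for which $p_{S_{j}}$ is the constant price $w_{j}$; by the identity above its revenue equals the $UDP_{min}$ revenue of the uniform price $w_{j}$, and $w_{j}=v_{i}$ for any consumer $i$ with valuation $w_{j}$ (at least one exists). This yields both displayed equivalences: given $i\in[m]$, take $j$ with $v_{i}=w_{j}$ and the assortment $S_{j}$; given a product $y=(x,t)\in\mathcal{C}$, the revenue-ordered assortment of all $y'$ with $r(y')\ge r(y)$ is $S_{t}$, matching the uniform price $w_{t}=v_{i}$ for a consumer $i$ with valuation $w_{t}$. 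I expect the main obstacle to be item~(b) of the regularity check: one must argue carefully, invoking the fixed tie-breaking rule, that no consumer ever switches \emph{onto} a product $(x,t)$ whose effective price was unchanged by enlarging the offered set.
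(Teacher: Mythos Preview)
Your construction is correct and follows the same high-level idea as the paper: products in $\mathcal{C}$ are (item, price-level) pairs, the revenue of $(x,t)$ is $m$ times the price level, and offering an assortment $S$ amounts to setting the price of each item to the cheapest level at which it appears in $S$. The revenue identity and the correspondence between uniform pricing and revenue-ordered assortments then drop out exactly as you describe.

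The one genuine difference is how ties among cheapest items in a consumer's demand set are handled. You fix a deterministic tie-breaking order on $[n]$ and put all of consumer $c$'s mass on the single winning pair $(b_{S}(c),\,t)$ with $t$ minimal for that item; the paper instead spreads consumer $i$'s mass uniformly over the set
\[
Q_{i}(S)=\bigl\{(x,v)\in S:\; x\in B_{i},\; v\le v_{i},\; v'\ge v\text{ for all }(x',v')\in S\text{ with }x'\in B_{i}\bigr\},
\]
i.e.\ over \emph{all} cheapest affordable pairs. The paper's choice makes regularity almost automatic (if $(x,v)\in Q_{i}(S')$ then $(x,v)\in Q_{i}(S)$ and $Q_{i}(S)\subseteq Q_{i}(S')$, so $1/|Q_{i}(S)|\ge 1/|Q_{i}(S')|$), at the cost of a slightly heavier definition. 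Your choice keeps the definition lighter but pushes the work into the regularity check, precisely the point you flag as the main obstacle. Your argument for (b) is sound: if $p_{S'}(x)=p_{S}(x)$ and $c$ bought some $y\neq x$ under $p_{S}$, then $p_{S'}(y)\le p_{S}(y)\le p_{S}(x)=p_{S'}(x)$; if the inequality is strict $x$ is not cheapest, and if equality holds throughout then $y$ and $x$ were already tied under $p_{S}$, so the fixed tie-break still favours $y$. (Note that your phrasing restricts (b) to the case $\mathcal{P}((x,t),S)>0$, but the count-does-not-increase argument applies verbatim when the count is zero as well, so the case analysis is complete.) Both approaches work; the paper's randomised tie-breaking trades a bit of definitional overhead for a shorter regularity verification, while yours does the opposite.
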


\begin{proof}
As before, we may assume that $v_1 \leq v_2 \leq \cdots \leq v_m$.
For each $i\in [m]$, let $B_i \subseteq [n]$ denote the subset of items consumer $i$ is interested in buying.
Define the set $\mathcal{C}$ of products for the assortment problem as follows:
$$\mathcal{C} \coloneqq [n] \times \{v_1,\dots,v_m\}.$$
Thus  $\mathcal{C}$ consists of all pairs of an item and a customer valuation.
The revenue function $r: \mathcal{C} \rightarrow \mathbb{R}_{>0}$
for the assortment problem is defined by setting
$$r( (x,v) ) \coloneqq m \cdot v$$
for all $(x, v) \in \mathcal{C}$.
(The purpose of the scaling factor $m$ is to cancel out the $1/m$ factor in the upcoming definition $\mathcal{P}$.)

Next we define the system of choice probabilities $\mathcal{P}$.
To do so, we first need to define the following sets:
$$
Q_i(S) \coloneqq \left\{
(x, v) \in S: x\in B_i, v \leq v_i, \textrm{ and } v' \geq v \; \; \forall
(x', v') \in S \textrm{ s.t.\ }x' \in B_i
\right\}
$$
for each $i\in [m]$ and $S\subseteq \mathcal{C}$.
Equipped with this notation, we define $\mathcal{P}$ as follows.
For each  $S\subseteq \mathcal{C}$ and $y\in \mathcal{C} \cup \{0\}$, let
\begin{displaymath}
\mathcal{P}(y,S) \coloneqq \frac{1}{m} \sum_{i=1}^m \mathcal{P}_i(y,S)
\end{displaymath}
where $\mathcal{P}_i$ ($i\in [m]$) is defined by setting:
\begin{displaymath}
\mathcal{P}_i((x,v),S) \coloneqq \left\{ \begin{array}{ll}  \displaystyle \frac{1}{|Q_i(S)|} & \quad \textrm{if $(x, v)\in Q_i(S)$} \\[3ex]
0 & \quad \textrm{otherwise}\\
\end{array} \right.
\end{displaymath}
for each $(x, v) \in \mathcal{C}$,
and
\begin{displaymath}
\mathcal{P}_i(0,S) \coloneqq 1 - \sum_{(x,v) \in S} \mathcal{P}_i((x,v),S).
\end{displaymath}

Let us prove that $\mathcal{P}$ is a regular discrete choice model.
Clearly  $\mathcal{P}(y, S) \geq 0$ for every $y\in \mathcal{C} \cup \{0\} $ and $S\subseteq \mathcal{C}$,
thus axiom~\eqref{discrete_ineq_1} is  satisfied.
If  $(x, v)\in \mathcal{C}$ and $S\subseteq \mathcal{C} \setminus \{(x,v)\}$  then
$\mathcal{P}((x, v), S) = 0$ since $(x, v) \notin Q_i(S)$ for each $i\in [m]$.
Hence,  axiom~\eqref{discrete_eq_1} is satisfied.
Also, for each $S\subseteq \mathcal{C}$ we have
$\sum_{(x,v) \in S}\mathcal{P}((x,v), S)  \leq 1$ since
$$\sum_{(x,v) \in S}\mathcal{P}_i((x,v), S)  \leq 1$$
for each $i \in [m]$. (In fact, the left-hand side
is  equal to either $0$ or $1$, depending on whether $Q_i(S)$ is empty or not.)
This implies that axiom~\eqref{discrete_eq_2} holds.
Therefore, it only remains to check axiom~\eqref{axiom:regularity}, the regularity condition.
Clearly, by the definition of $\mathcal{P}$, it is enough to show
that $\mathcal{P}_i$ satisfies axiom~\eqref{axiom:regularity} for each $i\in [m]$.
Let thus $i\in [m]$, let  $ S\subseteq S'\subseteq \mathcal{C} $, and let $y\in S \cup \{0\}$.
We wish to show that $\mathcal{P}_i(y, S) \geq \mathcal{P}_i(y, S')$.

First suppose that $y=(x, v) \in S$.
If $(x, v) \notin Q_i(S')$ then $\mathcal{P}_i((x,v), S')=0$ and
$\mathcal{P}_i((x,v), S) \geq \mathcal{P}_i((x,v), S')$ holds trivially,
so assume $(x, v) \in Q_i(S')$.
By the definition of $Q_i(S)$, it follows that $(x, v) \in Q_i(S)$ as well.
In fact, $Q_i(S) \subseteq Q_i(S')$ in this case.
This implies that
$\mathcal{P}_i((x,v), S) \geq \mathcal{P}_i((x,v), S')$, as desired.

Next, assume that $y=0$ (the no-choice option).
We will use the following observation:
\begin{displaymath} \label{obs_1_case_y_0}
\sum_{(x,v) \in T}\mathcal{P}_i((x, v), T) = \left\{ \begin{array}{ll}  \displaystyle 1 & \quad \textrm{if $Q_i(T) \neq \emptyset $} \\[1ex]
0 & \quad \textrm{otherwise}\\
\end{array} \right.
\end{displaymath}
for every $T \subseteq \mathcal{C}$.
If $Q_i(S) = \emptyset$, using this observation with $T=S$ we obtain
$$\mathcal{P}_i(0, S)
= 1 - \sum_{(x,v)\in S}\mathcal{P}_i((x, v), S)
= 1
\geq 1 - \sum_{(x,v)\in S'}\mathcal{P}_i((x, v), S')
= \mathcal{P}_i(0, S')
$$
as desired.
Now suppose that $Q_i(S) \neq \emptyset$.
Observe that this implies $Q_i(S') \neq \emptyset$ as well.
We then have
$$\mathcal{P}_i(0, S)
= 1 - \sum_{(x,v)\in S}\mathcal{P}_i((x, v), S)
= 1 - \sum_{(x,v)\in S'}\mathcal{P}_i((x, v), S')
= \mathcal{P}_i(0, S') $$
where the second equality follows from the above observation (with $T=S$ and $T=S'$).
Therefore, axiom~\eqref{axiom:regularity} is satisfied.

Next we prove that the maximum revenue achievable on each instance is the same.
Given a set $S\subseteq \mathcal{C}$ we define a corresponding price assignment
$p_S$ by setting
$$
p_S(x) \coloneqq \left\{ \begin{array}{ll}  \min\{v: (x, v) \in S\}  & \quad \textrm{if $\exists v$ s.t.\ $(x, v) \in S$} \\[1ex]
+ \infty & \quad \textrm{otherwise}\\
\end{array} \right.
$$
for each $x\in [n]$.
(Remark: If one wishes to insist on $p_S$ being real-valued, simply replace
 $+\infty$ in the above definition by any real larger than $v_m$.)
The revenue resulting from choosing assortment $S$ can be expressed as follows:
\begin{dmath*}[compact]
\sum_{y \in S}\mathcal{P}(y,S) \cdot r(y)
=  \frac{1}{m}\sum_{i \in [m]} \sum_{y \in S}\mathcal{P}_i(y,S) \cdot r(y)
= \sum_{i \in [m]} \sum_{(x,v) \in S}\mathcal{P}_i((x,v),S) \cdot v
=  \sum_{i \in [m]} \sum_{(x,v) \in Q_{i}(S)} \displaystyle \frac{1}{|Q_i(S)|}  \cdot v
=  \sum_{i \in [m], Q_{i}(S) \neq \emptyset} \min\{v: (x, v) \in S \textrm{ for some } x\in B_{i}\}
\end{dmath*}
Now, observe that $Q_{i}(S)$ is not empty if and only if there exists
a product $x\in B_{i}$ with price $p_{S}(x) \leq v_{i}$, that is,
if and only if customer $i$ buys some product in the $UDP_{min}$ instance with price assignment $p_{S}$.
Furthermore, if she does, then she buys some product $x'$ among the cheapest ones in $B_{i}$,
giving a revenue of $p_{S}(x') = \min\{v: (x', v) \in S\} = \min\{v: (x, v) \in S, x\in B_{i}\}$.
Thus, we deduce that the revenue of assortment $S$ is equal to the revenue resulting from price assignment $p_{S}$.

It follows that the maximum revenue achievable on this assortment problem instance  is at most that of the $UDP_{min}$ instance.
Furthermore, if $S \subseteq \mathcal{C}$
is an assortment consisting of all products $y'\in \mathcal{C}$ with $r(y') \geq r(y)$ for some $y\in \mathcal{C}$
then the corresponding price assignment $p_S$ satisfies
$p_S(x) = \min\{v: (x, v) \in S\} = p_S(x')$ for all $x, x'\in [n]$ and is thus uniform, as desired.
This shows one direction of the theorem.

To prove the other direction, suppose that
$p$ is a price assignment such that $p(x) \in \{v_1, \dots, v_m\}$ for all $x \in [n]$.
(Recall that there is an optimal price assignment of this form, by Lemma~\ref{lemma_pricing_to_values}.)
We define a corresponding assortment $S_p \subseteq  \mathcal{C}$ by setting
$$S_p \coloneqq \{ (x, v) : x \in [n], v\in\{v_1, \dots, v_m\}, v \geq p(x) \}.$$
Rewriting the revenue provided by assortment $S_p$ similarly as before, we see:
\begin{dmath*}[compact]
\sum_{y \in S}\mathcal{P}(y,S_p) \cdot r(y)
=  \frac{1}{m}\sum_{i \in [m]} \sum_{y \in S}\mathcal{P}_i(y,S_p) \cdot r(y)
=  \sum_{i \in [m]} \sum_{(x,v) \in S}\mathcal{P}_i((x,v),S) \cdot v
=  \sum_{i \in [m]} \sum_{x \in [n]}\mathcal{P}_i((x,p(x)),S_p) \cdot p(x)
=  \sum_{i \in [m], Q_{i}(S_p) \neq \emptyset}
{\min\{p(x): (x, p(x)) \in S \textrm{ for some } x\in B_{i}\} }
=  \sum_{i \in [m], Q_{i}(S_p) \neq \emptyset} \min\{p(x):  x\in B_{i}\}
\end{dmath*}

Observe that the last expression is exactly the revenue given by price assignment $p$.
Thus the optimal revenue for the $UDP_{min}$ instance is at most that of the assortment problem instance.
Since  by the previous paragraph it is also at least that, the two quantities are equal.
Moreover, if $p$ assigns the same price $v_i$ to all products $x \in [n]$ then
$S_p$ consists of all $y' \in \mathcal{C}$ with $r(y') \geq m \cdot v_i$, and is therefore of the desired form.
This concludes the proof.
\end{proof}

We note that \citet{aouad2015approximability} independently constructed a similar reduction to the one proposed above. Their objective is different from ours: they use the reduction to prove an APX-hardness result for the assortment problem under the RUM model even when there are only two different prices, whereas our objective is to observe the strong connection that exists between uniform pricing for the $UDP_{min}$ problem and revenue-ordered assortments in assortment optimisation.

Combining Theorem~\ref{main_theorem_UDP_min} with Theorem~\ref{theorem_revenue_ratio}, we obtain as a corollary a new revenue guarantee for the uniform pricing strategy.

\begin{corollary}
The uniform pricing strategy for the $UDP_{min}$ problem approximates the optimum revenue to within a factor of $\frac{1}{(1 + \ln \rho)}$, where $\rho \coloneqq  v_{m} / v_{1}$.
\end{corollary}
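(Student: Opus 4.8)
The plan is to simply chain together the two results already established: Theorem~\ref{main_theorem_UDP_min}, which recasts $UDP_{min}$ as an assortment problem under a regular discrete choice model in which uniform pricing corresponds to revenue-ordered assortments, and Theorem~\ref{theorem_revenue_ratio}, which bounds the performance of revenue-ordered assortments on any such problem by $\frac{1}{1+\ln\rho}$ with $\rho$ the ratio of the largest to the smallest product revenue.

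First I would fix the given $UDP_{min}$ instance with valuations $v_1 \leq \cdots \leq v_m$ (discarding any consumer with $v_i = 0$, since such a consumer never buys at a positive price and so affects no revenue), and apply Theorem~\ref{main_theorem_UDP_min} to produce the associated assortment instance $(\mathcal{C}, r, \mathcal{P})$, where $\mathcal{P}$ satisfies axioms~(\ref{discrete_ineq_1})--(\ref{axiom:regularity}) and the two instances have the same optimal revenue $\OPT$. The one piece of bookkeeping is to pin down the parameter $\rho$ of Theorem~\ref{theorem_revenue_ratio} for this assortment instance: by construction $r((x,v)) = m\cdot v$ with $v$ ranging over $\{v_1,\dots,v_m\}$, so the distinct revenue values $r_1 < \cdots < r_k$ are exactly the distinct values among $m v_1,\dots, m v_m$; in particular $r_k / r_1 = v_m / v_1$, which is precisely the quantity $\rho$ in the statement. (Coincidences among the $v_i$ only make $k$ smaller and do not change this ratio.)

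Next I would invoke Theorem~\ref{theorem_revenue_ratio} on this assortment instance to get that the best revenue-ordered assortment $S_j$ has revenue at least $\frac{1}{1+\ln\rho}\OPT$. Finally, I would transfer this back using the equivalence asserted in Theorem~\ref{main_theorem_UDP_min}: by its second bullet, every revenue-ordered assortment (all products $y'$ with $r(y')\geq r(y)$ for some $y$) has the same revenue as the uniform price assignment $x \mapsto v_i$ for a suitable $i$, and conversely by the first bullet every such uniform price assignment is matched by such an assortment. Hence the best uniform price achieves exactly the revenue of the best revenue-ordered assortment $S_j$, namely at least $\frac{1}{1+\ln\rho}\OPT$, as claimed.

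There is no genuine obstacle here: the corollary is an immediate consequence of the two cited theorems, so the only things requiring a little care are the identification $r_k/r_1 = v_m/v_1$ and the harmless handling of the degenerate cases ($v_1 = 0$, or repeated valuations), neither of which affects the bound.
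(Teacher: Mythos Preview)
Your proposal is correct and is exactly the approach the paper takes: the corollary is stated as an immediate consequence of Theorem~\ref{main_theorem_UDP_min} combined with Theorem~\ref{theorem_revenue_ratio}, with no further proof given. Your additional bookkeeping (verifying $r_k/r_1 = v_m/v_1$ via the construction $r((x,v)) = m\cdot v$, and disposing of degenerate valuations) is accurate and simply spells out what the paper leaves implicit.
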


Theorem~\ref{main_theorem_UDP_min} together with Theorem~\ref{theorem_demand_ratio_in_optimal} yield the following bound, which was originally proved by \citet{aggarwal2004algorithms}.
The proof is given in the Appendix.

\begin{corollary}[\citet{aggarwal2004algorithms}]\label{static_pricing_known_guarantee_UDP_min_thm}
The uniform pricing strategy for the $UDP_{min}$ problem approximates the optimum revenue to within a factor of $\frac{1}{(1 + \ln m)}$.
\end{corollary}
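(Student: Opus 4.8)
The plan is to combine Theorem~\ref{main_theorem_UDP_min} with Theorem~\ref{theorem_demand_ratio_in_optimal}: the heart of the matter will be to show that the quantity $\nu$ appearing in Theorem~\ref{theorem_demand_ratio_in_optimal} is at most $m$ for the discrete choice model associated with a $UDP_{min}$ instance. First, given a $UDP_{min}$ instance with $m$ consumers, apply Theorem~\ref{main_theorem_UDP_min} to obtain an assortment problem under a regular discrete choice model with system of choice probabilities $\mathcal{P} = \tfrac{1}{m}\sum_{i=1}^{m}\mathcal{P}_i$, with the same optimal revenue $\OPT$, and such that the best revenue-ordered assortment on this instance has the same revenue as the best uniform price assignment on the original instance (this is exactly the equivalence asserted by Theorem~\ref{main_theorem_UDP_min}). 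By Theorem~\ref{theorem_demand_ratio_in_optimal}, revenue-ordered assortments — hence uniform pricing — approximate $\OPT$ to within a factor of $1/(1+\ln\nu)$ with $\nu = N_1/N_\ell$ defined from an optimal assortment $S^{*}$, so it is enough to prove $\nu\le m$.

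The key step is the observation that every $N_i$ ($i\in[k]$) is an integer multiple of $1/m$. Write $N_i = \tfrac{1}{m}\sum_{j=1}^{m} N_i^{(j)}$ with $N_i^{(j)} \coloneqq \sum_{y\in S^{*},\, r(y)\ge r_i}\mathcal{P}_j(y,S^{*})$, so that it suffices to check $N_i^{(j)}\in\{0,1\}$. By construction $\mathcal{P}_j((x,v),S^{*})$ equals $1/|Q_j(S^{*})|$ on $Q_j(S^{*})$ and $0$ elsewhere, and every pair in $Q_j(S^{*})$ has the same second coordinate, namely the smallest valuation $v$ with $(x',v)\in S^{*}$ for some $x'\in B_j$. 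Since $r((x,v)) = m\cdot v$, the threshold condition $r(y)\ge r_i$ is therefore satisfied by all elements of $Q_j(S^{*})$ simultaneously or by none of them; in the first case $N_i^{(j)}=1$, and in the second (as well as when $Q_j(S^{*})=\emptyset$) $N_i^{(j)}=0$. Hence $N_i\in\{0,1/m,2/m,\dots,1\}$ for every $i$.

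Granting this, the conclusion is immediate. If $\OPT=0$ the bound is trivial, since uniform pricing always yields nonnegative revenue; otherwise $S^{*}$ has positive revenue, which forces $N_1 = \sum_{y\in S^{*}}\mathcal{P}(y,S^{*})>0$, so the index $\ell$ of Theorem~\ref{theorem_demand_ratio_in_optimal} is well defined. Then $N_\ell>0$ together with the multiple-of-$1/m$ property gives $N_\ell\ge 1/m$, while $N_1\le 1$ by axiom~\eqref{discrete_eq_2}, whence $\nu = N_1/N_\ell \le m$ and $1/(1+\ln\nu)\ge 1/(1+\ln m)$, as required. (One may assume all $v_i>0$: a consumer with $v_i=0$ never purchases under a positive price assignment and can be deleted, which leaves $\OPT$ and the uniform-pricing revenue unchanged while only decreasing $m$, so the bound for the reduced instance implies the one for the original.)

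I expect the only genuine obstacle to be the verification that $N_i^{(j)}\in\{0,1\}$, i.e.\ pinning down the all-or-nothing behaviour of $Q_j(S^{*})$ with respect to the revenue threshold $r_i$; the remaining bookkeeping and arithmetic are routine.
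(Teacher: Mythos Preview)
Your proposal is correct and takes essentially the same approach as the paper's own proof: both combine Theorem~\ref{main_theorem_UDP_min} with Theorem~\ref{theorem_demand_ratio_in_optimal} and reduce the task to showing $\nu\le m$, via the observation that each $m\cdot N_i$ is an integer (so $N_\ell\ge 1/m$) while $N_1\le 1$. The paper phrases this by interpreting $\hat N_i \coloneqq m N_i$ directly as a count of consumers buying at price level at least $r_i/m$; your decomposition $N_i = \tfrac{1}{m}\sum_j N_i^{(j)}$ with $N_i^{(j)}\in\{0,1\}$ is the explicit verification of that interpretation, relying on the fact that all pairs in $Q_j(S^*)$ share the same second coordinate and hence the same revenue.
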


\subsubsection{$UDP_{rank}$ as an assortment problem}
In this section we note that the $UDP_{rank}$ problem can also be seen as an assortment problem under a regular discrete choice model.
As in the previous section, suppose that there are $n$ products and $m$ consumers. In an optimal solution of the $UDP_{rank}$ problem, prices can always be assumed to belong to the set of consumer valuations. The proof is a straightforward adaptation of that of Lemma~\ref{lemma_pricing_to_values} and is thus omitted.

\begin{lemma} \label{lemma_pricing_to_values_udp_rank}
  There exists an optimal price assignment $p: [n] \rightarrow \mathbb{R}_{>0}$
  such that $p(x)\in  \{v(i,x) : i \in [m] , x \in [n] \} $ for all $ x\in [n]$.
\end{lemma}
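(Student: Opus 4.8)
\emph{Proof proposal.} The plan is to follow the exchange argument behind Lemma~\ref{lemma_pricing_to_values}, adapted to the sequential purchasing rule of $UDP_{rank}$. Write $V \coloneqq \{v(i,x) : i\in[m],\ x\in[n]\}$ for the (finite) set of all valuations. I would start from an arbitrary optimal price assignment $p^{*}$ and let $\alpha\colon [m]\to [n]\cup\{0\}$ be the induced allocation, where $\alpha(i)$ is the product (or the no-purchase option $0$) that consumer $i$ ends up buying under $p^{*}$. The goal is to transform $p^{*}$ into an optimal assignment all of whose values lie in $V$, without ever decreasing the revenue.

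The key structural observation I would isolate first is a monotonicity property of $UDP_{rank}$: if $q \geq p^{*}$ componentwise and, for every consumer $i$ with $\alpha(i)=x\neq 0$, one still has $q(x)\leq v(i,x)$, then $\alpha$ is again the allocation induced by $q$. Indeed, raising a price can never turn an unaffordable product into an affordable one, so each consumer still skips exactly the prefix of $\phi_i$ she skipped before, still reaches $\alpha(i)$ (which remains affordable by hypothesis), or still buys nothing if $\alpha(i)=0$. Given this, for every product $x$ bought by at least one consumer I would raise its price to $p(x)\coloneqq \min\{v(i,x):\alpha(i)=x\}$; carrying this out simultaneously for all such products preserves the allocation $\alpha$, weakly increases the revenue (each sold unit now fetches a weakly larger price), and endows every purchased product with a price of the form $v(i_0,x)\in V$.

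It then remains to handle the products $x$ bought by no consumer under $p^{*}$. For such an $x$, every consumer $i$ who reaches $x$ in her order $\phi_i$ without having bought yet must satisfy $p^{*}(x)>v(i,x)$ (otherwise she would purchase $x$). I would reset $p(x)\coloneqq \max V$, processing one unpurchased product at a time. Each such change only affects consumers who at that moment still reach $x$ and have $v(i,x)=\max V$: these switch from whatever they were doing (buying a later product, or nothing) to buying $x$ at the maximal possible unit price $\max V\geq p(\alpha(i))$, so the revenue again weakly increases and all other consumers are unaffected. After every product has been processed, all prices lie in $V$ and the revenue is at least that of $p^{*}$, hence optimal.

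The only point requiring care — and the reason the paper can call this a straightforward adaptation rather than a triviality — is getting the monotonicity observation stated in the right form and doing the bookkeeping in the last step correctly: re-pricing an earlier unpurchased product to $\max V$ may remove some consumers from the set that ``reaches'' a later unpurchased product, so the argument must be applied to the current reaching-set at each step rather than the original one. I do not expect any genuine obstacle beyond this; the proof should otherwise mirror that of Lemma~\ref{lemma_pricing_to_values}.
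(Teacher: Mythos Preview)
Your proof is correct and follows essentially the same exchange argument the paper intends (it explicitly omits the proof as a straightforward adaptation of Lemma~\ref{lemma_pricing_to_values}): round up the price of each purchased product to a buyer's valuation, and push unpurchased products to $\max V$. The two-phase organisation via the monotonicity observation is a clean variant of the item-by-item rounding in the proof of Lemma~\ref{lemma_pricing_to_values}, and your bookkeeping remark about processing unpurchased products sequentially is exactly the care needed.
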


The following theorem shows that the $UDP_{rank}$ problem is a special case of the assortment problem under a regular discrete choice model. It can be proved along the same lines as the proof of Theorem~\ref{main_theorem_UDP_min}. To keep the paper concise, we leave the proof to the reader.

\begin{theorem} \label{main_theorem_UDP_rank}
Consider an instance of the $UDP_{rank}$ problem with $n$ products and $m$ consumers, and let $v(i,x)$ denote the valuation of consumer $i \in [m]$ for product $x \in [n]$. Then one can define an instance of the assortment problem under a regular discrete choice model
(i.e.\ a finite set $\mathcal{C}$, a revenue function $r: \mathcal{C} \rightarrow \mathbb{R}_{>0}$, and
a system of choice probabilities $\mathcal{P}$ satisfying axioms (\ref{discrete_ineq_1}), (\ref{discrete_eq_1}), (\ref{discrete_eq_2}) and (\ref{axiom:regularity})) with the same optimal revenue. Moreover,
the uniform pricing and revenue-ordered assortments strategies on respective instances are equivalent in the following sense:
\begin{itemize}
\item for each $(i,x)\in [m]\times[n]$ there exists $S\subseteq \mathcal{C}$ consisting of all
products $y' \in \mathcal{C}$ with $r(y') \geq r(y)$ for some $y \in \mathcal{C}$ such that
the price assignment of setting price $v(i,x)$ to every item of the $UDP_{rank}$ instance has the same revenue as the assortment $S$, and \\
\item for each $y\in \mathcal{C}$, there exists $(i,x) \in [m]\times[n]$ such that the assortment consisting of all products $y'\in \mathcal{C}$ with $r(y') \geq r(y)$ has the same revenue as the price assignment that sets price $v(i,x)$ to each item of the $UDP_{rank}$ instance.
\end{itemize}
\end{theorem}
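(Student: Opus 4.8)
The plan is to replay the proof of Theorem~\ref{main_theorem_UDP_min} almost verbatim, with ``cheapest affordable item in $B_i$'' replaced by ``first affordable item along the ranking $\phi_i$''. Set $V\coloneqq\{v(i,x) : i\in[m],\, x\in[n]\}$; by Lemma~\ref{lemma_pricing_to_values_udp_rank} there is an optimal price assignment with all prices in $V$. I would take $\mathcal{C}\coloneqq[n]\times V$, interpreting the pair $(x,v)$ as ``item $x$ offered at price $v$'', with revenue function $r((x,v))\coloneqq m\cdot v$ (the factor $m$ only serves to cancel the $1/m$ appearing below). For $S\subseteq\mathcal{C}$ write $p_S(x)\coloneqq\min\{v : (x,v)\in S\}$ for the effective price of item $x$ in $S$ (and $p_S(x)\coloneqq+\infty$ when $x$ does not occur in $S$), and for a consumer $i\in[m]$ let $x_i(S)$ be the first item along $\phi_i$ with $p_S(x)\leq v(i,x)$, if such an item exists. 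Then I would put $Q_i(S)\coloneqq\{(x_i(S),\,p_S(x_i(S)))\}$ when $x_i(S)$ is defined and $Q_i(S)\coloneqq\emptyset$ otherwise, and define $\mathcal{P}_i((x,v),S)\coloneqq\frac{1}{|Q_i(S)|}$ if $(x,v)\in Q_i(S)$ and $0$ otherwise, $\mathcal{P}_i(0,S)\coloneqq 1-\sum_{(x,v)\in S}\mathcal{P}_i((x,v),S)$, and finally $\mathcal{P}(y,S)\coloneqq\frac1m\sum_{i=1}^m\mathcal{P}_i(y,S)$. A convenient simplification over the $UDP_{min}$ construction is that $|Q_i(S)|\leq 1$ always, since $\phi_i$ is a strict ranking, so $\mathcal{P}_i((x,v),S)$ is just the indicator of the event $(x,v)\in Q_i(S)$.

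The next step is to verify the four axioms, and it is enough to do so for each $\mathcal{P}_i$. Axioms~\eqref{discrete_ineq_1} and \eqref{discrete_eq_1} are immediate, and \eqref{discrete_eq_2} holds since $\sum_{(x,v)\in S}\mathcal{P}_i((x,v),S)\in\{0,1\}$. The only step needing care is regularity~\eqref{axiom:regularity}, and within it the assertion that enlarging $S$ to $S'\supseteq S$ cannot turn a non-purchase of a pair $(x,v)\in S$ into a purchase. Here the subcase $(x,v)\notin Q_i(S')$ is trivial, so suppose $(x,v)\in Q_i(S')$; then $x=x_i(S')$, $v=p_{S'}(x)$, and $v\leq v(i,x)$, so item $x$ is affordable in $S$ (indeed $p_S(x)\leq v\leq v(i,x)$), whence $x_i(S)$ is defined and appears no later than $x$ along $\phi_i$. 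On the other hand every item affordable in $S$ is affordable in $S'$, so $x_i(S')$ appears no later than $x_i(S)$; thus $x=x_i(S')$ appears no later than $x_i(S)$, which appears no later than $x$, forcing $x_i(S)=x$. Finally $v=p_{S'}(x)\leq p_S(x)\leq v$ because $(x,v)\in S\subseteq S'$, so $p_S(x)=v$ and $(x,v)\in Q_i(S)$, giving $\mathcal{P}_i((x,v),S)=1\geq\mathcal{P}_i((x,v),S')$. The case $y=0$ goes exactly as in Theorem~\ref{main_theorem_UDP_min}: $\mathcal{P}_i(0,T)=1$ iff $Q_i(T)=\emptyset$, and $Q_i(S)\neq\emptyset$ implies $Q_i(S')\neq\emptyset$ since affordability only increases, so $\mathcal{P}_i(0,S)\geq\mathcal{P}_i(0,S')$. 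Averaging over $i$ gives \eqref{axiom:regularity}.

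Finally I would match revenues and translate the two heuristics, once more mirroring the $UDP_{min}$ proof. For any $S\subseteq\mathcal{C}$,
\[
\sum_{y\in S}\mathcal{P}(y,S)\,r(y)=\sum_{i=1}^m\sum_{(x,v)\in S}\mathcal{P}_i((x,v),S)\,v=\sum_{i\,:\,Q_i(S)\neq\emptyset}p_S(x_i(S)),
\]
and $x_i(S)$ is precisely the item bought by consumer $i$ in the $UDP_{rank}$ instance under the price assignment $p_S$, at price $p_S(x_i(S))$; hence $S$ and $p_S$ have the same revenue. Conversely, for a price assignment $p$ with values in $V$, the set $S_p\coloneqq\{(x,v)\in\mathcal{C} : v\geq p(x)\}$ satisfies $p_{S_p}=p$ and therefore has the revenue of $p$; combined with Lemma~\ref{lemma_pricing_to_values_udp_rank} this gives equality of the optimal revenues. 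A revenue-ordered assortment $\{(x,v)\in\mathcal{C} : m v\geq\theta\}$ has $p_S$ constant, equal to $\min\{v\in V : v\geq\theta/m\}$, hence corresponds to uniform pricing, and conversely $S_p$ for a uniform $p\equiv v(i,x)$ is precisely the set of all products of revenue at least $m\cdot v(i,x)$; this yields the stated correspondence between the two heuristics. The main obstacle throughout is the regularity verification, and in particular the ``sandwich'' step showing that the item chosen by consumer $i$ does not change when the offered set shrinks from $S'$ to $S$ provided the pair under scrutiny lies in $S$; everything else is bookkeeping transferred directly from the proof of Theorem~\ref{main_theorem_UDP_min}.
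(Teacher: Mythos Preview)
Your proposal is correct and follows exactly the approach the paper intends: the paper explicitly omits the proof and states that ``it can be proved along the same lines as the proof of Theorem~\ref{main_theorem_UDP_min}'', which is precisely what you do, with the natural substitution of ``first affordable item along $\phi_i$'' for ``cheapest affordable item in $B_i$''. The only novelty beyond bookkeeping is the regularity verification, and your sandwich argument there is sound; the simplification $|Q_i(S)|\leq 1$ (coming from the strictness of the ranking $\phi_i$) is a nice observation that actually makes this case slightly cleaner than the $UDP_{min}$ proof.
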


Thanks to Theorem~\ref{main_theorem_UDP_rank}, we know that using the uniform pricing strategy on an instance $I$ of the $UDP_{rank}$ problem has the same performance as the revenue-ordered assortment strategy on the instance $I'$ of the assortment problem associated to $I$. Combining this with Theorem~\ref{theorem_revenue_ratio}, we obtain the following corollary:

\begin{corollary}\label{theorem:udp_rank_approximation}
The uniform pricing strategy approximates the $UDP_{rank}$ problem to within a factor of $\frac{1}{(1 + \ln \rho)}$, where $\rho \coloneqq \max\{ v(i,e) : i \in [m] , e \in [n] \} / \min \{v(i,e) : i \in [m] , e \in [n] \}$.
\end{corollary}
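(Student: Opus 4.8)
The plan is to combine the reduction of Theorem~\ref{main_theorem_UDP_rank} with the revenue guarantee of Theorem~\ref{theorem_revenue_ratio}. First I would fix an instance $I$ of $UDP_{rank}$ with $n$ products, $m$ consumers and valuations $v(i,e)$, and invoke Theorem~\ref{main_theorem_UDP_rank} to obtain the associated assortment instance $I'$ (a ground set $\mathcal{C}$, a revenue function $r\colon\mathcal{C}\to\mathbb{R}_{>0}$, and a regular system of choice probabilities $\mathcal{P}$) with the same optimal revenue. The point to record here is that, just as in the $UDP_{min}$ construction of Theorem~\ref{main_theorem_UDP_min} (where $r((x,v)) = m\cdot v$), the analogous construction for $UDP_{rank}$ defines the revenue function to be a fixed positive multiple of the valuation coordinate; consequently the distinct revenue values $r_1 < \cdots < r_k$ of $I'$ are precisely the distinct valuations $v(i,e)$ scaled by that common factor, so that $r_k/r_1$ equals $\rho = \max\{v(i,e) : i\in[m], e\in[n]\} / \min\{v(i,e) : i\in[m], e\in[n]\}$.

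Next I would apply Theorem~\ref{theorem_revenue_ratio} to $I'$: the revenue-ordered assortments strategy on $I'$ produces an assortment whose revenue is at least $\frac{1}{1+\ln(r_k/r_1)}\cdot\OPT = \frac{1}{1+\ln\rho}\cdot\OPT(I')$. It then remains to transfer this bound back to $I$. By the ``moreover'' part of Theorem~\ref{main_theorem_UDP_rank}, the set of revenues realised by the $k$ revenue-ordered assortments of $I'$ coincides with the set of revenues realised by the uniform price assignments on $I$ (those assigning some common value $v(i,e)$ to every item); in particular the maximum revenue of a revenue-ordered assortment on $I'$ equals the maximum revenue of a uniform pricing on $I$. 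Since moreover $\OPT(I') = \OPT(I)$ by the first part of Theorem~\ref{main_theorem_UDP_rank}, we conclude that the best uniform pricing for $I$ yields revenue at least $\frac{1}{1+\ln\rho}\cdot\OPT(I)$, which is the claim.

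The only point requiring (minor) care is the bidirectional nature of the equivalence in Theorem~\ref{main_theorem_UDP_rank}: one direction guarantees that every revenue-ordered assortment of $I'$ is matched by some uniform price on $I$ of the same revenue, and the other that every uniform price $v(i,e)$ on $I$ is matched by a revenue-ordered assortment of $I'$ of the same revenue; together these give the equality of the two maxima used above. There is no real obstacle here beyond checking that the revenue function in the $UDP_{rank}$ analogue of the construction of Theorem~\ref{main_theorem_UDP_min} is indeed proportional to the valuation coordinate, which is immediate from how that construction is built; hence $\rho$ for $I'$ and $\rho$ for $I$ coincide.
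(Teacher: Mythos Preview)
Your proposal is correct and follows essentially the same route as the paper: the paper simply states (in the sentence preceding the corollary) that combining Theorem~\ref{main_theorem_UDP_rank} with Theorem~\ref{theorem_revenue_ratio} yields the result, and you have spelled out precisely this combination. The one extra detail you supply---that the revenue function in the (unwritten) $UDP_{rank}$ construction is a fixed positive multiple of the valuation coordinate, so that $r_k/r_1=\rho$---is the natural analogue of the $UDP_{min}$ construction and is what the paper implicitly relies on.
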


Similarly as for the $UDP_{min}$ problem, we can also apply Theorem~\ref{theorem_demand_ratio_in_optimal} to derive the following bound that is a function of the number of consumers, which was already established by~\citet{aggarwal2004algorithms}:

\begin{corollary}[\citet{aggarwal2004algorithms}]
The uniform pricing strategy approximates the $UDP_{rank}$ problem to within a factor of $\frac{1}{(1 + \ln m)}$.
\end{corollary}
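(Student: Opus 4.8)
The plan is to mirror the proof of Corollary~\ref{static_pricing_known_guarantee_UDP_min_thm}, replacing the $UDP_{min}$ reduction of Theorem~\ref{main_theorem_UDP_min} by the $UDP_{rank}$ reduction of Theorem~\ref{main_theorem_UDP_rank} and then invoking Theorem~\ref{theorem_demand_ratio_in_optimal}. Concretely, fix an instance $I$ of the $UDP_{rank}$ problem with $m$ consumers, and let $I'$ be the associated instance of the assortment problem under a regular discrete choice model produced in (the proof of) Theorem~\ref{main_theorem_UDP_rank}. As in the $UDP_{min}$ case, the revenue function of $I'$ is scaled by a factor $m$, so that for every assortment $S$ the revenue $\sum_{y\in S}\mathcal{P}(y,S)\,r(y)$ equals the revenue of the corresponding price assignment for $I$, and uniform price assignments correspond precisely to revenue-ordered assortments.

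First I would pick an optimal assortment $S^{*}$ for $I'$ and, following Theorem~\ref{theorem_demand_ratio_in_optimal}, set $N_i \coloneqq \sum_{x\in S^{*},\, r(x)\geq r_i}\mathcal{P}(x,S^{*})$ together with the scaled quantities $\hat N_i \coloneqq m\cdot N_i$ for each $i\in[k]$, and let $\ell\in[k]$ be maximal with $N_\ell>0$. The key point, and the place where the specific form of the reduction matters, is that in the construction of Theorem~\ref{main_theorem_UDP_rank} the value $m\cdot\mathcal{P}(x,S^{*})$, summed over the products of revenue at least $r_i$, counts exactly the number of consumers of $I$ who, under the price assignment corresponding to $S^{*}$, buy an item whose price is at least $r_i/m$. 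This holds because each consumer in $UDP_{rank}$ buys a single, well-defined item, so the per-consumer probability vectors $\mathcal{P}_i(\cdot,S^{*})$ are $0/1$-valued. Consequently $\hat N_1$ is the number of consumers who buy some item in the optimal solution, hence $\hat N_1\leq m$, while $\hat N_\ell$ is the number of consumers who buy the most expensive item sold in that solution, hence $\hat N_\ell\geq 1$.

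Then I would conclude exactly as in Corollary~\ref{static_pricing_known_guarantee_UDP_min_thm}: by Theorem~\ref{theorem_demand_ratio_in_optimal}, revenue-ordered assortments approximate the optimum of $I'$ to within a factor $\tfrac{1}{1+\ln(N_1/N_\ell)}=\tfrac{1}{1+\ln(\hat N_1/\hat N_\ell)}\geq\tfrac{1}{1+\ln m}$, using $\hat N_1\leq m$ and $\hat N_\ell\geq 1$; and since Theorem~\ref{main_theorem_UDP_rank} shows that this carries over verbatim to the performance of uniform pricing on $I$, the claimed bound follows.

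The main obstacle is not any calculation but the bookkeeping around the reduction: because the proof of Theorem~\ref{main_theorem_UDP_rank} is only sketched, one must first make explicit that its construction has the same ``scaling by $m$'' structure as Theorem~\ref{main_theorem_UDP_min}, and, crucially, that the aggregate $m\cdot\mathcal{P}(x,S^{*})$ over high-revenue products genuinely is a consumer count — equivalently, that $\sum_i\mathcal{P}_i(\cdot,S^{*})$ decomposes into $0/1$ indicators, one per consumer, reflecting the fact that in $UDP_{rank}$ ties in the purchasing rule do not arise. Once this is established, the inequalities $\hat N_1\leq m$ and $\hat N_\ell\geq 1$ and the logarithmic estimate are immediate, just as in the $UDP_{min}$ case.
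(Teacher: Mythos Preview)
Your proposal is correct and follows precisely the approach the paper intends: the paper does not spell out a proof for this corollary, only indicating that it is obtained ``similarly as for the $UDP_{min}$ problem'' by applying Theorem~\ref{theorem_demand_ratio_in_optimal} through the reduction of Theorem~\ref{main_theorem_UDP_rank}. Your write-up fills in exactly those details, and your identification of the one place requiring care---namely, that the scaled quantities $\hat N_i$ are genuine consumer counts so that $\hat N_1\leq m$ and $\hat N_\ell\geq 1$---is the right bookkeeping point to flag.
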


We end this section with the remark that our results on $UDP_{min}$ and $UDP_{rank}$ also hold for the variant of these two problems involving a price ladder (as defined at the beginning of this section).
This is because adding the price ladder constraint can only decrease the seller's optimum revenue and the uniform pricing strategy always satisfies the price ladder constraint.

\subsection{Stackelberg pricing}
In this section we consider a pricing problem called Stackelberg Minimum Spanning Tree problem that was introduced by  \citet{cardinal2011stackelberg}. We show that this problem can be restated as an assortment problem under a specific discrete choice model satisfying the regularity condition. Furthermore, the so-called uniform pricing algorithm for the Stackelberg Minimum Spanning Tree problem corresponds then to revenue-ordered assortments. This connection allows us to see the results on uniform pricing obtained by \citet{cardinal2011stackelberg} as being a special case of the approximation guarantees of revenue-ordered assortments established in Section~\ref{section_general_performance_guarantees}.
(We note that the results from~\cite{cardinal2011stackelberg} are currently the best known approximation factors for the Stackelberg Minimum Spanning Tree problem.)

An instance of the {\DEF Stackelberg Minimum Spanning Tree problem} consists of an (undirected, simple) graph $G$,
a bipartition of the edges of $G$ into a set $R$ of {\DEF red edges} and a set $B$ of {\DEF blue edges}, and
a cost function $c:R \to \mathbb{R}_{>0}$ assigning a positive cost to each red edge.
The objective is to choose a price assignment $p:B \to \mathbb{R}_{>0}$ for the blue edges
so that the revenue resulting from a consumer buying a minimum weight spanning tree of $G$ is maximised.
The latter revenue is the sum of the prices $p(e)$ of all blue edges $e\in B$ that appear in the spanning tree.

Let us remark that if there is no spanning tree of $G$ consisting only of red edges, then the optimal revenue is unbounded.
Indeed, the customer is then forced to buy at least one blue edge, and thus one could price all blue edges arbitrarily high, knowing that at least one will be bought by the customer.
To avoid such trivialities, we always assume that there exists a red spanning tree in the instance under consideration.

As is well known, a minimum weight spanning tree can be obtained by the greedy (a.k.a.\ Kruskal's) algorithm, which consists in first ordering the edges in non-decreasing order of costs and then considering each edge in order, and selecting it if it does not create a cycle with edges that have already been selected.
Moreover, every minimum weight spanning tree can be obtained this way, by selecting an adequate ordering of the edges (the freedom in choosing the ordering being how ties are broken for edges having the same cost).
Thus we may assume that the customer builds her minimum weight spanning tree by running the greedy algorithm following some ordering of the edges.
While we do not necessarily know the latter ordering completely (in case there are ties), it will be assumed that the customer always gives priorities to blue edges over red edges in case of ties.
This technical assumption is needed to make sure that the revenue resulting from pricing the blue edges is independent of the particular spanning tree bought by the customer.
An informal justification for this assumption is that in case of a tie between a blue edge and red edge, we could decrease the price of the blue edge by an arbitrarily small amount to make sure it is considered first.

The purpose of this section is to prove the following theorem.

\begin{theorem} \label{main_theorem_StackMST}
Consider an instance of the {\StackMST} problem, consisting of graph $G$,
a bipartition of the edges into a set $R$ of red edges and a set $B$ of blue edges, and
a cost function $c:R \to \mathbb{R}_{>0}$.
Let $c_{1}, \dots, c_{k}$ denote the different values taken by the cost function,
in non-decreasing order.
Then one can define an instance of the assortment problem under a regular discrete choice model
(i.e.\ a finite set $\mathcal{C}$,
a revenue function $r: \mathcal{C} \rightarrow \mathbb{R}_{>0}$, and
a system of choice probabilities $\mathcal{P}$ satisfying axioms (\ref{discrete_ineq_1}), (\ref{discrete_eq_1}), (\ref{discrete_eq_2}) and (\ref{axiom:regularity}))
with the same optimal revenue. Moreover,
the uniform pricing and revenue-ordered assortments strategies on respective instances are
equivalent in the following sense:
\begin{itemize}
\item for each $i\in [k]$ there exists $S\subseteq \mathcal{C}$ consisting of all
products $y' \in \mathcal{C}$ with $r(y') \geq r(y)$ for some $y \in \mathcal{C}$ such that
the price assignment assigning price $c_i$ to each edge $e\in B$
for the  {\StackMST}  instance has the same revenue as the assortment $S$, and \\
\item for each $y\in \mathcal{C}$, there exists $i \in [k]$ such that
the assortment consisting of all products $y'\in \mathcal{C}$ with $r(y') \geq r(y)$ has the same
revenue as the price assignment assigning price $c_i$ to each blue edge of the  {\StackMST}  instance.
\end{itemize}
\end{theorem}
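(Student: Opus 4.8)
The plan is to follow the template of the proof of Theorem~\ref{main_theorem_UDP_min}. First I would invoke the standard discretisation result for the {\StackMST} problem \citep{cardinal2011stackelberg}: there is an optimal price assignment $p^{*}: B \to \R_{>0}$ in which every blue edge is priced at one of the values $c_{1}, \dots, c_{k}$. I would also fix, once and for all, a total order $\prec$ on $E(G)$ that refines the cost order and places blue edges before red edges of equal cost; together with the tie-breaking convention adopted just before the theorem, this makes the minimum spanning tree unique and the predicate ``blue edge $e$ is bought under the price assignment $q$'' unambiguous, while leaving the revenue of $q$ unchanged. Now define the assortment instance: let $\mathcal{C} \coloneqq B \times \{c_{1}, \dots, c_{k}\}$ (read the product $(e, c_{i})$ as ``$e$ offered at price $c_{i}$''), let $r((e, c_{i})) \coloneqq |B| \cdot c_{i}$ (the factor $|B|$ only cancels the $1/|B|$ below and ensures axiom~\eqref{discrete_eq_2}), and for $S \subseteq \mathcal{C}$ let $p_{S}$ be the price assignment with $p_{S}(e) \coloneqq \min\{c : (e, c) \in S\}$ when $e$ occurs in some product of $S$, and $p_{S}(e)$ equal to some fixed value larger than $c_{k}$ otherwise. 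Finally set $\mathcal{P}((e, c_{i}), S) \coloneqq 1/|B|$ if $c_{i} = p_{S}(e)$ and $e$ is bought under $p_{S}$, set $\mathcal{P}((e, c_{i}), S) \coloneqq 0$ otherwise, and put $\mathcal{P}(0, S) \coloneqq 1 - \sum_{(e, c_{i}) \in S} \mathcal{P}((e, c_{i}), S)$.

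With this construction the revenue of an assortment $S$ telescopes to $\sum_{e \text{ bought under } p_{S}} p_{S}(e)$, which is exactly the {\StackMST} revenue of $p_{S}$; hence the optimum of the assortment instance is at most that of the {\StackMST} instance, and since the discretised optimal assignment $p^{*}$ is realised as $p_{S^{*}}$ for $S^{*} \coloneqq \{(e, c_{j}) : c_{j} \geq p^{*}(e)\}$, the two optima coincide. The correspondence between the heuristics is then transparent: the revenue-ordered assortment with threshold $|B| \cdot c_{i}$ is $B \times \{c_{i}, \dots, c_{k}\}$, and the associated price assignment $p_{S}$ assigns price exactly $c_{i}$ to every blue edge, i.e.\ is the uniform assignment at level $c_{i}$; conversely every revenue-ordered assortment has this form.

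What remains, and is the technical core of the proof, is to check that $\mathcal{P}$ is a regular discrete choice model. Axioms~\eqref{discrete_ineq_1} and~\eqref{discrete_eq_1} are immediate, and axiom~\eqref{discrete_eq_2} holds because $\sum_{(e, c_{i}) \in S} \mathcal{P}((e, c_{i}), S)$ equals $1/|B|$ times the number of blue edges bought under $p_{S}$, which is at most $|B|$. For the regularity axiom~\eqref{axiom:regularity} I would isolate two monotonicity properties of the {\StackMST} mechanics, stated with respect to the componentwise order on price assignments (values above $c_{k}$ allowed): \textbf{(L1)} the number of blue edges bought is non-increasing in the price assignment; and \textbf{(L2)} if $p \geq p'$ and $p(e) = p'(e)$ for some blue edge $e$ that is bought under $p'$, then $e$ is still bought under $p$. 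These suffice: for $S \subseteq S'$ one has $p_{S} \geq p_{S'}$, and if $(e, c_{i}) \in S$ with $c_{i} = p_{S'}(e)$ then necessarily $c_{i} = p_{S}(e)$ as well; hence (L1) gives $\mathcal{P}(0, S) \geq \mathcal{P}(0, S')$ and (L2) gives $\mathcal{P}((e, c_{i}), S) \geq \mathcal{P}((e, c_{i}), S')$.

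To prove (L1) and (L2) I would rely on one observation that sidesteps the non-uniqueness of minimum spanning trees: when Kruskal's algorithm (with tie-break $\prec$) reaches an edge $f$, the partition of $V(G)$ into the components it has formed so far --- and therefore whether $f$ is bought --- depends only on the \emph{set} $\mathcal{E}_{f}(q)$ of edges considered strictly before $f$, not on which of them were selected. If $p \geq p'$, then $\mathcal{E}_{f}(p) \subseteq \mathcal{E}_{f}(p')$ for every red edge $f$ (raising blue prices only pushes blue edges past $f$, while the red edges preceding $f$ are fixed by the cost function and $\prec$), so every red edge bought under $p'$ is also bought under $p$; since every spanning tree of $G$ has the same number of edges --- here the standing assumption that a red spanning tree exists is used, so that no blue edge is ever forced in for connectivity --- at least as many red, hence at most as many blue, edges are bought under $p$ as under $p'$, which is (L1). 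For (L2) the same inclusion $\mathcal{E}_{e}(p) \subseteq \mathcal{E}_{e}(p')$ holds for the blue edge $e$, this time because $p(e) = p'(e)$ keeps $e$ in place while the other blue edges only move later in the order, so $e$'s endpoints remain in distinct components, giving (L2). I expect the main obstacle to be pinning these set inclusions down precisely: one must be careful about the blue-before-red tie-breaking (which is exactly what makes ``$f$ is bought'' behave monotonically under price changes) and about blue edges priced above $c_{k}$, which simply occupy the top of the processing order.
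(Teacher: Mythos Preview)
Your proof is correct, and your construction of the assortment instance --- the product set $\mathcal{C}=B\times\{c_1,\dots,c_k\}$, the revenue $r((e,c_i))=|B|\cdot c_i$, the induced price assignment $p_S$, and the choice probabilities $\mathcal{P}$ --- coincides with the paper's. The place where you genuinely diverge is in the verification of the regularity axiom.

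The paper does not argue directly on graphs. Instead it proves a more general statement (Theorem~\ref{main_theorem_StackMatroid}) for the {\StackMatroid} problem: it builds an auxiliary matroid $M'$ on $R\cup\mathcal{C}$ (in the graph case, the graphical matroid of $G$ with each blue edge replaced by $k$ parallel copies), fixes one linear order $L$ on $R\cup\mathcal{C}$, and sets $\mathcal{P}((e,q),S)=1/|B|$ iff $(e,q)\in\greedy_{M'}(R\cup S,L)$. Regularity then falls out of two abstract properties of the greedy algorithm on matroids (Lemma~\ref{lem:greedy_matroid}): for $F\subseteq F'$, (i) $|\greedy(F',L)|\geq|\greedy(F,L)|$ and (ii) $F\cap\greedy(F',L)\subseteq\greedy(F,L)$. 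Your lemmas (L1) and (L2) are precisely the graph-theoretic incarnations of (i) and (ii), phrased in terms of varying the price assignment rather than varying the ground set under a fixed order; your ``components depend only on the set $\mathcal{E}_f(q)$ of edges seen so far'' observation is exactly what makes Kruskal behave like the greedy algorithm on the graphical matroid.

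What each approach buys: the paper's matroid formulation yields the result simultaneously for all matroids (and, as they remark, polymatroids) with no extra work, and isolates the structural reason regularity holds. Your argument is more elementary and self-contained --- it never leaves Kruskal and connected components --- which is pleasant for a reader who does not want matroid preliminaries, at the cost of being specific to graphs. One small remark: your parenthetical invoking the red-spanning-tree assumption in the proof of (L1) is not where that assumption is actually needed (any spanning tree of a connected $G$ has $|V|-1$ edges regardless); the assumption matters only to ensure that blue edges priced above $c_k$ are never bought, which you use when defining $p_S$ on edges absent from $S$.
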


Theorem~\ref{main_theorem_StackMST} can in fact be proved in the more general
setting of matroids and the corresponding {\em \StackMatroid} problem, as we now explain.
First we recall basic definitions regarding matroids.
A matroid is a pair $(E, \mathcal{X})$ with $E$ a finite set of elements and $\mathcal{X}$ a collection of subsets of $E$ called {\DEF independent sets} that satisfy the following three properties:
\begin{itemize}
\item $\emptyset \in \mathcal{X}$;
\item if $X \in \mathcal{X}$ and $Y \subseteq X$ then $Y \in \mathcal{X}$, and
\item if $X, Y \in \mathcal{X}$ with $|X| < |Y|$ then there exists
$y\in Y-X$ such that $X \cup \{y\} \in \mathcal{X}$.
\end{itemize}

An inclusion-wise maximal independent set of a matroid is said to be a {\DEF base} of the matroid.
Note that all bases have the same cardinality, as follows from the above axioms.

Given a linear ordering $L$ of the elements of a matroid $M=(E, \mathcal{X})$, the {\DEF greedy algorithm} computes a base of $M$ using $L$ as follows: Enumerating the elements of $E$ as $e_1, e_2, \dots, e_m$ according to $L$,
the greedy algorithm defines $m+1$ independent sets $I_0, I_1, \dots, I_m$ inductively, by first setting $I_0 \coloneqq \emptyset$, and then
for each $i=1, \dots, m$, by setting $I_i \coloneqq I_{i-1} \cup \{e_i\}$ if $I_{i-1} \cup \{e_i\}$ is independent (i.e.\ if it is in $\mathcal{X}$), and $I_i \coloneqq I_{i-1}$ otherwise.
The algorithm then outputs $I_m$, the last independent set it computed.
It is easily seen that the latter is a base of the matroid.

It is sometimes convenient to run the greedy algorithm on a subset $F$ of the elements of the matroid $M$ under consideration. The behaviour is exactly the same as described above but with respect to the elements $e_1, \dots, e_k$
of $F$ ordered according to the ordering induced by $L$.
The resulting independent set is of course not necessarily a base of $M$, though it is of maximal size among independent sets contained in $F$.

The following lemma is a well-known and fundamental property of the greedy algorithm on matroids.
We provide a proof in the Appendix, to keep the paper self-contained.
(For more background results on matroids and the greedy algorithm, the reader is referred to the textbook of~\cite{S03B}.)

\begin{lemma}
\label{lem:greedy_matroid}
Let $M=(E, \mathcal{X})$ be a matroid and let $L$ be a linear ordering of the elements of $E$.
For $F \subseteq E$, let $\greedy_{M}(F, L)$ denote the independent subset of $F$ obtained by running the greedy algorithm on the set $F$ using ordering $L$.
Then for every $F \subseteq F' \subseteq E$ the following two properties hold:
\begin{enumerate}[(i)]
\item $|\greedy_{M}(F', L)| \geq |\greedy_{M}(F, L)|$, and \\[.1ex]
\item $F \cap \greedy_{M}(F', L) \subseteq \greedy_{M}(F, L)$.
\end{enumerate}
\end{lemma}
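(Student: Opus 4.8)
The plan is to deduce both parts from two elementary facts about the greedy algorithm on a matroid $M=(E,\mathcal{X})$: (a) $\greedy_{M}(F,L)$ is always an inclusion-wise maximal independent subset of $F$ --- any $f\in F$ that greedy rejected made the then-current accepted set dependent, and a dependent set stays dependent under adding elements, so $\greedy_{M}(F,L)\cup\{f\}$ is dependent as well --- and (b) in a matroid all maximal independent subsets of a fixed $S\subseteq E$ have the same size and the same closure, so $\greedy_{M}(F,L)$ has size $r_{M}(F)$. Part (i) then follows at once: writing $A:=\greedy_{M}(F,L)$ and $A':=\greedy_{M}(F',L)$, both are independent with $A\subseteq F\subseteq F'$ and $A'$ maximal in $F'$; if $|A|>|A'|$, the exchange axiom applied to $A'$ and $A$ gives some $y\in A\setminus A'\subseteq F'$ with $A'\cup\{y\}\in\mathcal{X}$, contradicting maximality of $A'$. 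Hence $|A'|\geq|A|$, which is (i).

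For part (ii), the key is to characterise which elements greedy keeps. Fix $H\subseteq E$ and $e\in H$, let $H_{<e}$ be the set of elements of $H$ preceding $e$ in the order $L$, and let $A_{<e}$ be the subset of $H_{<e}$ accepted during the greedy run on $H$. By (a), $A_{<e}$ is a maximal independent subset of $H_{<e}$, so by (b) it has the same closure in $M$ as $H_{<e}$; and since the accepted set equals $A_{<e}$ at the moment $e$ is considered, $e$ is accepted in the run on $H$ if and only if $A_{<e}\cup\{e\}\in\mathcal{X}$, i.e.\ if and only if $e\notin\mathrm{cl}_{M}(H_{<e})$. Now apply this with $H=F'$ and with $H=F$: since $F\subseteq F'$ and $e\in F$, we have $F_{<e}=F'_{<e}\cap F\subseteq F'_{<e}$, hence $\mathrm{cl}_{M}(F_{<e})\subseteq\mathrm{cl}_{M}(F'_{<e})$. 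Therefore, if $e\in F$ is accepted in the run on $F'$ (so $e\notin\mathrm{cl}_{M}(F'_{<e})$), then $e\notin\mathrm{cl}_{M}(F_{<e})$, i.e.\ $e$ is accepted in the run on $F$ --- which is exactly the statement $F\cap\greedy_{M}(F',L)\subseteq\greedy_{M}(F,L)$.

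The only genuine content here is the characterisation ``accepted iff not spanned by the earlier elements'', which is where the matroid exchange axiom really enters (through fact (b), namely that a maximal independent subset of a set spans it); everything else is bookkeeping with the linear order $L$. If one would rather not name the closure operator, the same argument can be organised as an induction over the prefixes of $L$, carrying for each prefix $P$ the invariant that the elements of $F$ accepted so far in the $F'$-run are among those accepted so far in the $F$-run, together with the auxiliary fact that these accepted sets span (in $M$) the processed sets $P\cap F'$ and $P\cap F$ respectively; the inductive step for a newly accepted $e\in F$ then uses $\mathrm{cl}_{M}(P\cap F)\subseteq\mathrm{cl}_{M}(P\cap F')$.
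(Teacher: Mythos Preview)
Your proof is correct. For part~(i) you and the paper argue essentially the same way: both use that $\greedy_{M}(F,L)$ is a maximal independent subset of $F$ and then invoke the exchange axiom to derive a contradiction from $|A'|<|A|$.

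For part~(ii), however, your route is genuinely different from the paper's and, to my eye, cleaner. The paper argues by contradiction: assuming some $e_i\in G'-G$ with $i$ minimal, it sets $I:=G\cap\{e_1,\dots,e_{i-1}\}$, takes a maximal independent $K$ with $I\subseteq K\subseteq I\cup G'$, and then rules out each of the three possibilities $|K|<|G'|$, $|K|>|G'|$, $|K|=|G'|$ by separate exchange arguments. Your proof instead isolates a single reusable characterisation --- an element $e$ is accepted in the greedy run on $H$ if and only if $e\notin\mathrm{cl}_M(H_{<e})$ --- and then part~(ii) drops out immediately from monotonicity of the closure operator. What this buys you is a proof with no case analysis and no auxiliary set $K$; what the paper's approach buys is that it never names the closure operator, staying entirely within the language of independent sets and the exchange axiom. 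Your alternative phrasing via an induction on prefixes of $L$ is also fine and is essentially the same argument unrolled.
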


The {\StackMST} problem is a special case of the {\StackMatroid} problem, where
we are given a matroid $M=(E, \mathcal{X})$,
a bipartition of the elements set $E$ into a set $R$ of red elements and a set $B$ of blue elements, and
a cost function $c:R \to \mathbb{R}_{>0}$.
The objective is to choose a price assignment $p:B \to \mathbb{R}_{>0}$ for the blue elements
so that the revenue resulting from a consumer buying a minimum-weight basis of $M$ is maximised.
The latter revenue is the sum of the prices $p(e)$ of all blue elements $e\in B$ that appear in the basis.

Exactly as for {\StackMST} problem, in a {\StackMatroid} instance it is always assumed that
there exists a basis of $M$ that consists only of red elements, since otherwise the optimum
revenue is unbounded.
As recalled earlier, once prices are set for blue elements,
a minimum-weight basis of $M$ can be computed using the greedy algorithm with an
ordering of the elements in $R\cup B$ that is compatible with these costs/prices.
In case of ties we assume that priority is given to blue elements over red elements,
for the same reason as before.

We remark that, given a price assignment $p:B \to \mathbb{R}_{>0}$,
there might be more than one possible ordering of the elements in $R \cup B$
compatible with these prices, and we do not know which one will be used
by the customer when computing her minimum-weight basis.
Nevertheless, the resulting revenue is always the same
(and in particular the optimum revenue is well defined).
This follows from the following lemma, whose proof can be found in the Appendix.

\begin{lemma}
\label{lem:greedy_welldefined}
Consider an instance of the {\StackMatroid} problem, consisting of matroid $M=(E, \mathcal{X})$,
a bipartition of the elements set $E$ into a set $R$ of red elements and a set $B$ of blue elements, and a cost function $c:R \to \mathbb{R}_{>0}$.
Consider some price assignment $p:B \to \mathbb{R}_{>0}$.
Then for every two linear orderings $L$ and $L'$ of the elements in $R \cup B$ that are compatible
with these costs and prices, and for every $\gamma \in \R$, we have
\begin{dmath*}[compact]
|\{e\in B\cap \greedy_{M}(R \cup B, L):  p(e)=\gamma\}| =
|\{e\in B\cap \greedy_{M}(R \cup B, L'):  p(e)=\gamma\}|.
\end{dmath*}
In particular, the revenue resulting from price assignment $p$ is independent of the
particular ordering used by the customer.
\end{lemma}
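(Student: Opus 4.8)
The plan is to reduce everything to the elementary fact that, when the greedy algorithm is run on a matroid, the set of elements it selects out of any \emph{prefix} of the ordering is a maximal independent subset of that prefix, so its cardinality equals the rank of the prefix and in particular does not depend on how ties were broken.

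\textbf{Setup.} Let $w:R\cup B\to\R_{>0}$ be the weight function with $w|_{R}=c$ and $w|_{B}=p$. The orderings compatible with the costs and prices are exactly the linear orders of $R\cup B$ that are non-decreasing for $w$ and in which every blue element precedes every red element of the same $w$-value. For $\gamma\in\R$ put $E_{<\gamma}\coloneqq\{e\in R\cup B:w(e)<\gamma\}$, $B_{=\gamma}\coloneqq\{e\in B:p(e)=\gamma\}$, and $P_{\gamma}\coloneqq E_{<\gamma}\cup B_{=\gamma}$, and call $P\subseteq R\cup B$ a \emph{prefix} of a linear order $L$ if every element of $P$ precedes in $L$ every element of $(R\cup B)\setminus P$. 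The crucial observation is that, for every $\gamma$, both $E_{<\gamma}$ and $P_{\gamma}$ are prefixes of every compatible ordering: this is immediate for $E_{<\gamma}$, and for $P_{\gamma}$ it is precisely where the blue-over-red tie-breaking rule enters, since the elements outside $P_{\gamma}$ are those of weight $>\gamma$ together with the \emph{red} elements of weight exactly $\gamma$, and a blue element of weight $\gamma$ precedes all of these in any compatible order.

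\textbf{Prefix lemma.} I would first record: if $P$ is a prefix of $L$, then $\greedy_{M}(R\cup B,L)\cap P=\greedy_{M}(P,L)$, this set is a maximal independent subset of $P$, and hence $|\greedy_{M}(R\cup B,L)\cap P|=\mathrm{rk}_{M}(P)$, where $\mathrm{rk}_{M}(X)$ denotes the maximum size of an independent subset of $X$. The equality of the two greedy outputs holds because, $P$ being a prefix, the first $|P|$ iterations of the greedy algorithm on $R\cup B$ are literally the iterations of the greedy algorithm on $P$, while the remaining iterations only examine elements outside $P$ and no selected element is ever removed; maximality is the standard argument via downward-closure of independence (if some $e\in P$ could be added to $\greedy_{M}(P,L)$, it could already have been added at the moment greedy examined it); and all maximal independent subsets of $P$ share the cardinality $\mathrm{rk}_{M}(P)$ by the exchange axiom. (This refines Lemma~\ref{lem:greedy_matroid}(ii).)

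\textbf{Conclusion.} Fix $\gamma$. Since $\greedy_{M}(R\cup B,L)\cap E_{<\gamma}\subseteq\greedy_{M}(R\cup B,L)\cap P_{\gamma}$ and $B_{=\gamma}=P_{\gamma}\setminus E_{<\gamma}$, the prefix lemma gives
\begin{dmath*}[compact]
|\{e\in B\cap\greedy_{M}(R\cup B,L):p(e)=\gamma\}|
= |\greedy_{M}(R\cup B,L)\cap P_{\gamma}| - |\greedy_{M}(R\cup B,L)\cap E_{<\gamma}|
= \mathrm{rk}_{M}(P_{\gamma}) - \mathrm{rk}_{M}(E_{<\gamma}),
\end{dmath*}
and the right-hand side makes no reference to $L$; running the identical computation for $L'$ yields the displayed equality of the lemma. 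Finally, the revenue produced by $p$ equals $\sum_{e\in B\cap\greedy_{M}(R\cup B,L)}p(e)=\sum_{\gamma}\gamma\cdot|\{e\in B\cap\greedy_{M}(R\cup B,L):p(e)=\gamma\}|$, the sum ranging over the finitely many prices, so it too is independent of the ordering used by the customer. The one genuinely delicate point is the prefix claim for $P_{\gamma}$, which is exactly what forces the blue-before-red convention; everything else is bookkeeping with the greedy algorithm and the matroid axioms.
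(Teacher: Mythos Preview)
Your proof is correct and follows essentially the same route as the paper's: both exploit that the weight levels (with blue elements preceding red ones at each level, thanks to the tie-breaking convention) form a block partition compatible with every admissible ordering, and that the greedy algorithm selects a maximal independent subset of every prefix. The paper argues the per-block invariance by contradiction using the exchange axiom directly, whereas you compute the count of blue elements at price $\gamma$ explicitly as $\mathrm{rk}_M(P_\gamma)-\mathrm{rk}_M(E_{<\gamma})$; this closed-form rank expression is a slightly cleaner packaging of the same idea and makes the order-independence immediate.
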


As mentioned earlier,
it turns out that Theorem~\ref{main_theorem_StackMST} can be proved for any {\StackMatroid} problem,
that is, the only property that is really needed is that the set of forests of a graph form a matroid (its {\em graphical matroid}).
Moreover, we find it easier to use the language of matroids in the proof. 
We thus prove the following generalisation of Theorem~\ref{main_theorem_StackMST}, see the Appendix for the proof.

\begin{theorem} \label{main_theorem_StackMatroid}
Consider an instance of the {\StackMatroid} problem, consisting of matroid $M=(E, \mathcal{X})$,
a bipartition of the elements set $E$ into a set $R$ of red elements and a set $B$ of blue elements, and
a cost function $c:R \to \mathbb{R}_{>0}$.
Let $c_{1}, \dots, c_{k}$ denote the different values taken by the cost function,
in non-decreasing order.
Then one can define an instance of the assortment problem under a regular discrete choice model
(i.e.\ a finite set $\mathcal{C}$,
a revenue function $r: \mathcal{C} \rightarrow \mathbb{R}_{>0}$, and
a system of choice probabilities $\mathcal{P}$ satisfying axioms (\ref{discrete_ineq_1}), (\ref{discrete_eq_1}), (\ref{discrete_eq_2}) and (\ref{axiom:regularity}))
with the same optimal revenue. Moreover,
the uniform pricing and revenue-ordered assortments strategies on respective instances are
equivalent in the following sense:
\begin{itemize}
\item for each $i\in [k]$ there exists $S\subseteq \mathcal{C}$ consisting of all
products $y' \in \mathcal{C}$ with $r(y') \geq r(y)$ for some $y \in \mathcal{C}$ such that
the price assignment assigning price $c_i$ to each element $e\in B$
for the  {\StackMatroid}  instance has the same revenue as the assortment $S$, and \\
\item for each $y\in \mathcal{C}$, there exists $i \in [k]$ such that
the assortment consisting of all products $y'\in \mathcal{C}$ with $r(y') \geq r(y)$ has the same
revenue as the price assignment assigning price $c_i$ to each blue element of the  {\StackMatroid}  instance.
\end{itemize}
\end{theorem}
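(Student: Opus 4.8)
The plan is to follow the template of the proof of Theorem~\ref{main_theorem_UDP_min}, but with Kruskal's greedy algorithm on the matroid $M'$ playing the role that the consumers played there. First, by the straightforward analogue of Lemma~\ref{lemma_pricing_to_values} (see also~\citet{cardinal2011stackelberg}), I may assume that the optimal price assignment $p:B\to\mathbb{R}_{>0}$ takes only values in $\{c_1,\dots,c_k\}$. I then build the assortment instance as follows (assume $B\neq\emptyset$, the other case being trivial). The product set is $\mathcal{C}\coloneqq B\times\{c_1,\dots,c_k\}$: each blue element is ``copied'' once per cost level, the copy $(e,c_j)$ standing for ``$e$ priced at $c_j$''. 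The revenue function is $r((e,c_j))\coloneqq |B|\cdot c_j$, the factor $|B|$ cancelling the $1/|B|$ appearing below. Let $M'$ be the matroid obtained from $M$ by replacing each blue element with $k$ parallel copies (so a subset of $R\cup\mathcal{C}$ is independent in $M'$ iff it contains at most one copy of each blue element and the corresponding subset of $R\cup B$ is independent in $M$), and fix once and for all a linear order $L$ of $R\cup\mathcal{C}$ that orders red elements by cost, places the copy $(e,c_j)$ at ``cost level'' $c_j$, gives priority to blue copies over red elements in case of ties, and breaks remaining ties arbitrarily but consistently. The system of choice probabilities is defined by: given $S\subseteq\mathcal{C}$, let the customer compute $\greedy_{M'}(R\cup S,L)$ (the red elements are always available), set $\mathcal{P}((e,c_j),S)\coloneqq \tfrac{1}{|B|}$ if $(e,c_j)\in\greedy_{M'}(R\cup S,L)$ and $0$ otherwise, and $\mathcal{P}(0,S)\coloneqq 1-\sum_{(e,c_j)\in S}\mathcal{P}((e,c_j),S)$.

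Axioms~\eqref{discrete_ineq_1}--\eqref{discrete_eq_2} are then immediate: in particular $\sum_{(e,c_j)\in S}\mathcal{P}((e,c_j),S)$ equals $1/|B|$ times the number of blue copies in $\greedy_{M'}(R\cup S,L)$, which is at most one per blue element and hence at most $|B|$. The crux is the regularity axiom~\eqref{axiom:regularity}, and here Lemma~\ref{lem:greedy_matroid} does all the work. Given $S\subseteq S'\subseteq\mathcal{C}$, applying Lemma~\ref{lem:greedy_matroid}(ii) with $F\coloneqq R\cup S$ and $F'\coloneqq R\cup S'$ gives $(R\cup S)\cap\greedy_{M'}(R\cup S',L)\subseteq\greedy_{M'}(R\cup S,L)$. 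This at once yields $\mathcal{P}((e,c_j),S)\geq\mathcal{P}((e,c_j),S')$ for every $(e,c_j)\in S$, and it also shows that the red elements selected at $S'$ form a subset of those selected at $S$; combined with Lemma~\ref{lem:greedy_matroid}(i), which gives $|\greedy_{M'}(R\cup S',L)|\geq|\greedy_{M'}(R\cup S,L)|$, it follows that the number of blue copies selected does not decrease when passing from $S$ to $S'$, i.e.\ $\mathcal{P}(0,S)\geq\mathcal{P}(0,S')$. This is precisely the ``loose connection'' between the matroid axioms and the regularity axiom alluded to before the statement.

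It remains to match revenues and strategies. The revenue of an assortment $S$ is $\sum_{(e,c_j)\in\greedy_{M'}(R\cup S,L)}c_j$, and collapsing the parallel copies shows that $\greedy_{M'}(R\cup S,L)$ projects to a minimum-weight base of $M$ for the price assignment that prices $e$ at $c_j$ whenever copy $(e,c_j)$ is selected and above $c_k$ otherwise; hence the revenue of $S$ equals the {\StackMatroid} revenue of that price assignment, so the assortment optimum is at most the {\StackMatroid} optimum. Conversely, for a price assignment $p$ with values in $\{c_1,\dots,c_k\}$ put $S_p\coloneqq\{(e,c_j):c_j\geq p(e)\}$; since the copies of $e$ in $S_p$ sit at consecutive cost levels $p(e),\dots,c_k$ and, once a copy of $e$ is rejected, every higher-level copy of $e$ is rejected as well (being parallel to it), only the copy $(e,p(e))$ can enter $\greedy_{M'}(R\cup S_p,L)$, and one checks that it does so exactly when $e$ is bought under $p$; therefore $S_p$ has the same revenue as $p$, and the two optima coincide. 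Finally, a revenue-ordered assortment is precisely $B\times\{c_i,\dots,c_k\}=S_p$ for the uniform price $p\equiv c_i$, so it has the same revenue as uniform pricing at $c_i$, which gives both bullet points of the statement. Theorem~\ref{main_theorem_StackMST} follows by taking $M$ to be the graphical matroid of $G$.

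The main obstacles I expect are bookkeeping rather than conceptual. The first is well-definedness: the customer's minimum-weight base, and hence the revenue, must not depend on the tie-breaking among equal-cost/price elements, which is exactly the content of Lemma~\ref{lem:greedy_welldefined} and must be invoked to even make sense of the {\StackMatroid} revenue. The second, and more technical, is the ``collapsing'' step used twice above: one must argue carefully that running the greedy on $M'$ over $R\cup S$ behaves, after identifying parallel copies, like running the greedy on $M$ under the induced price assignment --- the key point being that a rejected low-level copy $(e,c_j)$ leaves the subsequent processing of all later elements unchanged, because rejection means that $e$ already lies in the span of the currently selected set, a property preserved under enlargement, so that an induction on the processing order goes through. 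Once these two facts are in place, everything else is a routine verification.
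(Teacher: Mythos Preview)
Your proposal is correct and follows essentially the same route as the paper: the product set $\mathcal{C}=B\times\{c_1,\dots,c_k\}$, the revenue $r((e,c_j))=|B|\cdot c_j$, the parallel-copy matroid $M'$, the fixed order $L$, and the definition of $\mathcal{P}$ via $\greedy_{M'}(R\cup S,L)$ are all exactly what the paper does, and your verification of regularity via Lemma~\ref{lem:greedy_matroid} (parts (i) and (ii)) matches the paper's argument line by line.

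The one place you diverge slightly is in the converse direction of the revenue correspondence. Given a price assignment $p$ with values in $\{c_1,\dots,c_k\}$, you set $S_p\coloneqq\{(e,c_j):c_j\geq p(e)\}$, whereas the paper takes the sparser set $S\coloneqq\{(e,p(e)):e\in\greedy_{M}(R\cup B,L^*)\}$ consisting only of the copies actually bought. Both work, but your choice has the pleasant side effect that for uniform $p\equiv c_i$ the set $S_p=B\times\{c_i,\dots,c_k\}$ is literally the revenue-ordered assortment, so both bullet points drop out immediately; the paper instead derives them from the forward map $S\mapsto p_S$. Two small points of care: first, the paper allows $p$ to take the value $+\infty$ in addition to $\{c_1,\dots,c_k\}$ (elements that should never be bought), which your $S_p$ handles just as well by simply omitting all copies of such $e$; second, the ``collapsing'' step you flag as an obstacle is handled in the paper by fixing the tie-breaking of the customer's order $L^*$ to agree with $L$ and invoking Lemma~\ref{lem:greedy_welldefined} to justify that this choice is harmless --- you should do the same rather than leave it as ``one checks''.
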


We note that specialising Theorems~\ref{theorem_revenue_k}, \ref{theorem_revenue_ratio}, and~\ref{theorem_demand_ratio_in_optimal} to the Stackelberg Minimum Spanning Tree problem via Theorem~\ref{main_theorem_StackMST}, we obtain exactly the three bounds on the uniform pricing algorithm proved by~\citet{cardinal2011stackelberg} (see Theorem~3 in that paper). As shown by the latter authors, these three bounds are already tight in this specific setting.

The reader familiar with polymatroids will undoubtedly have noticed that
the proofs of the results presented in this section extend directly to the setting of polymatroids.
This is because all properties of the greedy algorithm we used hold more generally in that setting.
Therefore, Theorem~\ref{main_theorem_StackMatroid} remains
true for the natural extension of the {\StackMatroid} problem to polymatroids.
We nevertheless decided to present the material of this section in the language of matroids
to simplify the exposition,
the reader interested in the definition and properties of polymatroids is referred to~\cite{S03B}.

\section{Nesting-by-fare-order property under the multi-period setting} \label{sec:nesting-by-fare}
In this short section we consider the multi-period model introduced by \citet{talluri2004revenue} and analyse the solution structure of the assortments offered as the available capacity and remaining time decreases. We prove that the two monotonicity results that were originally shown by \citet{rusmevichientong2014assortment} for Mixed MNL models naturally extend to the case of regular discrete choice models. As discussed in  \citet{rusmevichientong2014assortment}, these results could potentially be used in the implementation of standard revenue management systems.

First, let us define formally the multi-period model of \citet{talluri2004revenue}: It is assumed that sales occur over a finite horizon of $T$ time periods and that there is a maximum number $Q$ of items that the firm can sell along the complete time horizon. For a concrete example, suppose that the firm is selling seats on a flight leg ($Q$ represents the number of seats) and that items in $\mathcal{C}$ represent the different fare classes, each having an associated price (or revenue). At each time period, based on the time periods remaining and the available capacity, the firm selects a choice set $S \subseteq \mathcal{C}$ to offer to customers. Then, a single customer arrives and decides which item to buy from $S$ (if any).

Let us assume that consumers follow a regular discrete choice model, and let $\mathcal{P}$ denote the corresponding system of choice probabilities. In order to simplify notations, we further assume that items in $\mathcal{C}=\{1,\dots,N\}$ are sorted in non-increasing order of revenue, that is, $r(i) \geq r(i+1)$ for each $i\in \{1,\dots,N-1\}$.
We suppose from now on that the firm always offers revenue-ordered assortments,  during the whole time horizon.

In the discussion below, it will be more convenient to measure the remaining time before the end of the time horizon than the time elapsed since the beginning.
Suppose thus that we are $t \in [T]$ periods prior to the end of the time horizon.
Assume further that the available inventory of items that the firm can sell is $q \in [Q]$.
As in Section \ref{section_general_performance_guarantees}, let $r_1,r_2,\dots,r_k$ be the distinct values taken by the revenue function $r$, sorted in non-decreasing order (therefore, $r(1)=r_k$ and $r(N)=r_1$).
Let $j(\ell)$ be the index such that  $\{1, 2, \dots, j(\ell)\}$ is the set of the items with revenue at least $r_{\ell}$, for $\ell =1,\dots,k$.
Define $\mathcal{J}_t	(q)$ as the expected revenue of the revenue-ordered strategy over the remaining $t$ periods, given that there are $q$ units left in the inventory at the current time period.
Let also $\mathcal{J}_t(q, \ell)$ denote the expected revenue obtained, given that there are $q$ units left in the inventory at the current time period, and we offer the set consisting of all items with revenue at least $r_{\ell}$ at the current time period, and then proceed with the revenue-ordered strategy for the remaining time periods.
Thus $$\mathcal{J}_t(q) = \max_{\ell \in [k]} \mathcal{J}_t(q,\ell).$$
Moreover, if  $t > 0$ and $q > 0$, then $\mathcal{J}_t(q,\ell)$ satisfies the following relation:
\begin{dmath*}[compact]
 \mathcal{J}_t(q,\ell) = \sum_{x=1}^{j(\ell)} \mathcal{P}(x,\{1,\dots,j(\ell)\})\cdot (r(x) + \mathcal{J}_{t-1}(q-1)) +
 \mathcal{P}(0,\{1,\dots,j(\ell)\}) \cdot \mathcal{J}_{t-1}(q).
\end{dmath*}
(As expected, we set $\mathcal{J}_t(q, \ell) = 0$ in case $q=0$ or $t=0$.)

Let $\ell^*_t(q) := \min \{\ell \in [k]  :  \mathcal{J}_t(q,\ell) = \mathcal{J}_t(q)\}$.

The following theorem, which generalises Theorem 6 in \citet{rusmevichientong2014assortment}, establishes the monotinicity properties of revenue ordered assortments alluded to at the beginning of this section.
In words, it shows that the function $\ell^*$ is non-increasing in the remaining capacity and non-decreasing in the time remaining.

\begin{theorem} \label{nesting_by_fare-order_theorem}
Let  $t \in [T]$ and let $q \in [Q]$.
Then
\begin{itemize}
\item $\ell^*_t(q) \leq \ell^*_{t}(q-1)$ if $q \geq 2$, and \\[0.5ex]
\item $\ell^*_t(q) \geq \ell^*_{t-1}(q)$ if $t \geq 2$.
\end{itemize}
\end{theorem}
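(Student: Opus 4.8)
The plan is to reduce both statements to two structural properties of the value functions $\mathcal{J}_t$: that $q \mapsto \mathcal{J}_t(q)$ is concave, and that $q \mapsto \mathcal{J}_t(q) - \mathcal{J}_t(q-1)$ is non-decreasing in $t$. First I would put the recursion into a cleaner shape. For $\ell \in [k]$, write $R(\ell) \coloneqq \sum_{x=1}^{j(\ell)}\mathcal{P}(x,\{1,\dots,j(\ell)\})\,r(x)$ and $d(\ell) \coloneqq \sum_{x=1}^{j(\ell)}\mathcal{P}(x,\{1,\dots,j(\ell)\})$ for the one-period expected revenue and purchase probability of the assortment $\{1,\dots,j(\ell)\}$, and set $\Delta_{t}(q) \coloneqq \mathcal{J}_{t}(q) - \mathcal{J}_{t}(q-1)$ (with $\mathcal{J}_t(0) = 0$). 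Then the given recursion becomes, for $t,q \geq 1$,
\[
\mathcal{J}_t(q,\ell) = \mathcal{J}_{t-1}(q) + R(\ell) - d(\ell)\,\Delta_{t-1}(q),
\]
so that $\mathcal{J}_t(q) = \mathcal{J}_{t-1}(q) + h(\Delta_{t-1}(q))$ with $h(\delta) \coloneqq \max_{\ell\in[k]}\bigl(R(\ell) - d(\ell)\delta\bigr)$, and, crucially, $\ell^*_t(q)$ is exactly the smallest maximiser of $\ell \mapsto R(\ell) - d(\ell)\delta$ at $\delta = \Delta_{t-1}(q)$.

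The first lemma I would prove is that this smallest maximiser, call it $\ell^*(\delta)$, is non-decreasing in $\delta$. The only input is that $d(1) \geq d(2) \geq \cdots \geq d(k)$, which is immediate from Lemma~\ref{lemma_property_regularDCM} because the sets $\{1,\dots,j(\ell)\}$ are nested and shrink as $\ell$ grows. Given this, the lemma follows from a short exchange argument: if $\delta_1 \leq \delta_2$ but $\ell_1 \coloneqq \ell^*(\delta_1)$ and $\ell_2 \coloneqq \ell^*(\delta_2)$ satisfy $\ell_1 > \ell_2$, then comparing the two affine functions at $\delta_1$ (where the minimal-index tie-break forces a strict inequality) and at $\delta_2$ yields $\bigl(d(\ell_1) - d(\ell_2)\bigr)(\delta_2 - \delta_1) > 0$, contradicting $d(\ell_1) \leq d(\ell_2)$.

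The second step is to show, by induction on $t$, that $\mathcal{J}_t$ is concave in $q$ — equivalently that $\Delta_t(q)$ is non-increasing in $q$ — and that $\Delta_t(q) \geq \Delta_{t-1}(q)$ for every $q$. The facts doing the work are that $h$ is convex and piecewise affine with every slope $-d(\ell) \in [-1,0]$ (by axioms~(\ref{discrete_ineq_1}) and~(\ref{discrete_eq_2})), hence $h$ is non-increasing and $\delta \mapsto \delta + h(\delta)$ is non-decreasing, together with $h(r_k) = 0$ (since $\{1,\dots,j(k)\}$ consists only of items of revenue $r_k$, so $R(k) = r_k\,d(k)$ while $R(\ell) \leq r_k\,d(\ell)$ for all $\ell$). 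Substituting the inductive hypothesis $\Delta_{t-1}(q+1) \leq \Delta_{t-1}(q) \leq \Delta_{t-1}(q-1)$ into $\mathcal{J}_t(q) = \mathcal{J}_{t-1}(q) + h(\Delta_{t-1}(q))$ and cancelling terms, both concavity of $\mathcal{J}_t$ and the bound $\Delta_t(q) \geq \Delta_{t-1}(q)$ drop out using only the monotonicity of $h$ and of $\delta \mapsto \delta + h(\delta)$. The inventory boundary $q=1$, where $\mathcal{J}_t(0)=0$ is imposed rather than produced by the recursion, needs a separate elementary check; it is convenient to carry along the auxiliary fact $\mathcal{J}_t(1) \leq r_k$ (itself an easy induction), which makes $h(\Delta_{t-1}(1)) = h(\mathcal{J}_{t-1}(1)) \geq h(r_k) = 0$ available.

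Finally I would assemble the pieces. Since $\ell^*_t(q) = \ell^*(\Delta_{t-1}(q))$ and $\ell^*(\cdot)$ is non-decreasing: for $q \geq 2$, concavity of $\mathcal{J}_{t-1}$ gives $\Delta_{t-1}(q) \leq \Delta_{t-1}(q-1)$, hence $\ell^*_t(q) \leq \ell^*_t(q-1)$; and for $t \geq 2$, the monotonicity $\Delta_{t-1}(q) \geq \Delta_{t-2}(q)$ gives $\ell^*_t(q) = \ell^*(\Delta_{t-1}(q)) \geq \ell^*(\Delta_{t-2}(q)) = \ell^*_{t-1}(q)$. I expect the main difficulty to lie in the structural induction of the third step — in particular getting the case analysis and the monotonicity bookkeeping right at the boundary $q=1$ — whereas the lemma on $\ell^*(\delta)$ and the final assembly are short once the setup is in place.
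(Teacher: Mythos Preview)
Your proposal is correct and follows essentially the same architecture as the paper's proof: rewrite the recursion so that $\ell^*_t(q)$ is the smallest maximiser of $\ell \mapsto R(\ell) - d(\ell)\delta$ at $\delta = \Delta_{t-1}(q)$, show that this smallest maximiser is monotone in $\delta$ (this is precisely the paper's Lemma~\ref{lemma_multi_period}, up to a sign flip in the parametrisation), and then combine with the two structural facts about the marginal value of capacity. The only substantive difference is in the second step: the paper does not prove concavity of $q \mapsto \mathcal{J}_t(q)$ and time-monotonicity of $\Delta_t(q)$ from scratch, but instead invokes Lemmas~4 and~5 of \citet{talluri2004revenue}, noting that their arguments go through for arbitrary choice models (and, implicitly, for the restricted action space of revenue-ordered assortments). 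Your inductive argument via the convex envelope $h$, using that $h$ is non-increasing and $\delta \mapsto \delta + h(\delta)$ is non-decreasing, together with the auxiliary bound $\mathcal{J}_t(1) \leq r_k$ to handle the boundary, is a clean self-contained substitute for that citation; it buys independence from the external reference at the cost of a short extra induction.
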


The proof of Theorem~\ref{nesting_by_fare-order_theorem} is given in the Appendix.

\section{Final remarks}
In this paper we studied the performance of a simple and well-known heuristic for the assortment problem, known as revenue-ordered assortment. We provided three worst-case performance guarantees that are written as a function of product prices as well as the distribution of product purchases in an optimal solution. An appealing feature of our performance guarantees is that they simply rely on the following condition known as regularity: the probability of selecting a specific product from the set being offered cannot increase if the set is enlarged. This condition is satisfied by almost all discrete choice models considered in the revenue management and choice theory literature, and in particular by the well-known random utility model (RUM). Furthermore, we show that the three bounds are exactly tight, even when one restricts to the RUM subfamily.

We found a remarkable connection between two pricing problems called {\em unit demand envy-free pricing} and {\em Stackelberg minimum spanning tree} studied in the theoretical computer science literature and the assortment problem under a regular discrete choice model. In essence, we show that both pricing problems can be seen as assortment problems under discrete choice models satisfying the regularity condition. In addition, the revenue-ordered assortments strategy correspond then to the well-studied {\em uniform pricing} heuristic. Interestingly, the bounds from the revenue-ordered assortment are able to match and unify known results on uniform pricing that were proved separately in the literature for the envy-free pricing problems and the Stackelberg minimum spanning tree problem.

\section*{Acknowledgments}
We thank Victor Aguiar, Adrian Vetta, and Gustavo Vulcano for their insightful comments that greatly improved the paper.
We are also grateful to the anonymous referees for their careful reading of the paper and helpful remarks.

\appendix

\bibliographystyle{plainnat}
\bibliography{references_discrete_choice}

\section*{Appendix A}

See Table~\ref{tab:list_of_models} for a (non-exhaustive) list of discrete choice models that satisfy the regularity axiom.

\begin{table}[h!]
  \centering
  \label{tab:list_of_models}
  \begin{tabular}{|l|l|}
    \hline
    \textbf{Discrete Choice Model} & \textbf{Reference} \\
    \hline
    MNL &  \citet{luce1959} \\
    \hline
    Mixed MNL &  \citet{rusmevichientong2014assortment} \\
    \hline
    Markov Chain Model &  \citet{blanchet2013markov} \\
    \hline
    Mallows Model &  \citet{mallows1957non} \\
    \hline
    Mixture of Distance Based Models & \citet{murphy2003mixtures} \\
    \hline
    Stochastic Preference & \citet{block1960random} \\
    \hline
    Random Utility Model & \citet{thurstone1927law} \\
    \hline
    Bounded Accumulation Model & \citet{webb2016dynamics} \\
    \hline
    Additive Perturbed Utility (*) & \citet{fudenberg2015stochastic} \\
    \hline
    Hitting Fuzzy Attention Model (*) & \citet{aguiar2015stochastic} \\
    \hline
    Non-Additive Random Utility (*) & \citet{mcclellon2015non} \\
        \hline
  \end{tabular}
  \caption{List of discrete choice models that satisfy the regularity axiom (non-exhaustive). Models marked with an asterisk are not RUM.}
\end{table}

\section*{Appendix B}

In this section we provide the proofs missing from the main text.

\begin{proof}[Proof of Lemma~\ref{lem:submodularity}]
Recall from Section \ref{sec:rum} that every discrete choice model based on random utility can also be described by means of a probability distribution over all rankings of the elements in $\mathcal{C} \cup \{0\}$ (see for instance~\citet{block1960random}). 
Let $\mathcal{S}_{N+1}$ denote the set of all such rankings, and let $\alpha_i$ be the probability of choosing the ranking $\prec_i \in \mathcal{S}_{N+1}$. Then the system of choice probabilities for our RUM can be written as

\begin{eqnarray*}
P(x,S)=\sum_{\prec_i \in \mathcal{S}_{N+1}}\alpha_i \mathds{1} \{ x \prec_i y \quad \forall y \in (S \setminus\{x\}) \cup\{0\}\}
\end{eqnarray*}
for all $S \subseteq \mathcal{C}$ and  $x \in S$. 

Consider now, for every ranking $\prec_i \in \mathcal{S}_{N+1}$ , the function $D_i(S):2^{\mathcal{C}}\to\{0,1\}$ defined as
\begin{eqnarray*}
D_i(S)= \sum_{x \in S} \mathds{1} \{ x \prec_i y \quad \forall y \in (S \setminus\{x\}) \cup\{0\}\}.
\end{eqnarray*}
Observe that $D_i(S)=0$ if $\prec_i$ puts element $0$ before all elements in $S$, and $D_i(S)=1$ otherwise. 
Let us show that $D_i(S)$ is submodular. That is, for every $S \subseteq S' \subseteq \mathcal{C}$ and $x \in \mathcal{C}$,
\begin{eqnarray}\label{submodularity_inequality}
D_i(S'\cup \{x\}) - D_i(S') \leq  D_i(S\cup \{x\}) - D_i(S).
\end{eqnarray}
First, observe that $D_i$ is monotone (i.e.\ $D_i(S) \leq D_i(S')$ if $S \subseteq S'$). Second, observe that if the LHS in \eqref{submodularity_inequality} is equal to zero, the inequality directly follows. Consider then the case in which the LHS is equal to 1.
This implies that $D_i(S')=0$, and thus $D_i(S)=0$, and that $D_i(S'\cup\{x\})=1$. Hence, $x \prec_i y$ for all $y \in S' \cup \{0\}$, which means that $x \prec_i y$ for all $y \in S \cup \{0\}$. Therefore, $D_i(S\cup \{x\})=1$, and the inequality \eqref{submodularity_inequality} follows.

To finish the proof of the lemma, observe that
\begin{eqnarray*}
f(S) &=& \sum_{x \in S} \mathcal{P}(x,S) \\
&=& \sum_{x \in S} \sum_{\prec_i \in \mathcal{S}_{N+1}}\alpha_i \mathds{1} \{ x \prec_i y \quad \forall y \in (S \setminus\{x\}) \cup\{0\}\} \\
&=& \sum_{\prec_i \in \mathcal{S}_{N+1}}\alpha_i \sum_{x \in S} \mathds{1} \{ x \prec_i y \quad \forall y \in (S \setminus\{x\}) \cup\{0\}\} \\
&=& \sum_{\prec_i \in \mathcal{S}_{N+1}}\alpha_i D_i(S).
\end{eqnarray*}
Since the sum of submodular functions is also submodular, the proof is now complete.
\end{proof}

\begin{proof}[Proof of Lemma~\ref{lemma_pricing_to_values}]
Let $p: [n] \rightarrow \mathbb{R}_{>0}$ denote an optimal price assignment and suppose there is an item $x\in [n]$ such that $p(x)=q \notin \{v_1,\dots,v_m\}$.  If no consumer buys product $x$ at price $q$ then the seller would obtain the same revenue if the price for product $x$ were changed to $v_m$, as is easily checked. If, on the other hand, there is a consumer that buys product $x$ at price $q$, we do the following modification.
Let $w$ denote the lowest consumer valuation that is greater or equal to $q$, that is, $w = \min \{ v_j : j \in [m] \text{ and } v_j \geq q \}$. Suppose now that the seller modifies the price assignment $p$ by setting $p(x) \coloneqq w$.
First, observe that every consumer buying a product different from $x$ is unaffected by this modification since the price of product $x$ has not decreased. Second, every consumer that was buying product $x$ before
either still buys it under the new price assignment, or buys another product $y$ from her set with price
$q \leq p(y) \leq w$.
Hence, the seller's revenue does not decrease.
Repeating this argument at most $n$ times, we eventually obtain  a price assignment that satisfies this lemma.
\end{proof}

\begin{proof}[Proof of Corollary~\ref{static_pricing_known_guarantee_UDP_min_thm}]
To be precise, this corollary follows from the proof of
Theorem~\ref{main_theorem_UDP_min}, as we now explain.
Consider the assortment problem instance associated to a given $UDP_{min}$ instance  described in that proof.
As in Theorem \ref{theorem_demand_ratio_in_optimal}, let $S^{*} \subseteq \mathcal{C}$ denote an optimal solution to this instance and let $$N_i\coloneqq \sum_{\substack{x\in S^{*}, \\ r(x) \geq r_{i}}} \mathcal{P}(x, S^{*})$$
for each $i \in [k]$.
For convenience, let us scall these quantities by a factor $m$:
For each $i \in [k]$, let $$\hat{N}_i\coloneqq N_i \cdot m.$$
Then $\hat{N}_1$ is exactly the number of consumers of the $UDP_{min}$ instance that buy some item in the optimal solution. Thus,
\begin{eqnarray}\label{N1_less_eq_than_m}
\hat{N}_1 \leq m.
\end{eqnarray}
Moreover, letting $\ell \in [k]$ be maximum such that $\hat{N}_{\ell} > 0$, we have that $\hat{N}_{\ell}$ is the number of consumers that buy the most expensive item that was sold in the optimal solution. Hence,
\begin{eqnarray}\label{ell_at_least_one}
\hat{N}_{\ell} \geq 1.
\end{eqnarray}
By Theorem \ref{theorem_demand_ratio_in_optimal}, the revenue-ordered assortments strategy approximates the optimum to within a factor of
$$
\frac{1}{1 + \ln (N_{1} / N_{\ell})}
= \frac{1}{1 + \ln (\hat{N}_{1} / \hat{N}_{\ell})}
\leq \frac{1}{1 + \ln m},
$$
where the last inequality follows from \eqref{N1_less_eq_than_m} and \eqref{ell_at_least_one}).
The claim then follows from Theorem~\ref{main_theorem_UDP_min}.
\end{proof}

\begin{proof}[Proof of Lemma~\ref{lem:greedy_matroid}]
Let $G \coloneqq \greedy_{M}(F, L)$ and $G'\coloneqq \greedy_{M}(F', L)$.
To see the first property, suppose for a contradiction that
$|G'| < |G|$.
Enumerate the elements of $F'$ as $e_1, \dots, e_k$ following the ordering $L$.
Let $e_i \in G - G'$ be such that $G' \cup \{e_i\}$ is independent.
In particular, $(G' \cap \{e_1, \dots, e_{i-1}\}) \cup \{e_i\}$ is also independent.
This implies that when greedy considered the $i$-th element of $F'$, $e_i$, it added $e_i$ in the set being built, a contradiction.

Let us now prove the second property.
Enumerate the elements of $F$ this time as $e_1, \dots, e_k$ according to $L$.
Arguing by contradiction, suppose that $e_i \in G' - G$ for some $i$.
We may assume that index $i$ is minimum with this property.
Thus, $G' \cap \{e_1, \dots, e_{i-1}\} \subseteq G$.

Let $I\coloneqq G \cap \{e_1, \dots, e_{i-1}\}$.
Thus $I \cup \{e_i\}$ is not independent, by definition of the greedy algorithm.
Let $K$ be an independent set satisfying $I \subseteq K \subseteq I \cup G'$ and inclusion-wise maximal with this property.
Let us point out that $e_i \notin K$, since $I \subseteq K$.

{\bf Case~1: $|K| < |G'|$.}
Since $G'$ is independent there exists $e\in G'-K$ such that $K\cup \{e\}$ is independent, which contradicts the maximality of $K$.

{\bf Case~2: $|K| > |G'|$.}
Since $K$ is independent there exists $e\in K- G' = I - G'$ such that $G'\cup \{e\}$ is independent.
However, when building $G'$ the greedy algorithm did not pick element $e$ because the subset $J \subseteq G'$ of elements it already picked when considering $e$ was such that $J \cup \{e\}$ is not independent.
Since $J \cup \{e\} \subseteq G'\cup \{e\}$, this contradicts the fact that
$G'\cup \{e\}$ is independent.

{\bf Case~3: $|K| = |G'|$.}
Here, we use that $|K| > |G' - \{e_i\}|$, which implies that there exists
$e\in K - (G' - \{e_i\})$ such that  $(G' - \{e_i\}) \cup\{e
\}$ is independent.
Since $e_i \notin K$, we have $K - (G' - \{e_i\}) = K - G' = I - G'$, and
it follows that $e=e_j$ for some $j <i$, meaning that $e_j$ was considered before $e_i$ by the greedy algorithm when building set $G'$.
Yet the algorithm did not pick $e_j$ when it considered $e_j$, because the subset $J \subseteq G'$ of elements it already picked at that time could not be extended into an independent set by adding $e_j$.
However, $J \cup \{e\} \subseteq (G' - \{e_i\}) \cup\{e_{j}
\}$, contradicting the independence of $(G' - \{e_i\}) \cup\{e_{j}
\}$.

Since in each of the three cases we derived a contradiction, we deduced that the second property holds, as claimed.
\end{proof}

\begin{proof}[Proof of Lemma~\ref{lem:greedy_welldefined}]
This is a consequence of the following more general
property of the greedy algorithm:
Suppose that $E$ is partitioned into $\ell$ blocks
$E_1, \dots, E_{\ell}$, and that $L$ and $L'$ are two linear orderings of
the elements in $E$ that agree on this block partition, in the sense that
$e <_{L} f$ and $e <_{L'} f$ for all $i\in \{1, \dots, k-1\}$ and
$e\in E_i, f\in E_{i+1}$.
Then
\begin{dmath}[compact]
\label{eq:greedyprop}
|\greedy_{M}(E, L) \cap E_i| =
|\greedy_{M}(E, L') \cap E_i|
\quad \forall i  \in \{1, \dots, k\}.
\end{dmath}
To see that it implies the lemma, it suffices to take the partition of $E$ obtained by
partitioning $R$ according to the costs of the red elements
and $B$ according to the prices of the blue elements, and ordering the resulting blocks in
non-decreasing order of costs/prices, giving priority to blue over red in case of ties.

Thus it remains to prove~\eqref{eq:greedyprop}.
Arguing by contradiction, suppose that the property does not hold
and let $i$ be the smallest index such that
$|\greedy_{M}(E, L) \cap E_i| \neq |\greedy_{M}(E, L') \cap E_i|$.
Say without loss of generality
$|\greedy_{M}(E, L) \cap E_i| < |\greedy_{M}(E, L') \cap E_i|$.
By our choice of $i$,
$|\greedy_{M}(E, L) \cap (E_1 \cup \cdots \cup E_{i-1})| =
|\greedy_{M}(E, L') \cap (E_1 \cup \cdots \cup E_{i-1})|$.
Since $|\greedy_{M}(E, L) \cap (E_1 \cup \cdots \cup E_{i})| <
|\greedy_{M}(E, L') \cap (E_1 \cup \cdots \cup E_{i})|$,
by the last of the matroid axioms there exists
$e \in (\greedy_{M}(E, L') - \greedy_{M}(E, L)) \cap (E_1 \cup \cdots \cup E_{i}) $
such that $(\greedy_{M}(E, L) \cap (E_1 \cup \cdots \cup E_{i})) \cup \{e\}$
is independent.
Hence, the greedy algorithm w.r.t.\ ordering $L$
could have picked $e$ when considering the elements
in $E_i$ but did not, a contradiction.
\end{proof}

\begin{proof}[Proof of Theorem~\ref{main_theorem_StackMatroid}]
Define the set $\mathcal{C}$ of products for the assortment problem as follows:
$$\mathcal{C} \coloneqq B \times \{c_1,\dots,c_k\}.$$

Thus  $\mathcal{C}$ consists of all pairs of a blue element and a cost $c_i$ taken by some red element.
The revenue function $r: \mathcal{C} \rightarrow \mathbb{R}_{>0}$
for the assortment problem is defined by setting
$$r( (e, q) ) \coloneqq |B| \cdot q$$
for all $(e, q) \in \mathcal{C}$.

Next we define the system of choice probabilities $\mathcal{P}$.
To do so, it is convenient to first define an auxiliary matroid $M'$ with ground set
$R \cup \mathcal{C}$, where $X \subseteq R \cup \mathcal{C}$ is independent if only if
the following two conditions are satisfied:
\begin{enumerate}[(1)]
\item $|\{(e, q): q\in \{c_{1}, \dots, c_{k}\} \}| \leq 1$ for each element $e \in B$, and
\item $(R\cap X) \cup \{e \in B: (e, q) \in \mathcal{X} \textrm{ for some } q\in \{c_{1}, \dots, c_{k}\}\}$
is independent in $M$.
\end{enumerate}
We leave it to the reader to check that $M'$ is indeed a matroid.
In order to give some intuition about $M'$, we remark that in the case of the {\StackMST} problem,
where $M$ is the graphical matroid of the input graph, $M'$ is simply the graphical matroid of the graph
obtained by replacing each edge $e$ with $k$ parallel copies of $e$.

For simplicity, let us say that the {\DEF cost} of element $x \in R \cup \mathcal{C}$
is $c(x)$ if $x\in R$, and $q$ if $x=(e, q) \in \mathcal{C}$.
Let $L$ be a linear ordering of the elements in $R \cup \mathcal{C}$
that is consistent with their associated costs (elements appear in non-decreasing order of cost),
where priority is given to elements in $\mathcal{C}$ over those in $R$
(that is, if $(e, q) \in \mathcal{C}$, $f\in R$, and $c(f) = q$, then $(e, q) <_{L} f$).

We define $\mathcal{P}$ as follows.
For each  $S\subseteq \mathcal{C}$ and $(e,q)\in \mathcal{C}$, let

\begin{displaymath}
\mathcal{P}((e,q),S) \coloneqq \left\{ \begin{array}{ll}  \displaystyle \frac{1}{|B|} & \quad \textrm{if $(e, q)\in
\greedy_{M'}(R\cup S, L)$} \\[3ex]
0 & \quad \textrm{otherwise}\\
\end{array} \right.
\end{displaymath}
and
\begin{displaymath}
\mathcal{P}(0,S) \coloneqq 1 - \sum_{(e,q) \in S} \mathcal{P}((e,q),S).
\end{displaymath}

Let us prove that $\mathcal{P}$ is a regular discrete choice model.
Clearly  $\mathcal{P}(y, S) \geq 0$ for every $y\in \mathcal{C} \cup \{0\} $ and $S\subseteq \mathcal{C}$,
thus axiom~\eqref{discrete_ineq_1} is  satisfied.
If  $(e, q)\in \mathcal{C}$ and $S\subseteq \mathcal{C} \setminus \{(e,q)\}$  then
$\mathcal{P}((e, q), S) = 0$ since $(e, q) \notin \greedy_{M'}(S ,L)$ trivially.
Hence,  axiom~\eqref{discrete_eq_1} is satisfied.
Also, for each $S\subseteq \mathcal{C}$ we have
$\sum_{(e,q) \in S}\mathcal{P}((e,q), S)  \leq |\greedy_{M'}(R\cup S, L)| / |B| \leq  1$,
since $|\greedy_{M'}(R\cup S, L)| \leq |B|$.
This implies that axiom~\eqref{discrete_eq_2} holds.
Therefore, it only remains to check axiom~\eqref{axiom:regularity}, the regularity condition.
Let thus $S\subseteq S'\subseteq \mathcal{C}$, and let $y\in S \cup \{0\}$.
We want to show that $\mathcal{P}(y, S) \geq \mathcal{P}(y, S')$.

First suppose that $y=(e, q) \in S$.
If $\mathcal{P}((e,q), S') = 0$, then trivially $\mathcal{P}((e,q), S) \geq \mathcal{P}((e,q), S')$.
If $\mathcal{P}((e,q), S') = 1/|B|$, then $(e,q) \in \greedy_{M'}(R\cup S', L)$.
By Lemma~\ref{lem:greedy_matroid}, we also have $(e,q) \in \greedy_{M'}(R\cup S, L)$,
and thus $\mathcal{P}((e,q), S) = 1/|B|$.
Hence, $\mathcal{P}((e,q), S) \geq \mathcal{P}((e,q), S')$ holds in both cases.

Now assume that $y=0$ is the no-choice option.
We know from Lemma~\ref{lem:greedy_matroid} that
$(R \cup S) \cap \greedy_{M'}(R \cup S', L) \subseteq \greedy_{M'}(R \cup S, L)$.
Since $|\greedy_{M'}(R \cup S', L)| \geq |\greedy_{M'}(R \cup S, L)|$ by the same lemma,
and since $(R \cup S') - (R \cup S) = S' - S \subseteq  \mathcal{C}$,
we deduce that
$$|\greedy_{M'}(R \cup S', L) \cap \mathcal{C}| \geq |\greedy_{M'}(R \cup S, L) \cap \mathcal{C}|.$$
This implies that
$\sum_{(e,q)\in S'}\mathcal{P}((e, q), S') \geq \sum_{(e,q)\in S}\mathcal{P}((e, q), S)$,
and we conclude that
$$\mathcal{P}(0, S)
= 1 - \sum_{(e,q)\in S}\mathcal{P}((e, q), S)
\geq 1 - \sum_{(e,q)\in S'}\mathcal{P}((e, q), S')
= \mathcal{P}(0, S').
$$
Therefore, axiom~\eqref{axiom:regularity} is satisfied.

Next we prove that the maximum revenue achievable on each instance is the same.
Given a set $S\subseteq \mathcal{C}$ we define a corresponding price assignment
$p_S$ by setting
$$
p_S(e) \coloneqq \left\{ \begin{array}{ll}  \min\{q: (e, q) \in S\}  & \quad \textrm{if $\exists q$ s.t.\ $(e, q) \in S$} \\[1ex]
+ \infty & \quad \textrm{otherwise}\\
\end{array} \right.
$$
for each element $e \in B$.
The revenue resulting from choosing assortment $S$ can be expressed as follows:
\begin{dmath*}[compact]
\sum_{y \in S}\mathcal{P}(y,S) \cdot r(y)
=  |B| \sum_{(e,q) \in S}\mathcal{P}((e,q),S) \cdot q
=  \sum_{(e, q)\in \greedy_{M'}(R\cup S, L)}  q.
\end{dmath*}

In the {\StackMatroid} instance with price assignment $p_S$,
the customer buys element $e\in B$ if and only if the greedy algorithm
chooses element $e$ when considering matroid $M$ with some linear ordering
$L^*$ of its elements $R\cup B$ that is consistent with their costs/prices, where
priority is given to elements in $B$ in case of ties.
That is, letting $c'(e) \coloneqq c(e)$ if $e\in R$ and
$c'(e) \coloneqq p_S(e)$ if $e\in B$, the ordering
$L^*$ used by the customer satisfies:
\begin{dgroup}
\begin{dmath}
\label{eq:ordering1}
\text{if $c'(e) < c_S(f)$ then $e <_{L^*} f$ for every $e,f \in R \cup B$, and}
\end{dmath}
\begin{dmath}
\label{eq:ordering2}
\text{if $c'(e) = c_S(f)$ then $e <_{L^*} f$ for every $e \in B$ and $f\in R$.}
\end{dmath}
\end{dgroup}
There might be more than one linear ordering satisfying the above properties;
however thanks to Lemma~\ref{lem:greedy_welldefined} we know that all such orderings yield the
same revenue.
Since we are not interested in the particular elements bought by the customer but only
by the resulting revenue, for the purpose of the following analysis we may assume that $L^{*}$ `breaks ties' in the same that $L$ does, that is,
\begin{dgroup*}
\begin{dmath*}
{\forall e,f\in B: e  <_{L^*} f \Leftrightarrow (e, p_S(e)) <_L (f, p_S(f));}
\end{dmath*}
\begin{dmath*}
{\forall e,f\in R: e  <_{L^*} f \Leftrightarrow e <_L f.}
\end{dmath*}
\end{dgroup*}
Element $e\in B$ is bought by the customer
if and only if $e\in  \greedy_{M}(R\cup B, L^*)$, which by our assumption on $L^{*}$
is the same condition as $(e, p_S(e))\in \greedy_{M'}(R\cup S, L)$.
Furthermore, if $(e, q)\in \greedy_{M'}(R\cup S, L)$ for some $q$ then this
$q$ is unique and by definition of the price functin $p_S$ we have $p_S(e)=q$.
Hence, the revenue resulting from price function $p_S$ equals
\begin{dmath*}
\sum_{e\in  \greedy_{M}(R\cup B, L^*)} p_S(e)
= \sum_{(e, q)\in \greedy_{M'}(R\cup S, L)}  q,
\end{dmath*}
and is thus equal to the revenue given by assortment $S$.
This shows that the maximum revenue achievable on the {\StackMatroid} instance
is at least that achievable on the assortment problem we defined.
We now prove the converse statement, and hence that the two quantities are the same.

Let $p$ be some price function for the elements in $B$ and let
$L^*$ be a linear ordering of the elements in $R \cup B$
that a customer could use when running the greedy algorithm on $M$
under this price assignment.
As recalled above, $L^*$ is thus any linear ordering satisfying
\eqref{eq:ordering1} and \eqref{eq:ordering2},
where in this case $c'(\cdot)$ is modified by setting
$c'(e) \coloneqq p(e)$ if $e\in B$.

We may assume that all prices assigned by $p$ to elements in $B$
are in the set $\{c_1, \dots, c_k\} \cup \{+\infty\}$.
Indeed, if $c_{i-1} < p(e) < c_i$ for some $e\in B$ and $i\in \{1, \dots, k\}$
(where we let $c_0:=0$) then we can increase $p(e)$ to $c_i$
without changing $L^*$ being a valid ordering for these prices,
which can only improve the resulting revenue.
Similarly, if $p(e) > c_k$ then we may as well set $p(e) \coloneqq +\infty$
since element $e$ will never be bought by the customer,
as follows from the existence of a base of $M$ in $R$.
Since our goal is to bound from above the revenue resulting from price assignment $p$
by the optimal revenue achievable on the assortment problem, we may thus
iteratively  modify $p$ as described until it has the desired form.

Relying on the fact that $p$ takes values in  $\{c_1, \dots, c_k\} \cup \{+\infty\}$,
we define the following corresponding assortment $S \subseteq \mathcal{C}$:
\[
S \coloneqq \{(e, p(e)): e\in  \greedy_{M}(R\cup B, L^*) \}.
\]
The ordering $L^*$ of the elements in $R \cup B$
induces in a natural way an ordering $L$ of the elements in $R\cup S$.
The revenue given by assortment $S$ is then
\begin{dmath*}[compact]
\sum_{y \in S}\mathcal{P}(y,S) \cdot r(y)
=  |B| \sum_{(e,q) \in S}\mathcal{P}((e,q),S) \cdot q
=\sum_{(e,q)\in  \greedy_{M'}(R\cup S, L)} q
=\sum_{e\in  \greedy_{M}(R\cup B, L^*)} p(e)
\end{dmath*}
and is thus equal to the revenue resulting from price function $p$.
This shows that the optimal revenue achievable on the assortment problem is least
that achievable on the  {\StackMatroid} instance, as desired.
\end{proof}

Next we provide a proof of  Theorem~\ref{nesting_by_fare-order_theorem}.
In order to do so, we need to introduce some definitions and a lemma.

Given $\delta \in \R$ such that $r_k + \delta \geq 0$, we define $\mathcal{L}^*(\delta)$ to be the set of indices $\ell \in [k]$ with the property that, if we add $\delta$ to the revenue of all items, then choosing all items of revenue at least $r_{\ell} + \delta$ gives an optimal revenue-ordered assortment (w.r.t.\ the modified revenue function) for the usual assortment problem. (Note that whenever the revenue of an item becomes negative, the item will never be chosen by the revenue ordered assortment strategy.)
That is, $\mathcal{L}^*(\delta)$ is the set of indices $\ell \in [k]$  such that
$$ \sum_{x=1}^{j(\ell)}\mathcal{P}(x,\{1,\dots,j(\ell)\})(r(x) + \delta) = \max_{\ell' \in [k]}\sum_{x=1}^{j(\ell')}\mathcal{P}(x,\{1,\dots,j(\ell')\})(r(x) + \delta).$$

\begin{lemma} \label{lemma_multi_period}
Let $\delta_1, \delta_2 \in \mathbb{R}$ with $\delta_1 + r_k \geq 0$ and $\delta_1 \leq \delta_2$. Then, $\min \mathcal{L}^*(\delta_2) \leq \min \mathcal{L}^*(\delta_1)$.
\end{lemma}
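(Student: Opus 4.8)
The plan is to treat the quantity being maximised as a function of the additive perturbation $\delta$ and exploit a simple exchange/crossing argument. For each $\ell \in [k]$ define
\[
g_\ell(\delta) \coloneqq \sum_{x=1}^{j(\ell)}\mathcal{P}(x,\{1,\dots,j(\ell)\})\bigl(r(x) + \delta\bigr)
= a_\ell + b_\ell\,\delta,
\]
where $a_\ell \coloneqq \sum_{x=1}^{j(\ell)}\mathcal{P}(x,\{1,\dots,j(\ell)\})\,r(x)$ is the unperturbed revenue of the $\ell$-th revenue-ordered assortment, and $b_\ell \coloneqq \sum_{x=1}^{j(\ell)}\mathcal{P}(x,\{1,\dots,j(\ell)\})$ is the probability of making a purchase when that assortment is offered. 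Thus each $g_\ell$ is an affine function of $\delta$, and $\mathcal{L}^*(\delta)$ is precisely the set of indices attaining $\max_{\ell'} g_{\ell'}(\delta)$; the maximum is therefore a convex, piecewise-linear function of $\delta$, and the ``active'' index set changes in a controlled way as $\delta$ increases. (One small caveat: when $r(x)+\delta$ becomes negative the revenue-ordered strategy would drop item $x$; but since the statement only requires $\delta_1 + r_k \geq 0$, one should either argue that the clamped and unclamped versions of $g_\ell$ agree wherever it matters, or simply incorporate the clamping into $g_\ell$ and note it stays piecewise-linear and convex. I would handle this at the start by observing that replacing $r(x)+\delta$ by $\max\{r(x)+\delta,0\}$ only raises $g_\ell$, and that the set of $\ell$'s we care about is unaffected on the relevant range.)

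The key observation, which does the real work, is a \emph{monotonicity of slopes among near-ties}: by Lemma~\ref{lemma_property_regularDCM} the purchase probability $b_\ell = \sum_{x=1}^{j(\ell)}\mathcal{P}(x,\{1,\dots,j(\ell)\})$ is \emph{non-decreasing} as the offered set grows, i.e.\ $b_1 \geq b_2 \geq \cdots \geq b_k$ when indices are ordered so that $\ell=1$ corresponds to the largest set (all items of revenue $\geq r_1$) and larger $\ell$ to smaller sets. So the affine functions $g_\ell$ are ordered by slope: smaller index $\Rightarrow$ larger (or equal) slope. Now suppose for contradiction that $\min\mathcal{L}^*(\delta_2) > \min\mathcal{L}^*(\delta_1)$ for some $\delta_1 \leq \delta_2$. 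Let $\ell_1 \coloneqq \min\mathcal{L}^*(\delta_1)$ and $\ell_2 \coloneqq \min\mathcal{L}^*(\delta_2)$, so $\ell_1 < \ell_2$ and hence $b_{\ell_1} \geq b_{\ell_2}$. Since $\ell_1$ is optimal at $\delta_1$ we have $g_{\ell_1}(\delta_1) \geq g_{\ell_2}(\delta_1)$, i.e.\ $a_{\ell_1} + b_{\ell_1}\delta_1 \geq a_{\ell_2} + b_{\ell_2}\delta_1$. Combining with $\delta_2 \geq \delta_1$ and $b_{\ell_1}-b_{\ell_2}\geq 0$ gives
\[
g_{\ell_1}(\delta_2) - g_{\ell_2}(\delta_2)
= \bigl(a_{\ell_1}-a_{\ell_2}\bigr) + \bigl(b_{\ell_1}-b_{\ell_2}\bigr)\delta_2
\geq \bigl(a_{\ell_1}-a_{\ell_2}\bigr) + \bigl(b_{\ell_1}-b_{\ell_2}\bigr)\delta_1
\geq 0,
\]
so $g_{\ell_1}(\delta_2) \geq g_{\ell_2}(\delta_2)$, meaning $\ell_1 \in \mathcal{L}^*(\delta_2)$. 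But $\ell_1 < \ell_2 = \min\mathcal{L}^*(\delta_2)$, a contradiction. Hence $\min\mathcal{L}^*(\delta_2) \leq \min\mathcal{L}^*(\delta_1)$.

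The main obstacle I anticipate is not the exchange inequality above — that is short — but rather pinning down the indexing and the clamping convention cleanly: one must be careful that $\mathcal{L}^*(\delta)$ as defined in the text (``choosing all items of revenue at least $r_\ell + \delta$ gives an optimal revenue-ordered assortment'') really is the $\argmax$ of the affine functions $g_\ell$ restricted to the valid range of $\ell$, including the degenerate cases where some items have gone negative (and are therefore effectively not offered). Once the bookkeeping is set up so that $g_\ell$ is genuinely affine (or convex piecewise-linear) in $\delta$ with slopes non-increasing in $\ell$, the crossing argument is immediate and essentially forced. I would therefore spend most of the write-up on that setup and dispatch the inequality in a couple of lines as above.
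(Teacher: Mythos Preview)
Your proof is correct and essentially identical to the paper's: both argue by contradiction assuming $\ell_2 > \ell_1$, use optimality of $\ell_1$ at $\delta_1$ together with the slope monotonicity $b_{\ell_1} \geq b_{\ell_2}$ coming from Lemma~\ref{lemma_property_regularDCM}, and conclude $g_{\ell_1}(\delta_2) \geq g_{\ell_2}(\delta_2)$ (the paper just writes out the same chain of inequalities without the affine-function notation). Your clamping worry is unnecessary here, since $\mathcal{L}^*(\delta)$ is defined directly via the unclamped formula $\sum_{x=1}^{j(\ell)}\mathcal{P}(x,\{1,\dots,j(\ell)\})(r(x)+\delta)$, so each $g_\ell$ is genuinely affine in $\delta$.
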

\begin{proof}

Let $\ell_1:= \min\mathcal{L}^*(\delta_1)$, $\ell_2 := \min \mathcal{L}^*(\delta_2)$ and $\Delta := \delta_2 - \delta_1 \geq 0$.
For the purpose of contradiction, suppose that $\ell_2 > \ell_1 $. This means that $ j(\ell_2) < j(\ell_1)$ since products in $\mathcal{C}$ are enumerated in decreasing order of revenue.

We will show that if we add $\delta_2$ to the revenue of all items (w.r.t.\ revenue function $r$), the expected revenue resulting from the assortment $\{1,\dots,j(\ell_2)\}$ is at most that of the assortment $\{1,\dots, j(\ell_1)\}$. This implies $\ell_2 \leq \ell_1$, a contradiction.

We have
\begin{align*}
& \sum_{x=1}^{j(\ell_2)} \mathcal{P}(x, \{1,\dots,j(\ell_2)\}) (r(x)+ \delta_2) \\
 & \quad \quad \quad = \sum_{x=1}^{j(\ell_2)} \mathcal{P}(x, \{1,\dots,j(\ell_2)\}) (r(x)  + \delta_1) + \Delta \sum_{x=1}^{j(\ell_2)} \mathcal{P}(x, \{1,\dots,j(\ell_2)\}) \\
& \quad \quad \quad \leq   \sum_{x=1}^{j(\ell_1)} \mathcal{P}(x, \{1,\dots,j(\ell_1)\}) (r(x) + \delta_1) + \Delta \sum_{x=1}^{j(\ell_2)} \mathcal{P}(x, \{1,\dots,j(\ell_2)\}) \\
 & \quad \quad \quad \leq   \sum_{x=1}^{j(\ell_1)} \mathcal{P}(x, \{1,\dots,j(\ell_1)\}) (r(x)  + \delta_1) + \Delta \sum_{x=1}^{j(\ell_1)} \mathcal{P}(x, \{1,\dots,j(\ell_1)\}) \\
& \quad \quad \quad =   \sum_{x=1}^{j(\ell_1)} \mathcal{P}(x, \{1,\dots,j(\ell_1)\}) (r(x)+ \delta_2)
\end{align*}
The first inequality holds because $\{1,\dots,j(\ell_1)\}$, by definition, yields the highest expected revenue among all revenue-ordered assortments when $\delta_1$ is added to the revenue of each item.
The second inequality follows from Lemma~\ref{lemma_property_regularDCM} and the assumption that $\ell_2 > \ell_1$.
This concludes the proof.
\end{proof}

\begin{proof}[Proof of~Theorem \ref{nesting_by_fare-order_theorem}]
We begin by proving that $\ell^*_t(q) \leq \ell^*_{t}(q-1)$ if $q \geq 2$.
For $t' \geq 0$ and $q' \geq 1$ let $$\Delta \mathcal{J}_{t'}(q') := \mathcal{J}_{t'}(q') - \mathcal{J}_{t'}(q'-1),$$ that is, $\Delta \mathcal{J}_{t'}(q')$ is the marginal value of the capacity when there are $t'$ time periods remaining.
(Let us point out that $\Delta \mathcal{J}_{t'}(q') = \mathcal{J}_{t'}(q') = \mathcal{J}_{t'}(q'-1)=0$ if $t'=0$.)

We can express $\mathcal{J}_t(q)$ as follows:
\begin{eqnarray}
\mathcal{J}_t(q) &=& \max_{\ell \in [k]} \left\{ \sum_{x=1}^{j(\ell)} \mathcal{P}(x,\{1,\dots,j(\ell)\})(r(x) + \mathcal{J}_{t-1}(q-1)) + \mathcal{P}(0,\{1,\dots,j(\ell)\})  \mathcal{J}_{t-1}(q) \right\} \nonumber \\
& = & \max_{\ell \in [k]} \left\{ \sum_{x=1}^{j(\ell)} \mathcal{P}(x,\{1,\dots,j(\ell)\}) (r(x) - \Delta \mathcal{J}_{t-1}(q)) \} + \mathcal{J}_{t-1}(q) \right\} \label{nesting_by_fare-order_theorem_proof_label_1}
\end{eqnarray}

Observe that
\begin{equation} \label{eq_th_5.1_0}
\ell^*_t(q-1) = \min \mathcal{L}^*(- \Delta \mathcal{J}_{t-1}(q-1)).
\end{equation}

In other words, $\ell^*_t(q-1)$ is the largest revenue ordered assortment which is optimal when $\Delta \mathcal{J}_{t-1}(q-1)$ is subtracted from the revenue of each item. Since the most revenue one could obtain from an extra unit of capacity is the revenue of the most expensive product (i.e.\ $r_k$), we have that $r_k - \Delta \mathcal{J}_{t-1}(q-1) \geq 0$ as needed.

Suppose that we subtract $\Delta \mathcal{J}_{t-1}(q)$ to the original revenue of each item. By Equation (\ref{nesting_by_fare-order_theorem_proof_label_1}) we know that

\begin{equation}\label{eq_th_5.1_1}
\ell^*_t(q) = \min \mathcal{L}^*(-\Delta \mathcal{J}_{t-1}(q))
\end{equation}

Again, since the most revenue one could obtain from an extra unit of capacity is the revenue of the most expensive product (i.e. $r_k$), we have that $r_k - \Delta \mathcal{J}_{t-1}(q) \geq 0$ as desired.

\citet[Lemma~4]{talluri2004revenue} proved that $\Delta \mathcal{J}_{t-1}(q-1)  \geq \Delta \mathcal{J}_{t-1}(q)$ always holds, regardless of the discrete choice model under consideration (that is, the three axioms \eqref{discrete_ineq_1}, \eqref{discrete_eq_1} and \eqref{discrete_eq_2} are enough for this property to hold). Therefore, by Lemma \ref{lemma_multi_period}, we have that
\begin{equation}\label{eq_th_5.1_2}
\min \mathcal{L}^*(-\Delta \mathcal{J}_{t-1}(q-1)) \geq \min \mathcal{L}^*(-\Delta \mathcal{J}_{t-1}(q))
\end{equation}

Combining  (\ref{eq_th_5.1_1}), (\ref{eq_th_5.1_2}) and (\ref{eq_th_5.1_0}) we have that
\begin{eqnarray}
\ell^*_t(q) &= \min \mathcal{L}^*(-\Delta \mathcal{J}_{t-1}(q)) \leq  \min \mathcal{L}^*(-\Delta \mathcal{J}_{t-1}(q-1)) \nonumber = \ell^*_t(q-1)
\end{eqnarray}
as desired.

We now proceed to prove that $\ell^*_t(q)$ is non-increasing in $t$, which carries on in a similar way.

Suppose that we subtract $\Delta \mathcal{J}_{t-2}(q)$ to the original revenue of each item. By equation (\ref{nesting_by_fare-order_theorem_proof_label_1}) we know that
\begin{equation}\label{eq_th_5.1_4}
\ell^*_{t-1}(q) = \min \mathcal{L}^*(- \Delta \mathcal{J}_{t-2}(q))
\end{equation}

Again, since the most revenue one could obtain from an extra unit of capacity is the revenue of the most expensive product (i.e. $r_k$), we have that $r_k - \Delta \mathcal{J}_{t-2}(q) \geq 0$ as desired.

\citet[Lemma~5]{talluri2004revenue} proved that $\Delta \mathcal{J}_{t-1}(q)  \geq \Delta \mathcal{J}_{t-2}(q)$ always holds, regardless of the discrete choice model under consideration. Therefore, by Lemma \ref{lemma_multi_period}, we have that
\begin{equation}\label{eq_th_5.1_5}
\min \mathcal{L}^*(-\Delta \mathcal{J}_{t-2}(q)) \leq \min \mathcal{L}^*(-\Delta \mathcal{J}_{t-1}(q))
\end{equation}

Combining  (\ref{eq_th_5.1_4}), (\ref{eq_th_5.1_5}) and (\ref{eq_th_5.1_1}) we have that
\begin{eqnarray}
\ell^*_{t-1}(q) = \min \mathcal{L}^*(- \Delta \mathcal{J}_{t-2}(q)) \leq  \min \mathcal{L}^*(-\Delta \mathcal{J}_{t-1}(q))  \nonumber = \ell^*_t(q)
\end{eqnarray}
as desired.
\end{proof}

\end{document}